%
%
%
%
%
\documentclass[global]{svjour}
%
\usepackage{latexsym}
\usepackage{amsmath}
\usepackage{amsfonts,amssymb}
\usepackage{graphics}
\def\bra#1{\langle #1 |}
\def\ket#1{| #1 \rangle}
\newcommand{\tr}{\mathop{\text{Tr}}\nolimits}
\renewcommand{\Re}{\mathop{\mathrm{Re}}\nolimits}


%
\journalname{}
\begin{document}
\title{Multipartite entanglement in qubit systems}
\author{Paolo Facchi\inst{1,2} 
}                     
%
%
\institute{Dipartimento di Matematica, Universit\`a di Bari,
        I-70125  Bari, Italy \and INFN, Sezione di Bari, I-70126 Bari, Italy}
%
\date{\today}
%
\maketitle
\begin{abstract}
We introduce a potential of multipartite entanglement for a system of $n$ qubits,
as the average over all balanced bipartitions of a bipartite entanglement measure, the purity.
We study in detail its expression and look for its minimizers, the maximally multipartite entangled states. They  have a bipartite entanglement that does not depend on the
bipartition and is maximal for all possible bipartitions.  We investigate their structure and consider several examples for  small $n$.
\end{abstract}

\keywords{Entanglement, quantum nonlocality,  quantum information,
Hilbert spaces}
\section{Introduction}
\label{intro}

Entanglement is one of the most striking features of quantum phenomena
\cite{1}. It plays very important roles in quantum information processing such as quantum
computation \cite{2}, quantum teleportation \cite{3} (for discussions on experimental realizations see
\cite{4.1,4,4.2,4.3}), dense coding \cite{5} and quantum cryptographic schemes \cite{6.1,6,6.2}.
Nevertheless, the quantification of multipartite entanglement is no simple matter.

Entanglement is intimately related to the very mathematical structure of quantum mechanics and complex Hilbert spaces. In particular it is a straightforward consequence of linearity (superposition principle) in tensor product Hilbert spaces (composite quantum systems). 

Consider a quantum system composed of two parts (e.g. two particles):  part $A$, whose Hilbert space is $\mathcal{H}_A$, and part $B$, whose Hilbert space is $\mathcal{H}_{B}$. According to quantum mechanics, the composite system lives in the tensor product Hilbert space $\mathcal{H}=\mathcal{H}_A\otimes\mathcal{H}_{B}$. The most familiar example is that of  two spinless particles, whose Hilbert space is
$L^2(\mathbb{R}^3)\otimes L^2(\mathbb{R}^3)\cong L^2(\mathbb{R}^6).$
The linearity of $\mathcal{H}$ implies that the  states $\ket{\psi}$  of the composite system $\mathcal{H}$ are linear combinations of product states, namely,
$$\ket{\psi}=\sum_{ij} z_{ij} \ket{\varphi_i}\otimes \ket{\chi_j},$$
with $\ket{\varphi_i}\in\mathcal{H}_A$ and
 $\ket{\chi_j}\in\mathcal{H}_{B}$. This entails interference among  probability amplitudes of two-particle states, that is the analogous of  the simpler case of one-particle interference. For example, the probability amplitude of  having both particle $A$ in state $\ket{\varphi_1}$ and particle $B$ in state $\ket{\chi_1}$ interferes with the probability amplitude of having both particle $A$ in state $\ket{\varphi_2}$ and particle $B$ in state $\ket{\chi_2}$. As a consequence there exist correlations of quantum nature --entanglement-- between quantum subsystems. These correlations are stronger than the classical ones, in the sense that they violate a class of inequalities, named after Bell, that must be satisfied by all classical correlations \cite{1}.

 The most striking violation of Bell's inequalities  is given by a particular class of states: maximally entangled states.  The simplest example is that of two spin-$1/2$ systems (or qubits), whose Hilbert space is $\mathbb{C}^2\otimes\mathbb{C}^2$, in the  singlet state
 $$\ket{\Phi}=(\ket{0}\otimes \ket{1} - \ket{1}\otimes \ket{0})/\sqrt{2}, $$
 where $\{\ket{0},\ket{1}\}$ is the natural basis of $\mathbb{C}^2$, representing spin up or down in a given direction.
 The expectation value of any local observable $O$ of the first spin is given by
 $$\bra{\Phi} O\otimes 1\ket{\Phi}=  \frac{1}{2} \bra{0}O \ket{0} + \frac{1}{2} \bra{1}O \ket{1}= \tr (\rho_A O),$$
 and thus is an incoherent average corresponding to a completely mixed reduced density matrix of the first spin $\rho_A= 1/2$. Analogously for the second spin.
Therefore, spin measurements  in a given, arbitrary, direction over an ensemble of pairs prepared in a singlet state will result in a completely random sequence of 0 and 1. On the other hand, the results of joined local measurements exhibit strong correlations, due to the fact that the total spin is 0: the two spins are always found pointing in opposite directions. The two random sequences are exactly complementary.

Maximally entangled states are characterized by the property, just shown for the two-qubit singlet state,
 that to a perfect knowledge of the state of the composite system corresponds a complete ignorance of the states of its two parts.
More precisely, although the composite system is in a well determined pure state, its two parts are in completely mixed states. See Corollary \ref{cor:mbe}.
 Therefore, all information is totally shared by the two parts.
Note that this situation is strongly at variance with the classical case, in which a complete knowledge of the total system is equivalent to a complete knowledge of both its parts. In quantum mechanics this is only a necessary condition.

In general, the degree of bipartite entanglement of  a quantum system can range from a maximum, when its two parts are in  completely mixed states, to a minimum, when its two parts are in pure states, and thus a complete knowledge of both subsystems implies a complete knowledge of the entire system, as in the classical case.
This is the case of separable, or unentangled, states $\ket{\psi}=\ket{\varphi}\otimes\ket{\chi}$, that have no correlations between the two parts and, thus, no shared information.

The degree of bipartite entanglement \cite{woot} of a composed quantum system
 can be consistently quantified, as in Definition \ref{def:puritydef}, in terms of the purity of the reduced density matrix of one of the two subsystems (purity can be proven to be the same for both, see Lemma \ref{lem:reduced}). A lower value of purity will correspond to a larger value of entanglement.

On the other hand, there is no unique way of quantifying
multipartite entanglement \cite{entanglement1}, that is entanglement among $n$ given parties of a given quantum system.
 Different definitions often do not
agree with each other, because they adopt different strategies,
focus on different aspects and capture different features of this
quantum phenomenon \cite{druss,multipart,MMSZ,mw,multipart1}.  There is a profound
reason behind this manifest disadvantage: the number of
 real numbers, i.e.\ the invariants under local unitary transformations \cite{Albeverio1,Albeverio3,Albeverio2},
   needed to quantify multipartite entanglement
grows exponentially with the size of the system, so that the definition of appropriate entanglement measures, able to summarize the most salient global features of
entanglement, can be very difficult.

A natural generalization of the bipartite case is to quantify the entanglement among $n$ parties by considering the average purity over  subsystems \cite{MMES,entrandom}. In this paper we will follow this strategy and, in particular, we will consider systems of $n$ qubits.

After introducing  notation and discussing some results about bipartite entanglement in Section \ref{sec:bipartite}, we move to multipartite entanglement and study the properties of the potential of multipartite entanglement (i.e.\ the average purity over balanced bipartitions), Definition \ref{def:MMES}, and of its minimizers, i.e. quantum states with the maximal degree of multipartite entanglement. In the ideal situation, the bipartite entanglement of such states is not only maximal, but also does not depend on the way one decides to bipartite the total system into two subsystems. See Definition \ref{def:perfect}.
Our approach is  based on the action of the permutation group on the Fourier coefficients of the quantum state and thus is of combinatoric nature.

The potential of multipartite entanglement, Eq. (\ref{eq:pimeDelta}) of Theorem \ref{th:piDelta}, is a quartic Hamiltonian
\begin{equation*}
\pi_{\mathrm{ME}}(z)
=  \sum_{k,k',l,l'} \Delta(k, k'; l, l';[n/2])\, z_{k}\, z_{k'}\,
\bar{z}_{l}\, \bar{z}_{l'}\, ,
\end{equation*}
where  
$z=(z_k)$, with $k=(k_1,\cdots,k_n) \in \{0,1\}^n$ is the vector of the 
Fourier coefficients of the state of a system of $n$ qubit in the computational basis
\begin{equation*}
|\psi\rangle=\sum_k z_k |k_1\rangle \otimes |k_2\rangle \otimes \cdots |k_n\rangle.
\end{equation*}
One of the aims of this paper is to investigate the complex structure of the   long-range coupling function $\Delta$, that appears in the above expression. This is accomplished in Theorems \ref{th:Delta},  \ref{th:pme1} and  \ref{th:pime4}, and in Corollary \ref{th:pme2}. In particular, a measure of the complexity of the potential of multipartite entanglement is given by the number of its nontrivial interfering terms,
that scales  like $2^{n-3} 3^n$ (see Theorem \ref{th:pme3e} and the following remark).

Sections \ref{sec:perfect MMES} and \ref{sec:uniform MMES} will then be devoted to investigate
 maximally multipartite entangled states (MMES) \cite{MMES}, i.e.\  the minimizers of the potential of multipartite entanglement. In particular, the structure of  perfect MMES, i.e. minimizers that are maximally entangled with respect to any bipartition, is analyzed in Section \ref{sec:perfect MMES}: by making use of a probabilistic approach, Theorem \ref{th:populationMMES} gives a complete characterization of their population probability vectors $(|z_k|^2)$, while Theorem \ref{th:phasesperfectMMES} exhibits the equations that must be satisfied by their phases $(\zeta_k)=(z_k/|z_k|)$. The number of  equations quickly overcomes the number of variables, since their ratio scales  as $2^{n+1}/\sqrt{n}$ with the number  $n$ of qubits. See Theorem 
\ref{th:eqs vs vars} and following remark. Therefore, for large systems it becomes more and more difficult to have a perfect MMES solution, unless symmetries subtly conspire to reduce the number of independent equations. In fact, the existence of perfect MMES for $n\leq 6$,  $n\neq 4$,  will be proven by explicit construction in Section \ref{sec:uniform MMES}, while it is known  \cite{scott,rains1,rains3,rains2}  that for $n\geq 8$ they do not exist. The case $n=7$ remains open, although there is numerical evidence that no perfect MMES exist \cite{MMES}.
In conclusion, apart from some special small values of $n$, not all bipartitions can have minimal purity (maximal entanglement) and the requirement that a given bipartition be in a maximally entangled state collides with the same requirement for a different bipartition. Thus, the bipartitions of a general MMES  are in a frustrated configuration, and this makes the whole subject richer and very interesting.  

Since, according to the structure theorem \ref{th:populationMMES}, a perfect MMES can have a uniform population probability vector $(|z_k|^2)=(1/N,\cdots,1/N)$  with $N=2^n$, in Section \ref{sec:uniform MMES} we focus on this class of uniform states, and restrict our quest to it.
 We will explicitly construct perfect MMES with uniform population, and will easily characterize the fully factorized states, i.e.\ the maximizers of the potential of multipartite entanglement that have uniform probability vectors.  By pushing even further our simplifying assumptions, we will explicitly show that, at least for $n\leq 6$, the potential admits minimizers and maximizers in the very restricted class of uniform states with real phases, $(z_k)=(\zeta_k/\sqrt{N})$,  with  $\zeta_k \in \{+1,-1\}$. This states can be naturally mapped onto the set of binary sequences  of length  $N=2^n$, and the potential of multipartite entanglement becomes a quartic Hamiltonian on binary sequences (or classical spins). It is then quite natural to investigate whether there is any relation between the minimizing sequences of $\pi_{\mathrm{ME}}$ and the low correlation sequences that minimize similar long-range Hamiltonians studied in \cite{MPR1,MPR2,Borsari1,Borsari2}, which quantify all possible correlations in a binary string.
However, we will leave this problem for a future publication.

A final remark is in order. The study of the minimizers of $\pi_{\mathrm{ME}}(z)$ can be embedded in a statistical mechanical framework \cite{statmec}. Let us consider the partition function of a system with Hamiltonian $\pi_{\mathrm{ME}}(z)$ at a fictitious temperature $\beta^{-1}$,
\begin{equation*}
Z_N=\int  \exp\big(-\beta\, \pi_{\mathrm{ME}}(z)\big)\; \mathrm{d}\mu(z),
\end{equation*}
where $\mu$ is the uniform measure (of typical states) on the hypersphere $\{z\in \mathbb{C}^N \, | \, \sum_k |z_k|^2=1\}$ induced by the Haar measure on $U(N)$.
The value of the free energy  $F_N(\beta)=-\beta^{-1} \ln Z_N$, will become that of the minimum of the Hamiltonian $\pi_{\mathrm{ME}}(z)$ when the temperature tends to zero, that is $\beta\to\infty$, and only those
configurations that minimize the Hamiltonian survive, namely the maximally multipartite entangled states.
 In general $\beta$, as a Lagrange multiplier, fixes the average value of  entanglement, larger values of $\beta$ corresponding to a higher multipartite entanglement. In particular,
for $\beta\to0$ one is looking at the typical states.  Remarkably, there is a physically appealing interpretation
even for negative temperatures: for $\beta\to -\infty$, those
configurations are selected that maximize the Hamiltonian, that is
fully factorized states.

This approach has proven to be very useful in the (much simpler) case of bipartite entanglement, when the potential of multipartite entanglement reduces to the purity $\pi_A$ of one of the two component subsystems, and in the thermodynamical  limit $N\to\infty$,  
the existence of two phase transitions, characterized by different spectra of the reduced density matrices, has been shown \cite{matrix}. 

In order to investigate the statistical mechanics of the richer and more complex case of multipartite entanglement, and possibly to unveil its phase transitions, it is necessary to study in detail the structure of the potential of multipartite entanglement $\pi_{\mathrm{ME}}$ and of the highly entangled states that give rise to its low energy landscape, the MMES.  
This paper is completely devoted to such a study.

\section{Bipartite entanglement}
\label{sec:bipartite}

In this section we will set up the notation and we will prove some results about bipartite entanglement that will be used in the following. We will show how the entanglement of a bipartite system in a pure state is related to the non-vanishing eigenvalues of the reduced density matrix of one of its parts. In particular, in Corollary \ref{cor:mbe} we will show that in an unentangled, separable, state of a composed system the reduced density matrices of its two parts are pure, i.e.\ are 1-dimensional projections and thus have only one non-vanishing eigenvalue, that equals 1. On the other hand, a bipartite system is in a maximally entangled state if and only if the reduced density matrix of its smaller part is completely mixed, i.e.\ is proportional to the identity operator and all its eigenvalues are equal and different from 0. 

Therefore, as a measure of bipartite entanglement one can use the purity, i.e.\ the sum of the squared eigenvalues, of the reduced density matrix of the smaller party.  We will do this in Definition \ref{def:puritydef}. 
One can show that purity ranges in a compact interval, its minimum corresponding to maximally bipartite entangled states and its maximum to the bipartite separable ones. This simple result, which is the content of Lemma \ref{lm:purityconstraint}, together with the explicit expression of the purity as a function of the Fourier coefficients  of the state, as given in Theorem \ref{th:rhoA piA} and its corollary, will play a crucial role in the following.

Let us start with some basic definitions.

\begin{definition}[Qubit]
A \emph{qubit} (or \emph{spin}) is a quantum system with a two-dimensional Hilbert space $\mathfrak{h} \cong\mathbb{C}^2$.
The \emph{computational}  \emph{basis} $\{\ket{0}, \ket{1}\}$ is a privileged orthonormal basis.
\end{definition}
\begin{definition}[System of qubits]
\label{def:system}
A system $S=\{1,2,\dots, n\}$ of $n$ qubits is a quantum system with a $2^n$ dimensional Hilbert space
$\mathcal{H}_S= \bigotimes_{i\in S} \mathfrak{h}_i$, with $\mathfrak{h}_i\cong\mathbb{C}^2$.  Its
\emph{pure states} are the normalized vectors $\ket{\psi}\in\mathcal{H}_S$ with $\bra{\psi}\psi\rangle=1$, and can be expressed in the computational bases as
\begin{equation}
|\psi\rangle = \sum_{k\in X^n} z_k |k\rangle , \quad
z_k\in \mathbb{C}, \quad \sum_{k\in X^n} {|z_k|}^2 =1,
\label{eq:genrandomx}
\end{equation}
where $k=(k_i)_{i\in S}=(k_1,k_2,\dots,k_n)$, with $k_i\in X=\{0,1\}$, and
\begin{equation}
\label{eq:ki}
\ket{k}=\ket{k}_S=\bigotimes_{i\in S} \ket{k_i}_i, \qquad
\ket{k_i}_i \in \mathfrak{h}_i.
\end{equation}
\end{definition}
\begin{definition}[Bipartition]
A \emph{bipartition} of the system $S$ is a pair $(A,\bar{A})$, with $1\leq n_A \leq n_{\bar{A}}$,
where $A \subset S$, $\bar{A}=S\backslash A$ (i.e.\ $S=A+\bar{A}$) and $n_A=|A|$, the cardinality of $A$.
The bipartition is said to be \emph{balanced} if $A$ is maximal, that is $n_A=\left[n/2\right]$ (and $n_{\bar{A}}=\left[(n+1)/2\right]$), with $[x]=\;$integer part of $x$.
\end{definition}
\begin{remark}
There is a one to one correspondence among bipartitions and nonempty subsets of $S$ of
dimension not exceeding $n/2$. Given a bipartition $(A,\bar{A})$, the total Hilbert space is accordingly partitioned
into $\mathcal{H}_S=\mathcal{H}_A\otimes\mathcal{H}_{\bar{A}}$,
 where $\mathcal{H}_A= \bigotimes_{i\in A} \mathfrak{h}_i$,  with $N_A = \dim\mathcal{H}_A=2^{n_A}$, is the Hilbert space of the ensemble $A$ of $n_A$ qubits.
\end{remark}
\begin{definition}[Entanglement]
A state $\ket{\psi}\in\mathcal{H}_S$ is said to be \emph{separable} with respect to the bipartition $(A,\bar{A})$ if it can be expressed as a tensor product $\ket{\psi}=\ket{\phi}_A \otimes \ket{\chi}_{\bar{A}}$ for some
$\ket{\phi}_A\in\mathcal{H}_A$ and $\ket{\chi}_{\bar{A}}\in\mathcal{H}_{\bar{A}}$.
A state that is not separable is called \emph{entangled}.
\end{definition}
The following lemma is a powerful tool in the study of entanglement.
\begin{lemma}[Schmidt decomposition]
\label{schmidt}
Given a bipartition $(A,\bar{A})$, every state $\ket{\psi} \in\mathcal{H}_S$ can be written in the form
\begin{equation}
\label{eq:Schmidt decomposition}
\ket{\psi}=\sum_{k\in Y } \sqrt{\lambda_k}  \ket{u_k}_A\otimes\ket{v_k}_{\bar{A}},
\end{equation}
with $\lambda_k > 0$, $\sum \lambda_k=1$, $Y \subset X^{n_A}$, and where
 $\{\ket{u_k}_A\}\subset\mathcal{H}_A$, $\{\ket{v_k}_{\bar{A}}\}\subset\mathcal{H}_ {\bar{A}} $ are orthonormal sets. The set of \emph{Schmidt coefficients} $\{\lambda_k\}$ is unique.
\end{lemma}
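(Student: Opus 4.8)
The plan is to prove the Schmidt decomposition by diagonalizing the reduced density matrix of the smaller party $A$. First I would fix an arbitrary bipartition $(A,\bar A)$ and expand $\ket{\psi}$ in the computational bases of $\mathcal H_A$ and $\mathcal H_{\bar A}$, writing $\ket{\psi}=\sum_{i\in X^{n_A}}\sum_{j\in X^{n_{\bar A}}} c_{ij}\,\ket{i}_A\otimes\ket{j}_{\bar A}$, so that $C=(c_{ij})$ is an $N_A\times N_{\bar A}$ matrix with $\sum_{ij}|c_{ij}|^2=1$. The reduced density matrix $\rho_A=\tr_{\bar A}\ket{\psi}\bra{\psi}$ then has matrix elements $(\rho_A)_{ii'}=\sum_j c_{ij}\bar c_{i'j}$, i.e.\ $\rho_A=CC^\dagger$, which is Hermitian and positive semidefinite with unit trace. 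By the spectral theorem there is an orthonormal basis $\{\ket{u_k}_A\}$ of $\mathcal H_A$ of eigenvectors of $\rho_A$ with eigenvalues $\lambda_k\geq 0$; collect the indices $k\in Y\subset X^{n_A}$ with $\lambda_k>0$, and note $\sum_{k\in Y}\lambda_k=\tr\rho_A=1$.

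Next I would define, for each $k\in Y$, the vector $\ket{v_k}_{\bar A}=\lambda_k^{-1/2}\,{}_A\bra{u_k}\psi\rangle\in\mathcal H_{\bar A}$, obtained by partially contracting $\ket{\psi}$ against $\ket{u_k}_A$. A direct computation shows that $\langle v_k|v_{k'}\rangle = (\lambda_k\lambda_{k'})^{-1/2}\,{}_A\bra{u_k}\rho_A\ket{u_{k'}}_A = (\lambda_k\lambda_{k'})^{-1/2}\lambda_k\delta_{kk'}=\delta_{kk'}$, so $\{\ket{v_k}_{\bar A}\}_{k\in Y}$ is an orthonormal set. Then I would expand $\ket{\psi}$ in the $\ket{u_k}_A$ basis of $\mathcal H_A$: $\ket{\psi}=\sum_k \ket{u_k}_A\otimes\big({}_A\bra{u_k}\psi\rangle\big)=\sum_{k\in Y}\sqrt{\lambda_k}\,\ket{u_k}_A\otimes\ket{v_k}_{\bar A}$, where the sum collapses to $Y$ because the contraction against an eigenvector with $\lambda_k=0$ gives the zero vector (its norm squared is $\lambda_k$). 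This establishes the decomposition (\ref{eq:Schmidt decomposition}) with all the stated properties except uniqueness.

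For uniqueness of the set $\{\lambda_k\}$, I would argue that any representation of the form (\ref{eq:Schmidt decomposition}) with orthonormal $\{\ket{u_k}_A\}$, $\{\ket{v_k}_{\bar A}\}$ forces, upon taking the partial trace over $\bar A$, the identity $\rho_A=\sum_{k\in Y}\lambda_k\,\ket{u_k}\bra{u_k}_A$ (orthonormality of the $\ket{v_k}$ kills the cross terms). Hence the $\lambda_k$ are precisely the nonzero eigenvalues of $\rho_A$, counted with multiplicity, which are intrinsic to $\ket{\psi}$ and the bipartition; so the set (or multiset) $\{\lambda_k\}$ is uniquely determined. One could equally note the symmetric statement that the same $\lambda_k$ arise as the nonzero eigenvalues of $\rho_{\bar A}=C^\dagger C$, which anticipates Lemma~\ref{lem:reduced}.

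The argument is essentially routine linear algebra, so there is no serious obstacle; the only point requiring a little care is the bookkeeping that restricts the index set to $Y=\{k:\lambda_k>0\}$ and the verification that contracting $\ket{\psi}$ against a zero-eigenvalue eigenvector indeed yields the null vector of $\mathcal H_{\bar A}$ — this is what guarantees the strict positivity $\lambda_k>0$ in the statement rather than mere nonnegativity. A secondary subtlety is that when some $\lambda_k$ are degenerate the vectors $\ket{u_k}_A$ (and correspondingly $\ket{v_k}_{\bar A}$) are not unique, which is why only the Schmidt \emph{coefficients}, and not the Schmidt vectors, are asserted to be unique.
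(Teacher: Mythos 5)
Your proof is correct, but it takes a different route from the paper. The paper's proof treats the coefficient matrix $t=(t_{lm})$ as an operator from $\mathbb{C}^{N_{\bar A}}$ to $\mathbb{C}^{N_A}$ and invokes the singular value decomposition as a black box (citing a linear algebra text), reading off the Schmidt vectors directly from the left and right singular vectors; uniqueness is left implicit in the uniqueness of singular values. You instead derive the decomposition from first principles by applying the spectral theorem to $\rho_A=CC^\dagger$, constructing $\ket{v_k}_{\bar A}$ by partial contraction against the eigenvectors $\ket{u_k}_A$, and checking orthonormality by hand --- which is of course just a proof of the SVD in disguise, since the singular values of $C$ are the square roots of the eigenvalues of $CC^\dagger$. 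What your version buys is self-containedness, an explicit uniqueness argument (the $\lambda_k$ are forced to be the nonzero eigenvalues of $\rho_A$ by taking the partial trace of any decomposition of the stated form), and a direct bridge to Lemma~\ref{lem:reduced}; what the paper's version buys is brevity. Your attention to the two genuine fine points --- that contraction against a null eigenvector gives the zero vector (so the index set really does collapse to $Y$ with strictly positive $\lambda_k$), and that degeneracy spoils uniqueness of the Schmidt vectors but not of the coefficients --- is exactly right and is glossed over in the paper.
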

\begin{proof}
In the computational basis
$\ket{\psi}=\sum_{l\in X^{n_A}} \sum_{m\in X^{n_{\bar{A}}}}
 t_{lm}\ket{l}_A\otimes\ket{m}_{\bar{A}}.$
The matrix $t=(t_{lm})$ considered as an operator from $\mathbb{C}^{N_{\bar{A}}}$ to $\mathbb{C}^{N_A}$ admits a singular value decomposition
 $t=\sum_{k\in Y}  \sqrt{\lambda_k}  u^{(k)} v^{(k)*}$, for some $Y\subset X^{n_A}$, with $\{u^{(k)}\}\subset \mathbb{C}^{N_A}$ and $\{v^{(k)}\}\subset \mathbb{C}^{N_{\bar{A}}}$ orthonormal sets, and $\lambda_k>0$ \cite{lang}.  One gets $1=\bra{\psi}\psi\rangle=\tr{(t^* t)}=\sum\lambda_k$. The desired result immediately follows, with
$\ket{u_k}_A=\sum_l u^{(k)}_l \ket{l}_A$ and $\ket{v_k}_A=\sum_m \bar{v}^{(k)}_m \ket{m}_{\bar{A}}$.
\qed
\end{proof}
It follows immediately that
\begin{theorem}
\label{prop:separable}
A bipartite state $\ket{\psi}$ is separable with respect to the bipartition $(A,\bar{A})$ iff the set of Schmidt coefficients reduces to $\{1\}$.
\qed
\end{theorem}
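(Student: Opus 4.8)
The plan is to prove both directions of the equivalence directly from the Schmidt decomposition of Lemma \ref{schmidt}, exploiting the uniqueness of the set of Schmidt coefficients $\{\lambda_k\}$ and the normalization $\sum_{k\in Y}\lambda_k=1$.

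First I would prove the ``if'' direction. Suppose the set of Schmidt coefficients is $\{1\}$, i.e.\ $Y$ is a singleton, say $Y=\{k_0\}$ with $\lambda_{k_0}=1$. Then Eq.~(\ref{eq:Schmidt decomposition}) collapses to the single term $\ket{\psi}=\ket{u_{k_0}}_A\otimes\ket{v_{k_0}}_{\bar{A}}$, which is manifestly of the product form required by the definition of separability, with $\ket{\phi}_A=\ket{u_{k_0}}_A$ and $\ket{\chi}_{\bar{A}}=\ket{v_{k_0}}_{\bar{A}}$.

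Next I would prove the ``only if'' direction. Assume $\ket{\psi}=\ket{\phi}_A\otimes\ket{\chi}_{\bar{A}}$ with both factors normalized (normalization of the factors can be arranged since $\bra{\psi}\psi\rangle=1$ forces $\|\phi\|_A\,\|\chi\|_{\bar{A}}=1$, and we may absorb the scalar). Then $\ket{\psi}$ is already written in the form (\ref{eq:Schmidt decomposition}) with a single summand, $Y=\{k_0\}$, $\lambda_{k_0}=1$, $\ket{u_{k_0}}_A=\ket{\phi}_A$, $\ket{v_{k_0}}_{\bar{A}}=\ket{\chi}_{\bar{A}}$: indeed a one-element family $\{\ket{\phi}_A\}$ of a unit vector is trivially orthonormal, likewise $\{\ket{\chi}_{\bar{A}}\}$, and $\sqrt{\lambda_{k_0}}=1$. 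By the uniqueness clause of Lemma \ref{schmidt}, the set of Schmidt coefficients of $\ket{\psi}$ is exactly $\{1\}$.

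The only point requiring a little care — the ``main obstacle,'' though it is minor — is the bookkeeping in the second direction: one must observe that the Schmidt decomposition's constraints ($\lambda_k>0$, $\sum\lambda_k=1$, orthonormality of the two families) are all trivially met by a single unit-norm term, so that the product representation genuinely qualifies as a Schmidt decomposition and the uniqueness statement then applies. No computation with the Fourier coefficients $z_k$ is needed; the argument is purely structural.
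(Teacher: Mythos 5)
Your proof is correct and follows exactly the route the paper intends: the paper offers no written argument (it states the theorem as an immediate consequence of Lemma \ref{schmidt}), and your two directions — reading off separability from a singleton Schmidt decomposition, and invoking the uniqueness clause to conclude that a product state has Schmidt set $\{1\}$ — are precisely the "immediate" steps being elided. Nothing is missing; your care about verifying that a single normalized product term satisfies all the constraints of Eq.~(\ref{eq:Schmidt decomposition}) before applying uniqueness is the right (and only) point of substance.
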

\begin{remark}
\label{rem:dist}
In general, one wants also a measure that quantifies the entanglement of a bipartite state, i.e.\  how much the given state differs from a separable one. To this purpose, note that one can associate to the Schmidt coefficients of a given bipartition $\{\lambda_l, l\in Y\subset X^{n_A}\}$ a probability distribution $p$ over the finite space $X^{n_A}$, in the following way:
\begin{equation}
p(l)=\begin{cases}
\lambda_l & \text{for}\; l\in Y\\
0 & \text{otherwise}.
\end{cases}
\end{equation}
Therefore, it is natural to consider as a measure of bipartite entanglement the distance of the probability vector $p$ from the set SEP of the separable vectors,  concentrated at a point,
\begin{equation}
\label{eq:SEP}
\mathrm{SEP}=\{p(\cdot)=\delta_\ell(\cdot)\}_{\ell\in X^{n_A}},
\end{equation}
where $\delta_\ell(\cdot)=\delta_{\{\ell\}}(\cdot)$. Here $\delta_C$ is the characteristic function of set $C$,
\begin{equation}
\label{eq:charfun}
\delta_C(x) =\begin{cases} 1 & \text{if} \; x\in C\\
0 & \text{if}\; x\notin C.
\end{cases}
\end{equation}
We consider the distance derived from the $L^1$ norm,
\begin{equation}
d(p_1,p_2)=\frac{1}{2}\sum_{l\in X^{n_A}}|p_1(l)-p_2(l)| .
\end{equation}
It is easy to see that $0\leq d(p_1,p_2)\leq 1$ and that $d(p_1,p_2)=\sum_{l}[p_1(l)-p_2(l)]_+$, where $[\cdot]_+$ denotes the positive part.
Therefore, if $p$ is the probability vector associated to the Schmidt coefficients $\{\lambda_l\}$ of the state $\ket{\psi}$ in the bipartition $(A,\bar{A})$, one gets
\begin{equation}
\min_{q\in\mathrm{SEP}} d(p,q)=1-\max_l \lambda_l.
\end{equation}
\end{remark}
This motivates the following
\begin{definition}[Entanglement measure]
\label{def:dist}
A measure of the entanglement of state $\ket{\psi}$ with respect to the bipartition $(A,\bar{A})$ is given by
\begin{equation}
\mathcal{E}_A(\ket{\psi})=\frac{N_A}{N_A-1}\left(1-\max_l \lambda_l\right),
\end{equation}
where the maximum is taken over the set of the Schmidt coefficients $\{\lambda_l\}$ of the state in the given bipartition and $N_A=2^{n_A}$.
\end{definition}
By noting that, due to normalization, $N_A^{-1}\leq\max_k \lambda_k\leq 1$,
it follows that
\begin{theorem}
One gets $0\leq\mathcal{E}_A(\ket{\psi})\leq 1$. Moreover $\mathcal{E}_A(\ket{\psi})=0$ iff $\ket{\psi}$ is separable with respect to the bipartition $(A,\bar{A})$ . \qed
\end{theorem}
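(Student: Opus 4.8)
The plan is to exploit the normalization of the Schmidt coefficients together with the bound on their number. First I would recall that $\{\lambda_l\}$ is the set of \emph{nonzero} Schmidt coefficients of $\ket{\psi}$ in the bipartition $(A,\bar A)$, so $\lambda_l>0$ and $\sum_{l\in Y}\lambda_l=1$, where $Y\subset X^{n_A}$ has cardinality $|Y|\leq N_A=2^{n_A}$. From $\lambda_l>0$ and $\sum_l\lambda_l=1$ it follows immediately that each $\lambda_l\leq 1$, hence $\max_l\lambda_l\leq 1$. For the lower bound I would use the elementary fact that the maximum of a finite family of nonnegative numbers is at least their average, so $\max_l\lambda_l\geq\bigl(\sum_l\lambda_l\bigr)/|Y|=1/|Y|\geq 1/N_A$. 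This is precisely the chain $N_A^{-1}\leq\max_l\lambda_l\leq 1$ stated just before the theorem.

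Next I would convert these into bounds on $\mathcal{E}_A$. Subtracting from $1$ reverses the inequalities: $0\leq 1-\max_l\lambda_l\leq 1-N_A^{-1}=(N_A-1)/N_A$. Multiplying through by the strictly positive factor $N_A/(N_A-1)$ gives $0\leq\mathcal{E}_A(\ket{\psi})\leq 1$, which is the first assertion.

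For the second assertion I would argue by a chain of equivalences. Since the prefactor $N_A/(N_A-1)$ is positive, $\mathcal{E}_A(\ket{\psi})=0$ holds iff $\max_l\lambda_l=1$. Because the $\lambda_l$ are strictly positive and sum to $1$, having one of them equal to $1$ forces all the others to vanish, i.e.\ the set of Schmidt coefficients reduces to $\{1\}$ (a single term survives in the Schmidt decomposition). By Theorem \ref{prop:separable}, this is equivalent to $\ket{\psi}$ being separable with respect to $(A,\bar A)$, which completes the proof.

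No step is a genuine obstacle here; the argument is essentially a normalization count. The only point requiring a little care is that $Y$ is the support of the Schmidt coefficients and may be a proper subset of $X^{n_A}$, so the averaging bound first yields $\max_l\lambda_l\geq 1/|Y|$, which is then weakened to $1/N_A$ via $|Y|\leq N_A$; one should also keep $N_A\geq 2$ in mind so that $N_A/(N_A-1)$ is well defined and positive.
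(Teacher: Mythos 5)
Your proposal is correct and follows exactly the route the paper takes: the paper simply notes that normalization gives $N_A^{-1}\leq\max_l\lambda_l\leq 1$ and then states the theorem with \qed, leaving the conversion to bounds on $\mathcal{E}_A$ and the appeal to Theorem~\ref{prop:separable} implicit. You have merely filled in those routine details (including the careful handling of the support $Y$), so there is nothing to add.
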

On the other hand, states that maximize the entanglement measure $\mathcal{E}_A$ are the main interest of this paper
\begin{definition}[Maximally bipartite entangled states]
\label{def:mbes}
A state $\ket{\psi}$ that satisfies $\mathcal{E}_A(\ket{\psi})= 1$ is called a
\emph{maximally bipartite entangled} state  with respect to the bipartition $(A,\bar{A})$ .
\end{definition}
\begin{theorem}[Local unitary invariance]
\label{prop:mbes}
A state $\ket{\psi}$ is maximally entangled with respect to the bipartition $(A,\bar{A})$ iff
\begin{equation}
\label{eq:mbes}
|\psi\rangle = N_A^{-1/2} \sum_{l\in X^{n_A}} U^A \ket{l}_A\otimes U^{\bar{A}} \ket{l}_{\bar{A}} ,
\end{equation}
where $U^A$ and $U^{\bar{A}}$ are (local) unitary operators in $\mathcal{H}_A$ and $\mathcal{H}_{\bar{A}}$, respectively.
\end{theorem}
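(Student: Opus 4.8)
The plan is to characterize maximally entangled states through the Schmidt decomposition and then recognize the right-hand side of (\ref{eq:mbes}) as exactly the class so obtained. The starting point is the observation, recorded just before Definition~\ref{def:dist}, that $N_A^{-1}\leq \max_l \lambda_l \leq 1$, so $\mathcal{E}_A(\ket{\psi})=1$ holds if and only if $\max_l \lambda_l = N_A^{-1}$. Since the Schmidt coefficients are strictly positive, sum to $1$, and are indexed by a subset $Y\subset X^{n_A}$ with $|Y|\leq N_A$, the condition $\max_l \lambda_l = 1/N_A$ forces $|Y|=N_A$ and $\lambda_l = 1/N_A$ for every $l$. Thus $\ket{\psi}$ is maximally entangled with respect to $(A,\bar A)$ precisely when its Schmidt decomposition has $N_A$ equal coefficients, i.e.
\begin{equation*}
\ket{\psi}=N_A^{-1/2}\sum_{l\in X^{n_A}} \ket{u_l}_A\otimes\ket{v_l}_{\bar A},
\end{equation*}
with $\{\ket{u_l}_A\}$ an orthonormal set of $N_A$ vectors in $\mathcal{H}_A$ (hence an orthonormal \emph{basis}, since $\dim\mathcal{H}_A=N_A$) and $\{\ket{v_l}_{\bar A}\}$ an orthonormal set in $\mathcal{H}_{\bar A}$.

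The second step is the identification with (\ref{eq:mbes}). For the ``if'' direction: if $\ket{\psi}$ has the form (\ref{eq:mbes}) with $U^A$, $U^{\bar A}$ unitary, then $\{U^A\ket{l}_A\}_{l\in X^{n_A}}$ and $\{U^{\bar A}\ket{l}_{\bar A}\}_{l\in X^{n_A}}$ are orthonormal sets (unitaries preserve inner products), so by uniqueness of the Schmidt coefficients (Lemma~\ref{schmidt}) the state has all Schmidt coefficients equal to $1/N_A$, whence $\mathcal{E}_A(\ket{\psi})=1$. For the ``only if'' direction: starting from the equal-coefficient Schmidt form above, define $U^A$ to be the unitary on $\mathcal{H}_A$ sending the computational basis vector $\ket{l}_A$ to $\ket{u_l}_A$ (this is well defined and unitary because $\{\ket{u_l}_A\}$ is an orthonormal basis). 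On the $\bar A$ side the orthonormal set $\{\ket{v_l}_{\bar A}\}$ has only $N_A\leq N_{\bar A}$ elements, so it need not be a basis; extend it arbitrarily to an orthonormal basis $\{\ket{v_m}_{\bar A}\}_{m\in X^{n_{\bar A}}}$ of $\mathcal{H}_{\bar A}$ and let $U^{\bar A}$ send $\ket{m}_{\bar A}$ to $\ket{v_m}_{\bar A}$; this is unitary and restricts correctly on the relevant indices. Substituting gives exactly (\ref{eq:mbes}).

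The only genuinely delicate point is this last extension on the larger factor $\mathcal{H}_{\bar A}$: when $n_A < n_{\bar A}$ the Schmidt rank $N_A$ is strictly smaller than $\dim\mathcal{H}_{\bar A}$, so the $\ket{v_l}$ do not by themselves determine a unitary, and one must invoke the fact that any orthonormal set in a finite-dimensional Hilbert space extends to an orthonormal basis (Gram--Schmidt on a complementary spanning set). Everything else is bookkeeping: tracking the index set $Y$ and using the uniqueness clause of the Schmidt decomposition to pin down the coefficients. I would also remark in passing that the freedom in choosing the extension — and, more basically, the freedom of phases and of the ordering of the $\ket{u_l},\ket{v_l}$ — reflects exactly the local-unitary invariance of the notion of maximal entanglement, which is why the theorem is stated the way it is.
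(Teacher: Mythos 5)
Your proof is correct and follows essentially the same route as the paper: reduce maximal entanglement to the condition that all Schmidt coefficients equal $1/N_A$, then take $U^A$ and $U^{\bar A}$ to be the unitaries carrying the computational basis to the Schmidt vectors. The extra care you take in extending $\{\ket{v_l}_{\bar A}\}$ to a full orthonormal basis when $n_A<n_{\bar A}$ is a point the paper only acknowledges in the remark following the theorem, so your treatment is, if anything, slightly more complete.
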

\begin{proof}
A state $\ket{\psi}$ is maximally  entangled  iff  $\mathcal{E}_A(\ket{\psi})= 1$, i.e.,
$\max_k \lambda_k=1/N_A$. Thus its probability vector is completely mixed, $\lambda_k=1/N_A$ $\forall k\in X^{n_A}$. From Lemma~\ref{schmidt} one gets the thesis  where $U^A$ ($U^{\bar{A}}$) is the local unitary operator in $\mathcal{H}_A$ ($\mathcal{H}_{\bar{A}}$)  that transforms the computational basis into the Schmidt one, namely  $U^A\ket{l}_A=\ket{u_l}_A$ ($U^{\bar{A}}\ket{l}_{\bar{A}}=\ket{v_l}_{\bar{A}}$).
\qed
\end{proof}
\begin{remark}
Note that Eq. (\ref{eq:mbes}) implicitly assumes an arbitrary embedding of $X^{n_A}$ in $X^{\bar{A}}$ and thus, when $n_{\bar{A}}>n_A$, relies on an arbitrariness in the choice of the subset $\{\ket{l}_{\bar{A}}\}_{l\in X^{n_A}}$ of the computational basis of party $\mathcal{H}_{\bar{A}}$.
\end{remark}

\begin{remark}
Note that, while separable states (\ref{eq:SEP}) are associated to extremal probability vectors, concentrated at a point, maximally entangled bipartite states are associated to completely mixed probability vectors, uniform on $X^{n_A}$. By Theorem~\ref{prop:mbes}, the above property can be used as an equivalent definition of maximally entangled bipartite states. It has the advantage of being independent of the particular measure $\mathcal{E}_A$.
\end{remark}
An immediate consequence of Theorem \ref{prop:mbes} is the following
\begin{corollary}
A maximally bipartite entangled state has the following Fourier coefficients in the computational basis
\begin{equation}
\label{eq:maxbi}
z_k=N_A^{-1/2} \sum_{l\in X^{n_A}} U^A_{k_A,l} U^{\bar{A}}_{k_{\bar{A}},l} , \qquad
k\in X^n,
\end{equation}
where $N_A=2^{n_A}$ and $U^A_{l,l'}=\bra{l_A} U^A  \ket{l'_A}$ with $U^A$ the local unitary operator in $\mathcal{H}_A$ that transforms the computational basis into a Schmidt one. \qed
\end{corollary}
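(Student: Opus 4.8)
The plan is to read the corollary off directly from the canonical form~(\ref{eq:mbes}) of Theorem~\ref{prop:mbes}. By definition the Fourier coefficient $z_k$ is the overlap $z_k=\langle k|\psi\rangle$ of $\ket{\psi}$ with the computational basis vector~(\ref{eq:ki}), so the whole content of the statement should follow by substituting~(\ref{eq:mbes}) for $\ket{\psi}$ and expanding the inner product.

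First I would fix the bookkeeping of the multi-index relative to the bipartition $(A,\bar A)$: every $k\in X^n$ splits as $k=(k_A,k_{\bar A})$ with $k_A\in X^{n_A}$ and $k_{\bar A}\in X^{n_{\bar A}}$, and since $\ket{k}_S=\bigotimes_{i\in S}\ket{k_i}_i$ regroups along $A$ and $\bar A$ one has $\ket{k}_S=\ket{k_A}_A\otimes\ket{k_{\bar A}}_{\bar A}$. Inserting this together with~(\ref{eq:mbes}) into $z_k=\langle k|\psi\rangle$, and using that the inner product on $\mathcal{H}_A\otimes\mathcal{H}_{\bar A}$ factorizes on product vectors, gives
\begin{equation*}
z_k=N_A^{-1/2}\sum_{l\in X^{n_A}}\bra{k_A}U^A\ket{l}_A\,\bra{k_{\bar A}}U^{\bar A}\ket{l}_{\bar A}
   =N_A^{-1/2}\sum_{l\in X^{n_A}}U^A_{k_A,l}\,U^{\bar A}_{k_{\bar A},l},
\end{equation*}
which is exactly~(\ref{eq:maxbi}). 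For the converse implication — needed if one reads the corollary as a characterization — one simply resums $\ket{\psi}=\sum_k z_k\ket{k}$ for coefficients of this form, recognizes~(\ref{eq:mbes}), and concludes maximal bipartite entanglement via Theorem~\ref{prop:mbes}.

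The computation is essentially mechanical, so the only point I would treat with care is the meaning of the symbol $U^{\bar A}_{k_{\bar A},l}$ when $n_{\bar A}>n_A$: the column index $l$ runs over $X^{n_A}$ whereas the row index $k_{\bar A}$ runs over the larger set $X^{n_{\bar A}}$, so $U^{\bar A}_{k_{\bar A},l}=\bra{k_{\bar A}}U^{\bar A}\ket{l}_{\bar A}$ tacitly relies on a fixed embedding $X^{n_A}\hookrightarrow X^{n_{\bar A}}$, i.e.\ on a choice of an $N_A$-element subset $\{\ket{l}_{\bar A}\}_{l\in X^{n_A}}$ of the computational basis of $\mathcal{H}_{\bar A}$ — precisely the arbitrariness already flagged in the remark following Theorem~\ref{prop:mbes}. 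Once that convention is pinned down there is no obstacle and the corollary is immediate.
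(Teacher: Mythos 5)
Your proposal is correct and is exactly the computation the paper leaves implicit: the corollary is stated as an immediate consequence of Theorem~\ref{prop:mbes} with no written proof, and taking $z_k=\langle k|\psi\rangle$ with $\ket{k}=\ket{k_A}_A\otimes\ket{k_{\bar A}}_{\bar A}$ and factorizing the inner product is the intended one-line argument. Your remark about the embedding $X^{n_A}\hookrightarrow X^{n_{\bar A}}$ matches the caveat the paper itself records in the remark following Theorem~\ref{prop:mbes}, so nothing is missing.
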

In fact, there is a link between the set of Schmidt coefficients and the reduced density matrices of subsystems $A$ and $\bar{A}$. Recall that
\begin{definition}[Reduced density matrix]
\label{def:reduced}
If $\rho$ is a density matrix on $\mathcal{H}_A\otimes\mathcal{H}_{\bar{A}}$, then the reduced density matrix $\rho_A$ is a density matrix on $\mathcal{H}_A$ defined by
\begin{equation}
\rho_A=\tr_{\bar{A}}\rho,
\end{equation}
where $\tr_{\bar{A}}$ is the partial trace over subsystem $\bar{A}$.
\end{definition}
\begin{remark}
The reduced density matrix represents the state of a subsystem $A$, since it determines the statistics of every (local) observables of $A$.
\end{remark}
\begin{lemma}[Reduced density matrix eigenvalues] \cite{arakilieb}
\label{lem:reduced}
Given a pure state in $\mathcal{H}_S$, the reduced density matrices $\rho_A$ and $\rho_{\bar{A}}$ of subsystems $A$ and $\bar{A}$ have the same eigenvalues and multiplicities, except possibly for the eigenvalue 0.
\end{lemma}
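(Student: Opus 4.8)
The plan is to deduce everything from the Schmidt decomposition of Lemma~\ref{schmidt}. Given the pure state $\ket{\psi}\in\mathcal{H}_S$, with respect to the bipartition $(A,\bar A)$ write
\begin{equation*}
\ket{\psi}=\sum_{k\in Y}\sqrt{\lambda_k}\;\ket{u_k}_A\otimes\ket{v_k}_{\bar A},
\end{equation*}
with $\lambda_k>0$, $\sum_k\lambda_k=1$, $Y\subset X^{n_A}$, and $\{\ket{u_k}_A\}$, $\{\ket{v_k}_{\bar A}\}$ orthonormal sets. Then form the density matrix $\rho=\ket{\psi}\bra{\psi}$ and compute the two partial traces directly.

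Carrying out the partial trace over $\bar A$, and using the orthonormality $\bra{v_k}v_{k'}\rangle=\delta_{kk'}$, I expect
\begin{equation*}
\rho_A=\tr_{\bar A}\,\ket{\psi}\bra{\psi}=\sum_{k,k'\in Y}\sqrt{\lambda_k\lambda_{k'}}\;\ket{u_k}\bra{u_{k'}}\;\bra{v_{k'}}v_k\rangle=\sum_{k\in Y}\lambda_k\,\ket{u_k}\bra{u_k},
\end{equation*}
and, symmetrically, $\rho_{\bar A}=\sum_{k\in Y}\lambda_k\,\ket{v_k}\bra{v_k}$. The only care needed here is to perform the partial trace in an orthonormal basis of $\mathcal{H}_{\bar A}$ that contains $\{\ket{v_k}_{\bar A}\}$, which is legitimate since the partial trace is basis-independent.

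Since $\{\ket{u_k}_A\}_{k\in Y}$ is an orthonormal set, the expression for $\rho_A$ is already its spectral decomposition: its nonzero eigenvalues are exactly the numbers $\lambda_k$, $k\in Y$, an eigenvalue $\lambda$ appearing with multiplicity equal to $\#\{k\in Y:\lambda_k=\lambda\}$. The identical statement holds for $\rho_{\bar A}$ from its analogous form. Hence $\rho_A$ and $\rho_{\bar A}$ have the same nonzero eigenvalues with the same multiplicities, namely the set of Schmidt coefficients $\{\lambda_k\}$. Finally, the eigenvalue $0$ occurs in $\rho_A$ with multiplicity $N_A-|Y|$ and in $\rho_{\bar A}$ with multiplicity $N_{\bar A}-|Y|$; these differ whenever $n_A<n_{\bar A}$, which accounts for the exception in the statement.

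There is essentially no serious obstacle: the argument is just a bookkeeping of the partial trace applied to the Schmidt form. The only point requiring a line of justification is the ``except possibly for the eigenvalue $0$'' clause, which comes from the fact that $\rho_A$ and $\rho_{\bar A}$ act on spaces of different dimension $N_A\neq N_{\bar A}$ in the unbalanced case.
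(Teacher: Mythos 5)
Your proof is correct and follows exactly the paper's route: apply the Schmidt decomposition of Lemma~\ref{schmidt}, take the two partial traces to obtain $\rho_A=\sum_k\lambda_k\ket{u_k}\bra{u_k}$ and $\rho_{\bar A}=\sum_k\lambda_k\ket{v_k}\bra{v_k}$, and read off the common nonzero spectrum. The paper's proof is just these two lines; your additional remarks on the multiplicity of the eigenvalue $0$ are a correct and welcome elaboration of the ``except possibly for the eigenvalue $0$'' clause.
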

\begin{proof}
From Lemma~\ref{schmidt}, one gets
$\rho_A=\tr_{\bar{A}}\ket{\psi}\bra{\psi}=\sum \lambda_k \ket{u_k}\bra{u_k}$, and
$\rho_{\bar{A}}=\tr_{A}\ket{\psi}\bra{\psi}=\sum \lambda_k \ket{v_k}\bra{v_k}$.
\qed
\end{proof}
\begin{remark}
From the proof of Lemma~\ref{lem:reduced} one sees that the Schmidt coefficients of a bipartite state are the nonzero eigenvalues of the partial density matrices of the two parties (and the vectors of the Schmidt decomposition are the corresponding eigenvectors). Therefore, from Remark~\ref{rem:dist} and Definition~\ref{def:dist} we obtain
\end{remark}
\begin{corollary}
Given a state $\ket{\psi}\in\mathcal{H}_S$ and a bipartition $(A,\bar{A})$ one gets
\begin{eqnarray}
\mathcal{E}_A(\ket{\psi})&=&\frac{N_A}{N_A-1} \min\left\{\left\|\rho_A - \ket{\phi}\bra{\phi}\right\|_1 :
\ket{\phi}\in\mathcal{H}_A,\bra{\phi}\phi\rangle=1\right\}
\nonumber\\
&=&\frac{N_A}{N_A-1}\left(1-\| \rho_A \|\right),
\end{eqnarray}
where $\| \cdot\|_1=\tr |\cdot|$ is  the trace norm and $\|\cdot\|$ is the operator norm. \qed
\end{corollary}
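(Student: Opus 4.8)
The plan is to deduce the two equalities from material already established: the second one is a direct rewriting of Definition~\ref{def:dist} once the Schmidt coefficients are identified with eigenvalues, and the first one is the translation into the language of density operators of the $L^1$-computation of Remark~\ref{rem:dist}.

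Consider first the identity $\mathcal{E}_A(\ket\psi)=\frac{N_A}{N_A-1}\bigl(1-\|\rho_A\|\bigr)$. By Definition~\ref{def:dist}, $\mathcal{E}_A(\ket\psi)=\frac{N_A}{N_A-1}\bigl(1-\max_l\lambda_l\bigr)$, the maximum running over the Schmidt coefficients $\{\lambda_l\}$ of $\ket\psi$ in the bipartition $(A,\bar A)$. By the proof of Lemma~\ref{lem:reduced} these are exactly the nonzero eigenvalues of $\rho_A=\sum_{l}\lambda_l\ket{u_l}\bra{u_l}$; since $\rho_A$ is positive semidefinite with unit trace its operator norm equals its largest eigenvalue, whence $\|\rho_A\|=\max_l\lambda_l$ and the claim follows.

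It remains to show $\min_{\ket\phi}\|\rho_A-\ket\phi\bra\phi\|_1=1-\|\rho_A\|$, with the normalization of Remark~\ref{rem:dist} (in which the distance carries a factor $\tfrac12$, so that $\|\rho_A-\ket\phi\bra\phi\|_1$ stands for $2\,d(p,q)$). For the bound ``$\le$'' I would take for $\ket\phi$ a Schmidt eigenvector $\ket{u_{l_0}}$ belonging to the top eigenvalue $\lambda_{l_0}=\|\rho_A\|$: then $\rho_A-\ket{u_{l_0}}\bra{u_{l_0}}$ is diagonal in the Schmidt basis with entries $\lambda_l-\delta_{l,l_0}$, so $\|\rho_A-\ket{u_{l_0}}\bra{u_{l_0}}\|_1=\sum_{l}|\lambda_l-\delta_{l,l_0}|=2\,d(p,\delta_{l_0})$, where $p=(\lambda_l)$ is the probability vector of Remark~\ref{rem:dist} and $\delta_{l_0}\in\mathrm{SEP}$. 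The computation $\min_{q\in\mathrm{SEP}}d(p,q)=1-\max_l\lambda_l$ of Remark~\ref{rem:dist} then gives the value $1-\|\rho_A\|$ (up to the factor $2$), and in particular shows that restricting to eigenvectors of $\rho_A$ cannot do worse than $1-\|\rho_A\|$.

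The bound ``$\ge$'' — that no pure state $\ket\phi$ lying outside the eigenbasis of $\rho_A$ does any better — is the main obstacle, and I would dispatch it through the dual formula for the trace norm, $\|M\|_1=\max\{\tr(XM):X=X^\dagger,\ \|X\|\le1\}$. Testing with the self-adjoint contraction $X=1-2\ket\phi\bra\phi$ yields, for every normalized $\ket\phi$,
\begin{equation*}
\|\rho_A-\ket\phi\bra\phi\|_1\ \ge\ \tr\bigl((1-2\ket\phi\bra\phi)(\rho_A-\ket\phi\bra\phi)\bigr)=2\bigl(1-\bra\phi\rho_A\ket\phi\bigr)\ \ge\ 2\bigl(1-\|\rho_A\|\bigr),
\end{equation*}
where I used $\tr\rho_A=\|\,\ket\phi\bra\phi\,\|_1=1$ and $\bra\phi\rho_A\ket\phi\le\|\rho_A\|$. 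Equivalently, since $\tr(\rho_A-\ket\phi\bra\phi)=0$ one has $\|\rho_A-\ket\phi\bra\phi\|_1=2\,\tr\bigl[(\rho_A-\ket\phi\bra\phi)_+\bigr]\ge 2\,\tr\bigl[P_{\phi^\perp}(\rho_A-\ket\phi\bra\phi)\bigr]=2\bigl(1-\bra\phi\rho_A\ket\phi\bigr)$, with $P_{\phi^\perp}$ the projector onto the orthogonal complement of $\ket\phi$. Combining the two bounds pins down the minimum and, accounting for the factor $2$ absorbed into $d$ in Remark~\ref{rem:dist}, chains the three expressions in the statement together.
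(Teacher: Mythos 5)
Your proof is correct and in fact more complete than what the paper offers: the paper states this corollary with no proof at all, presenting it as an immediate consequence of Remark~\ref{rem:dist}, Definition~\ref{def:dist} and the observation that the Schmidt coefficients are the nonzero eigenvalues of $\rho_A$. That chain of references only covers the minimization over pure states that are eigenvectors of $\rho_A$ (the set $\mathrm{SEP}$ of Remark~\ref{rem:dist} consists of delta distributions in the Schmidt labels), so the genuinely nontrivial content of the first equality --- that an arbitrary normalized $\ket{\phi}$, not necessarily an eigenvector of $\rho_A$, cannot do better --- is left implicit there. Your duality argument with the test operator $X=1-2\ket{\phi}\bra{\phi}$ (equivalently, the positive-part estimate via $P_{\phi^\perp}$) supplies exactly this missing lower bound, and your upper bound via the top Schmidt eigenvector is the direct translation of the paper's $L^1$ computation; both steps are sound. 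One point you should state plainly rather than parenthetically: with the declared convention $\|\cdot\|_1=\tr|\cdot|$, what you actually prove is $\min_{\ket{\phi}}\|\rho_A-\ket{\phi}\bra{\phi}\|_1=2\left(1-\|\rho_A\|\right)$, so the first line of the corollary as printed is off by a factor of $2$ relative to the second; the chain of equalities is consistent only if $\|\cdot\|_1$ is read as the normalized trace distance $\frac{1}{2}\tr|\cdot|$, i.e.\ the convention of $d$ in Remark~\ref{rem:dist}. You clearly noticed this, but the closing phrase about ``accounting for the factor $2$'' slightly obscures the fact that the statement itself, taken literally, requires that correction rather than your proof.
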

Moreover,
\begin{corollary}
\label{cor:mbe}
Given a bipartition $(A,\bar{A})$, a state $\ket{\psi}\in\mathcal{H}_S$ is separable iff $\rho_A=\ket{\phi}\bra{\phi}$ for some normalized $\ket{\phi}\in\mathcal{H}_A$ and is maximally entangled iff
$\rho_A=1/N_A$. \qed
\end{corollary}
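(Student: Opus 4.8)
The plan is to read off both equivalences directly from the Schmidt decomposition (Lemma~\ref{schmidt}) together with the identification, noted in the remark following Lemma~\ref{lem:reduced}, of the Schmidt coefficients $\{\lambda_k\}_{k\in Y}$ of $\ket{\psi}$ in the bipartition $(A,\bar{A})$ with the nonzero eigenvalues of $\rho_A$, the associated orthonormal vectors $\{\ket{u_k}_A\}$ being the corresponding eigenvectors, so that $\rho_A=\sum_{k\in Y}\lambda_k\ket{u_k}\bra{u_k}$.

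For the separability statement, I would first invoke Theorem~\ref{prop:separable}: $\ket{\psi}$ is separable with respect to $(A,\bar{A})$ iff its set of Schmidt coefficients is $\{1\}$, i.e.\ iff $Y$ is a singleton $\{k_0\}$ with $\lambda_{k_0}=1$. Translating through the identification above, this is equivalent to $\rho_A$ having a single nonzero eigenvalue, equal to $1$, that is $\rho_A=\ket{u_{k_0}}\bra{u_{k_0}}$, a one-dimensional projection; conversely any $\rho_A=\ket{\phi}\bra{\phi}$ with $\bra{\phi}\phi\rangle=1$ has Schmidt spectrum $\{1\}$. This yields the first ``iff''.

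For the maximally entangled statement I would argue as in the proof of Theorem~\ref{prop:mbes}. By Definitions~\ref{def:dist} and~\ref{def:mbes}, $\ket{\psi}$ is maximally entangled iff $\mathcal{E}_A(\ket{\psi})=1$, i.e.\ $\max_k\lambda_k=1/N_A$. Since the $\lambda_k>0$ satisfy $\sum_{k\in Y}\lambda_k=1$ with each $\lambda_k\le 1/N_A$, and $Y\subset X^{n_A}$ forces $|Y|\le N_A$, one gets $1=\sum_{k\in Y}\lambda_k\le |Y|/N_A\le 1$; hence equality throughout, so $|Y|=N_A$ and $\lambda_k=1/N_A$ for all $k\in Y$. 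Then $\{\ket{u_k}\}_{k\in Y}$ is an orthonormal set of $N_A=\dim\mathcal{H}_A$ vectors, hence an orthonormal basis, and $\rho_A=\frac{1}{N_A}\sum_{k\in Y}\ket{u_k}\bra{u_k}=1/N_A$. Conversely, if $\rho_A=1/N_A$ then all its eigenvalues equal $1/N_A$, so all Schmidt coefficients equal $1/N_A$ and $\max_k\lambda_k=1/N_A$, i.e.\ $\mathcal{E}_A(\ket{\psi})=1$.

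The only point requiring real care is the counting step in the maximally entangled case: one must combine $|Y|\le N_A$ with the normalization $\sum_k\lambda_k=1$ and the bound $\lambda_k\le\max_k\lambda_k=1/N_A$ to deduce that in fact $|Y|=N_A$ and all Schmidt coefficients coincide, since only then does $\sum_k\ket{u_k}\bra{u_k}$ equal the identity on $\mathcal{H}_A$. Everything else is a direct transcription of Theorem~\ref{prop:separable}, Theorem~\ref{prop:mbes} and Lemma~\ref{lem:reduced}.
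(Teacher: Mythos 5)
Your proposal is correct and follows exactly the route the paper intends: the corollary is stated with no written proof because it is meant to follow immediately from Theorem~\ref{prop:separable}, the argument in Theorem~\ref{prop:mbes}, and the identification (via Lemma~\ref{lem:reduced} and its proof) of the Schmidt coefficients with the nonzero eigenvalues of $\rho_A$. Your explicit counting step ($|Y|\le N_A$ plus normalization forcing $|Y|=N_A$ and $\lambda_k=1/N_A$) is the same deduction the paper makes tacitly in the proof of Theorem~\ref{prop:mbes}, just spelled out more carefully.
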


As an alternative measure of the bipartite entanglement between the two
subsets, which is more suitable to analytical treatment, we consider the linear entropy of subsystem $A$.
\begin{definition}[Linear entropy and purity]
\label{def:puritydef}
A measure of the entanglement of state $\ket{\psi}$ with respect to the bipartition $(A,\bar{A})$ is given by
\begin{equation}
\mathcal{L}_A(\ket{\psi})=\frac{N_A}{N_A-1}\left(1-\pi_A(\ket{\psi}) \right),
\label{eq:linent}
\end{equation}
where $N_A=2^{n_A}$, and
\begin{equation}
\pi_A(\ket{\psi})=\tr_A\rho_{A}^2, \quad \rho_{A}=\tr_{\bar{A}}|\psi\rangle\langle\psi|
\end{equation}
is the \emph{purity} of subsystem $A$.
\end{definition}
By noting that $\pi_A(\ket{\psi})=\sum_l \lambda_l^2$, where $\{\lambda_l\}$ is the set of the Schmidt coefficients of the state in the given bipartition it follows that
\begin{lemma}[Purity bounds]
\label{lm:purityconstraint}
Given a state $\ket{\psi}\in\mathcal{H}_S$ and a  bipartition $(A,\bar{A})$, one has $\pi_{A}(\ket{\psi})=\pi_{\bar{A}}(\ket{\psi})$ and
\begin{equation}
1/N_A\le\pi_{A}(\ket{\psi})\le 1.
\end{equation}
Moreover, $\pi_A(\ket{\psi})=1$ and $\pi_A(\ket{\psi})=1/N_A$ iff $\ket{\psi}$ is, respectively, separable and maximally entangled  with respect to the given bipartition.
\end{lemma}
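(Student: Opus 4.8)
The plan is to reduce everything to the set of Schmidt coefficients $\{\lambda_l\}_{l\in Y}$ of $\ket\psi$ with respect to the bipartition $(A,\bar A)$, using the facts already established. By Lemma~\ref{lem:reduced} the nonzero eigenvalues of $\rho_A$ and $\rho_{\bar A}$ coincide (with multiplicities), and since both are density matrices their eigenvalues sum to $1$; hence $\pi_A(\ket\psi)=\tr\rho_A^2=\sum_{l\in Y}\lambda_l^2=\tr\rho_{\bar A}^2=\pi_{\bar A}(\ket\psi)$, which is the first claim. For the bounds, recall from the remark after Lemma~\ref{lem:reduced} that $\{\lambda_l\}$ is exactly the set of nonzero eigenvalues of $\rho_A$, so the problem becomes the elementary one of estimating $\sum_l \lambda_l^2$ subject to $\lambda_l>0$, $\sum_l\lambda_l=1$, and $|Y|\le N_A$ (since $Y\subset X^{n_A}$).

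First I would prove the upper bound: from $0<\lambda_l\le 1$ one has $\lambda_l^2\le\lambda_l$, so $\pi_A=\sum_l\lambda_l^2\le\sum_l\lambda_l=1$, with equality iff every $\lambda_l\in\{0,1\}$, i.e.\ iff the set of Schmidt coefficients is $\{1\}$; by Theorem~\ref{prop:separable} (equivalently Corollary~\ref{cor:mbe}) this happens iff $\ket\psi$ is separable with respect to $(A,\bar A)$. Next the lower bound: by the Cauchy--Schwarz inequality (or convexity of $t\mapsto t^2$), $1=\big(\sum_{l\in Y}\lambda_l\big)^2\le |Y|\sum_{l\in Y}\lambda_l^2\le N_A\,\pi_A(\ket\psi)$, giving $\pi_A(\ket\psi)\ge 1/N_A$. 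Equality in Cauchy--Schwarz forces all $\lambda_l$ equal, and equality $|Y|=N_A$ forces $Y=X^{n_A}$, so together $\lambda_l=1/N_A$ for all $l\in X^{n_A}$; by Theorem~\ref{prop:mbes}/Corollary~\ref{cor:mbe} this is precisely the condition that $\rho_A=1/N_A$, i.e.\ that $\ket\psi$ be maximally entangled with respect to $(A,\bar A)$.

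It remains only to chase the "iff" directions carefully in both equality cases, making sure each equality characterization is genuinely two-sided: for $\pi_A=1$ one needs that separability implies a single unit Schmidt coefficient (immediate from $\ket\psi=\ket\phi_A\otimes\ket\chi_{\bar A}\Rightarrow\rho_A=\ket\phi\bra\phi$), and conversely; for $\pi_A=1/N_A$ one needs that $\rho_A=1/N_A$ forces all $N_A$ Schmidt coefficients to be $1/N_A$ and conversely, which is exactly Corollary~\ref{cor:mbe}.

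I do not expect any real obstacle here: the whole statement is a packaging of the spectral/Schmidt picture developed earlier together with two one-line inequalities. The only point requiring a little care is the lower-bound equality analysis, where one must use \emph{both} that the $\lambda_l$ are equal \emph{and} that there are exactly $N_A$ of them (rather than fewer); invoking Corollary~\ref{cor:mbe} directly for the "$\rho_A=1/N_A$" characterization is the cleanest way to avoid any gap.
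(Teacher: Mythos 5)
Your proof is correct and follows essentially the same route as the paper: reduce $\pi_A$ to $\sum_l\lambda_l^2$ over the Schmidt coefficients, bound this quadratic form on the probability simplex, and invoke Theorems~\ref{prop:separable} and~\ref{prop:mbes} for the equality characterizations. The only difference is cosmetic --- the paper locates the extrema by a frontier/interior critical-point argument on the simplex, whereas you use the pointwise inequality $\lambda_l^2\le\lambda_l$ and Cauchy--Schwarz, which if anything makes the equality analysis (in particular the need for $|Y|=N_A$ in the lower bound) more explicit.
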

\begin{proof}
The quadratic form $\pi_A(\ket{\psi})=\sum_l \lambda_l^2$ reaches its extremal values in the simplex $\Delta^{N_A}=\{(\lambda_l)_{l\in X^{n_A}} | 0\leq\lambda_l\leq 1, \sum_l \lambda_l=1\}$. The maximum is reached on the frontier, $\lambda_l=\delta_\ell(l)$ for some $1\leq\ell\leq n_A$, while the minimum is attained at the interior point where $\mathrm{d} \pi_A(\ket{\psi})=0$, i.e.\ $\lambda_l=1/N_A$. By  Theorems \ref{prop:separable} and \ref{prop:mbes} one gets the thesis.
\qed
\end{proof}
It follows that $\mathcal{L}_A(\ket{\psi})$ has a behavior similar to $\mathcal{E}_A(\ket{\psi})$. In particular,
\begin{theorem}[Linear entropy bounds]
\label{th:linentbounds}
One gets $0\leq\mathcal{L}_A(\ket{\psi})\leq 1$. Moreover $\mathcal{L}_A(\ket{\psi})=0$ iff $\ket{\psi}$ is separable with respect to the bipartition $(A,\bar{A})$,  while  $\mathcal{L}_A(\ket{\psi})= 1$ iff $\ket{\psi}$ is
maximally  entangled with respect to the bipartition $(A,\bar{A})$. \qed
\end{theorem}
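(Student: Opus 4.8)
The plan is to read off everything from the defining relation (\ref{eq:linent}) together with Lemma~\ref{lm:purityconstraint}. Equation (\ref{eq:linent}) exhibits $\mathcal{L}_A(\ket{\psi})$ as an affine function of the purity $\pi_A(\ket{\psi})$, namely $\mathcal{L}_A = \frac{N_A}{N_A-1}(1-\pi_A)$, with strictly negative slope $-N_A/(N_A-1)$ (recall $N_A=2^{n_A}\geq 2$, so the slope is well defined and nonzero). Hence $\pi_A\mapsto\mathcal{L}_A$ is a strictly decreasing bijection of $\mathbb{R}$ onto itself, and in particular it maps any interval onto an interval, carrying the two endpoints to the two endpoints in reversed order.

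First I would invoke Lemma~\ref{lm:purityconstraint}, which gives $1/N_A\le\pi_A(\ket{\psi})\le 1$. Applying the affine map of (\ref{eq:linent}) to the two endpoints: $\pi_A=1$ yields $\mathcal{L}_A=0$, and $\pi_A=1/N_A$ yields $\mathcal{L}_A=\frac{N_A}{N_A-1}\left(1-\frac{1}{N_A}\right)=\frac{N_A}{N_A-1}\cdot\frac{N_A-1}{N_A}=1$. By monotonicity, the whole range of $\pi_A$ is therefore carried into $[0,1]$, which is the first claim $0\le\mathcal{L}_A(\ket{\psi})\le 1$.

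Next, since the map is a bijection between the intervals $[1/N_A,1]$ and $[0,1]$ sending $1\mapsto 0$ and $1/N_A\mapsto 1$, we have the equivalences $\mathcal{L}_A(\ket{\psi})=0\iff\pi_A(\ket{\psi})=1$ and $\mathcal{L}_A(\ket{\psi})=1\iff\pi_A(\ket{\psi})=1/N_A$. Combining these with the last sentence of Lemma~\ref{lm:purityconstraint} --- which states that $\pi_A(\ket{\psi})=1$ iff $\ket{\psi}$ is separable with respect to $(A,\bar{A})$, and $\pi_A(\ket{\psi})=1/N_A$ iff $\ket{\psi}$ is maximally entangled with respect to $(A,\bar{A})$ --- gives exactly the two \emph{iff} statements in the theorem.

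There is no real obstacle here: all the analytic content (the bounds $1/N_A\le\pi_A\le1$ and the characterization of the two extreme cases) has already been established in Lemma~\ref{lm:purityconstraint} via the simplex argument on the Schmidt coefficients, so the present theorem is just the transcription of that result through the affine normalization in Definition~\ref{def:puritydef}. The only thing worth double-checking is the arithmetic at the endpoint $\pi_A=1/N_A$, which collapses cleanly to $\mathcal{L}_A=1$ as shown above; this is precisely why the normalization factor $N_A/(N_A-1)$ was chosen.
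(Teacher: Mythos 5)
Your proposal is correct and matches the paper's reasoning: the paper states this theorem as an immediate consequence of Definition~\ref{def:puritydef} and Lemma~\ref{lm:purityconstraint} (hence the \qed in the statement itself), and your argument --- the strictly decreasing affine map $\pi_A\mapsto\frac{N_A}{N_A-1}(1-\pi_A)$ carrying the interval $[1/N_A,1]$ onto $[0,1]$ with endpoints exchanged --- is exactly the intended derivation. Nothing is missing.
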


\begin{remark}
\label{rem:idea}
Let us consider a system composed of an even number $n$ of qubits and a balanced bipartition $(A,\bar{A})$.
The information contained in a maximally bipartite entangled state $\ket{\psi}$ is not locally accessible by party $A$ or $\bar{A}$, because, by Corollary \ref{cor:mbe}, their partial density matrices are maximally mixed, $\rho_A=\rho_{\bar{A}}=1/N_A$. Rather, all information is totally shared by them. Note that if $n$ is odd, according to Lemma \ref{lem:reduced}, $\rho_{\bar{A}}$ cannot be maximally mixed. Rather, $\rho_{\bar{A}}=P/N_A$, where $P=1-\ket{v}\bra{v}$ is a codimension-1 projection, $\ket{v}$ being the normalized eigenvector belonging to the eigenvalue 0. Note that it is the constraint that the total system is in a pure state that prevents $\rho_{\bar{A}}$ from being of maximal rank.
\end{remark}
If the bipartition is not balanced, one gets
\begin{theorem}[Smaller subsystems]
\label{pr:subpartition}
A state $\ket{\psi}$ maximally entangled with respect to the bipartition $(A,\bar{A})$, is maximally entangled with respect to every bipartitions $(B,\bar{B})$ with $B\subset A$.
\end{theorem}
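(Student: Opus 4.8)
The plan is to push the partial trace through in stages and exploit the tensor-product structure of the maximally mixed state. Write the complement of $B$ as the disjoint union $\bar{B}=S\setminus B=\bar{A}\cup(A\setminus B)$, with $\bar{A}$ and $A\setminus B$ disjoint. Since partial traces over disjoint subsystems compose, one has
\begin{equation*}
\rho_B=\tr_{\bar{B}}\ket{\psi}\bra{\psi}=\tr_{A\setminus B}\bigl(\tr_{\bar{A}}\ket{\psi}\bra{\psi}\bigr)=\tr_{A\setminus B}\rho_A .
\end{equation*}
By hypothesis $\ket{\psi}$ is maximally entangled with respect to $(A,\bar{A})$, so Corollary~\ref{cor:mbe} gives $\rho_A=\mathbf{1}_A/N_A$, the maximally mixed state on $\mathcal{H}_A$.

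Next I would use the factorization $\mathcal{H}_A=\mathcal{H}_B\otimes\mathcal{H}_{A\setminus B}$, under which $\mathbf{1}_A=\mathbf{1}_B\otimes\mathbf{1}_{A\setminus B}$ and $N_A=N_B\,N_{A\setminus B}$ with $N_{A\setminus B}=2^{\,n_A-n_B}$. Taking the partial trace over $A\setminus B$ then yields
\begin{equation*}
\rho_B=\tr_{A\setminus B}\frac{\mathbf{1}_B\otimes\mathbf{1}_{A\setminus B}}{N_A}=\frac{N_{A\setminus B}}{N_A}\,\mathbf{1}_B=\frac{\mathbf{1}_B}{N_B}.
\end{equation*}
Finally, since $B\subset A$ nonempty and $n_A\le n/2$ (as $(A,\bar{A})$ is a bipartition) imply $1\le n_B\le n_{\bar{B}}$, the pair $(B,\bar{B})$ is a genuine bipartition, and $\rho_B=\mathbf{1}_B/N_B$ says precisely, again by Corollary~\ref{cor:mbe} (equivalently by the equality case $\pi_B(\ket{\psi})=1/N_B$ of Lemma~\ref{lm:purityconstraint}), that $\ket{\psi}$ is maximally entangled with respect to $(B,\bar{B})$.

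There is no genuine obstacle here: the only points that require a modicum of care are the observation that tracing out $\bar{B}$ may be carried out by first tracing out $\bar{A}$ (legitimate because $\bar{A}$ and $A\setminus B$ are disjoint) and the bookkeeping of the dimensions $N_A=N_B N_{A\setminus B}$. If one prefers to avoid partial traces altogether, an alternative route starts from the characterization of Theorem~\ref{prop:mbes}: writing $\ket{\psi}=N_A^{-1/2}\sum_{l\in X^{n_A}}U^A\ket{l}_A\otimes U^{\bar{A}}\ket{l}_{\bar{A}}$, splitting the index $l=(l_B,l_{A\setminus B})$, and computing the Schmidt decomposition of $\ket{\psi}$ in the bipartition $(B,\bar{B})$, one finds that all $N_B$ Schmidt coefficients equal $1/N_B$; but the density-matrix argument above is the shorter one.
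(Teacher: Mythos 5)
Your proof is correct and follows essentially the same route as the paper, which likewise deduces the result from Corollary~\ref{cor:mbe} together with the identity $\rho_B=\tr_{\bar{B}\cap A}\rho_A=1/N_B$ (your $A\setminus B$ is exactly $\bar{B}\cap A$). You merely spell out the composition of partial traces and the dimension bookkeeping that the paper leaves implicit.
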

\begin{proof}
The Theorem is a consequence of Corollary \ref{cor:mbe} and the property that if subsystem $A$ has a maximally mixed density matrix, $\rho_A=1/N_A$, the density matrix of every smaller part $B\subset A$ is again maximally mixed,  $\rho_B=\tr_{\bar{B}\cap A}\rho_A=1/N_B$. \qed
\end{proof}
The explicit expressions of the reduced density matrix and its purity in terms of the Fourier coefficient of the state are given by the following
\begin{theorem}[Fourier expression of purity. Form 1]
\label{th:rhoA piA}
Given a bipartition $(A,\bar{A})$ and a state $\ket{\psi}\in\mathcal{H}_S$, one gets
\begin{equation}
\label{eq:rhoA}
\rho_A=  \sum_{k , l\in X^n}
z_{k} \bar{z}_{l}
\delta_{k_{\bar{A}},l_{\bar{A}}}
\ket{k_A} \bra{l_A}
\end{equation}
and
\begin{equation}
\pi_{A}(z)= \sum_{k,k',l,l' \in X^n} z_{k} z_{k'}
\bar{z}_{l}\bar{z}_{l'} \delta_{k_A, l'_A}
\delta_{k'_A,l_A} \delta_{k_{\bar{A}}, l_{\bar{A}}} \delta_{k'_{\bar{A}}, l'_{\bar{A}}}
\label{eq:puritygeneral},
\end{equation}
where $k_A=(k_i)_{i\in A}$, $\ket{l}_A=\bigotimes_{i\in A}\ket{l_i}_i\in\mathcal{H}_A$,
$\delta_{k, l}=\delta_{l,k}=\delta_{\{k\}}(l)$, and
\begin{equation}
\label{eq:normz}
z=(z_k)_{k\in X^n}\in\mathbb{S}^{2 N-1},\qquad  \mathbb{S}^{2N - 1}=\{z\in\mathbb{C}^{N} : \sum |z_k|^2=1\},
\end{equation}
with $N=2^n$, are the Fourier coefficients of $\ket{\psi}$ in the computational basis, introduced in Definition \ref{def:system}.
\end{theorem}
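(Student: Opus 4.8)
The plan is to compute both expressions directly in the computational basis, the only inputs being the factorization of that basis across the bipartition and the orthonormality of the computational basis of $\mathcal{H}_{\bar A}$. First I would note that, given $(A,\bar A)$, every multi-index $k\in X^n$ splits as $k=(k_A,k_{\bar A})$ with $k_A=(k_i)_{i\in A}$ and $k_{\bar A}=(k_i)_{i\in\bar A}$, so that $\ket{k}=\ket{k_A}_A\otimes\ket{k_{\bar A}}_{\bar A}\in\mathcal{H}_A\otimes\mathcal{H}_{\bar A}$. Substituting the expansion (\ref{eq:genrandomx}) into $\ket{\psi}\bra{\psi}$ and collecting the two tensor factors gives
\[
\ket{\psi}\bra{\psi}=\sum_{k,l\in X^n} z_k\,\bar z_l\;\ket{k_A}_A\bra{l_A}_A\otimes\ket{k_{\bar A}}_{\bar A}\bra{l_{\bar A}}_{\bar A}.
\]
Applying the partial trace $\tr_{\bar A}$ term by term and using $\tr_{\bar A}\!\left(\ket{k_{\bar A}}\bra{l_{\bar A}}\right)=\langle l_{\bar A}|k_{\bar A}\rangle=\delta_{k_{\bar A},l_{\bar A}}$, which holds precisely because $\{\ket{m}_{\bar A}\}_{m\in X^{n_{\bar A}}}$ is orthonormal, yields (\ref{eq:rhoA}).

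Next I would obtain the purity by squaring the operator (\ref{eq:rhoA}): writing the first factor with index pair $(k,l)$ and the second with $(k',l')$,
\[
\rho_A^2=\sum_{k,k',l,l'\in X^n} z_k\,\bar z_l\, z_{k'}\,\bar z_{l'}\;\delta_{k_{\bar A},l_{\bar A}}\,\delta_{k'_{\bar A},l'_{\bar A}}\;\ket{k_A}\langle l_A|k'_A\rangle\bra{l'_A},
\]
where $\langle l_A|k'_A\rangle=\delta_{l_A,k'_A}$ contracts the inner pair of subscripts. Taking the trace over $\mathcal{H}_A$ and using $\tr_A\!\left(\ket{k_A}\bra{l'_A}\right)=\langle l'_A|k_A\rangle=\delta_{k_A,l'_A}$ then gives $\pi_A(z)=\tr_A\rho_A^2$ in exactly the form (\ref{eq:puritygeneral}); the deltas may be reordered freely since $\delta$ is symmetric, which is why the product displayed in the statement looks slightly rearranged compared with the one produced directly by the computation. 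The stated domain (\ref{eq:normz}) is merely the normalization $\sum_k|z_k|^2=1$ of (\ref{eq:genrandomx}) rephrased as $z\in\mathbb{S}^{2N-1}\subset\mathbb{C}^{N}$ with $N=2^n$.

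There is no genuine obstacle here: the whole content is careful bookkeeping of the factorized multi-indices and of which index pairs get identified by the Kronecker deltas. As sanity checks I would verify that the right-hand side of (\ref{eq:rhoA}) is Hermitian and has trace $\sum_k|z_k|^2=1$, and that $\pi_A(z)$ is real, nonnegative and symmetric under exchanging the pair $(k,k')$ with $(l,l')$ --- all consistent with $\rho_A$ being a density matrix and with $1/N_A\le\pi_A(z)\le 1$ from Lemma \ref{lm:purityconstraint}. One could alternatively read off (\ref{eq:puritygeneral}) from $\pi_A=\sum_l\lambda_l^2$ via the singular-value picture of Lemma \ref{schmidt}, but the brute-force basis computation keeps everything explicitly in terms of the $z_k$, which is what the later sections need.
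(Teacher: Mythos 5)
Your proof is correct and follows the same route as the paper: the paper's own proof simply rewrites $\ket{\psi}$ according to the bipartition and "plugs it in" to the definitions of $\rho_A$ and $\pi_A=\tr_A\rho_A^2$, which is exactly the computation you carry out explicitly. The only difference is that you spell out the partial-trace and trace contractions that the paper leaves implicit.
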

\begin{proof}
State $|\psi\rangle$  can be written accordingly to the bipartition $(A,\bar{A})$ as
\begin{equation*}
|\psi\rangle = \sum_{k\in X^n} z_k \ket{k_A}_A\otimes\ket{k_{\bar{A}}}_{\bar{A}} ,
\label{eq:genrandomxbip}
\end{equation*}
By plugging this expression into that of $\rho_A$ and $\pi_A$ given in Definition
\ref{def:puritydef} the results  follow.
\qed
\end{proof}
\begin{remark}
\label{rm:permutation}
Consider a \emph{reference bipartition} into two blocks of contiguous qubits  $(C,\bar{C})$,  namely $C=\{1,2,\dots, n_A\}$, then
\begin{equation}
\pi_{C}(z)= \sum_{l,l'\in X^{n_A}}\sum_{m,m'\in X^{n_{\bar{A}}}} z_{(l,m)} z_{(l', m')}
\bar{z}_{(l', m)} \bar{z}_{(l, m')},
\label{eq:purityref}
\end{equation}
where $(l,m)=(l_1,\dots,l_{n_A},m_1,\dots,m_{n_{\bar{A}}})\in X^n$.

Note that $A=p(C)$ for a suitable permutation $p$ of $S$.
In fact, there is a bijection,
\begin{equation}
\Phi: p\in\mathcal{P}_n^{n_A}  \mapsto (p(C),p(\bar{C})),
\end{equation}
between the subset
\begin{equation}
\mathcal{P}_n^{n_A}=\{p\in \mathcal{P}_n | p(i)<p(i+1), 1\leq i \leq n-1, i\neq n_A \}
\end{equation}
of the permutation group $\mathcal{P}_n$  and the set of all bipartitions $(A,\bar{A})$ of dimension $n_A$. We can write
\begin{eqnarray}
|\psi\rangle &=& \sum_{k\in X^n} z_k \ket{k_{p(C)}}_{p(C)}\otimes\ket{k_{p(\bar{C})}}_{p(\bar{C})}
\nonumber\\
&=& \sum_{l\in X^{n_A}} \sum_{m\in X^{n_{\bar{A}}}} z_{p^{-1}((l,m))}
|l\rangle_C\otimes|m\rangle_{\bar{C}} ,
\label{eq:genrandomxbip1}
\end{eqnarray}
whence, for $A=p(C)$,
\begin{eqnarray}
\pi_{A} (z) =
\sum_{l,l'\in X^{n_A}}\sum_{m,m'\in X^{n_{\bar{A}}}}  z_{p^{-1}(l,m)} z_{p^{-1}(l', m')}
\bar{z}_{p^{-1}(l', m)}   \bar{z}_{p^{-1}(l, m')}.
\label{eq:purityp}
\end{eqnarray}
\end{remark}
For generic bipartitions we have the following
\begin{corollary}[Purity. Form 2]
\label{cor:piA1}
\begin{eqnarray}
\pi_A(z)&=&\sum_{k,h\in X^n} z_k \, z_{k\oplus h} \, \bar{z}_{k \oplus h_A} \, \bar{z}_{k \oplus h_{\bar{A}}}\nonumber \\
&=&\sum_{k\in X^S} \sum_{l\in X^A}\sum_{m\in X^{\bar{A}}}z_k\,  z_{k \oplus l \oplus m} \,\bar{z}_{k \oplus l} \, \bar{z}_{k \oplus m}
\label{eq:piAoplus}
\end{eqnarray}
were $X^A$ and $X^{\bar{A}}$ are viewed as subspaces of $X^S$ with the natural injection, and $a\oplus b= (a_i \oplus b_i)_{i \in
S}= (a_i + b_i \mod 2)_{i \in
S}$ is the XOR operation.
\end{corollary}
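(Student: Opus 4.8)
The plan is to start from the second expression for $\pi_A$ in Theorem~\ref{th:rhoA piA}, namely Eq.~(\ref{eq:puritygeneral}), and to carry out the change of summation variables that the four Kronecker deltas dictate. Writing each multi-index in $X^n$ as the pair of its restrictions to $A$ and $\bar A$ (so $k=(k_A,k_{\bar A})$, and similarly for $k',l,l'$), the deltas $\delta_{k_A,l'_A}\,\delta_{k'_A,l_A}\,\delta_{k_{\bar A},l_{\bar A}}\,\delta_{k'_{\bar A},l'_{\bar A}}$ determine $l$ and $l'$ uniquely once $k$ and $k'$ are fixed: they force $l=(k'_A,k_{\bar A})$ and $l'=(k_A,k'_{\bar A})$. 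Hence the quadruple sum collapses to $\pi_A(z)=\sum_{k,k'\in X^n} z_k\,z_{k'}\,\bar z_{(k'_A,k_{\bar A})}\,\bar z_{(k_A,k'_{\bar A})}$.

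Next I would, for each fixed $k$, replace the summation variable $k'$ by the difference $h=k\oplus k'\in X^n$ (a bijection, with $k'=k\oplus h$). Denoting by $h_A$ and $h_{\bar A}$ the restrictions of $h$ to $A$ and to $\bar A$, re-embedded in $X^S$ by padding with zeros on the complementary block (so that $h=h_A\oplus h_{\bar A}$ with disjoint supports), one checks directly that $k\oplus h_A$ coincides with $k$ on the $\bar A$-block and equals $k'_A$ on the $A$-block (there it reads $k_A\oplus(k_A\oplus k'_A)$), and symmetrically $k\oplus h_{\bar A}$ equals $k$ on $A$ and $k'_{\bar A}$ on $\bar A$; that is,
\begin{equation*}
k\oplus h_A=(k'_A,k_{\bar A}),\qquad k\oplus h_{\bar A}=(k_A,k'_{\bar A}).
\end{equation*}
Substituting $l=k\oplus h_A$, $l'=k\oplus h_{\bar A}$ and $k'=k\oplus h$ into the collapsed sum gives precisely the first line of (\ref{eq:piAoplus}).

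The second line then follows by splitting the sum over $h\in X^S$ according to the unique decomposition $h=h_A\oplus h_{\bar A}$ associated with the partition $S=A+\bar A$: since $h\mapsto(h_A,h_{\bar A})$ is a bijection from $X^S$ onto $X^A\times X^{\bar A}$ (with $X^A$ and $X^{\bar A}$ viewed as the subspaces of $X^S$ supported on $A$ and $\bar A$), writing $l=h_A$ and $m=h_{\bar A}$ turns $k\oplus h$ into $k\oplus l\oplus m$, $k\oplus h_A$ into $k\oplus l$, and $k\oplus h_{\bar A}$ into $k\oplus m$, which is the stated formula.

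The argument is essentially bookkeeping with the XOR operation; the one point deserving care is to keep fixed throughout the convention that $h_A$ denotes the restriction of $h$ to $A$ re-embedded in $X^S$ with zeros on $\bar A$ — so that the operations $k\oplus h_A$ and $k\oplus h_{\bar A}$ act on the intended blocks — after which no genuine difficulty remains.
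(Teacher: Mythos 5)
Your proof is correct and follows essentially the same route as the paper's: the substitution $k'=k\oplus h$, the collapse of the $l,l'$ sums via the four Kronecker deltas (which force $l=k\oplus h_A$ and $l'=k\oplus h_{\bar A}$), and the final identification $l=h_A$, $m=h_{\bar A}$ are exactly the steps in the paper, merely performed in a slightly different order. The block-by-block verification of $k\oplus h_A=(k'_A,k_{\bar A})$ is a useful explicit check that the paper leaves implicit.
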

\begin{proof}
By substituting in (\ref{eq:puritygeneral}) $k'=h\oplus k$, one gets
\begin{eqnarray*}
\pi_{A}(z)&=& \sum_{k,h,l,l' \in X^n} z_{k} z_{k\oplus h}
\bar{z}_{l}\bar{z}_{l'} \delta_{k_A, l'_A}
\delta_{k_A\oplus h_A,l_A} \delta_{k_{\bar{A}}, l_{\bar{A}}} \delta_{k_{\bar{A}}\oplus h_{\bar{A}}, l'_{\bar{A}}}\\
&=&\sum_{k,h,l,l' \in X^n} z_{k} z_{k\oplus h}
\bar{z}_{l}\bar{z}_{l'}
\delta_{k\oplus h_A,l} \delta_{k\oplus h_{\bar{A}}, l'}
\\
&=&\sum_{k,h \in X^n} z_{k} z_{k\oplus h}
\bar{z}_{k\oplus h_A}\bar{z}_{k\oplus h_{\bar{A}}},
\end{eqnarray*}
which is the first desired equality. The second equality follows by the identifications $l=h_A \in X^A$ and $m=h_{\bar{A}}\in X^{\bar{A}}$.
\qed
\end{proof}
\begin{remark}
The space $X^n$ is an $n$-dimensional vector space over the finite field $X=\mathbb{Z}_2$ with the standard addition and multiplication mod $2$. In this respect the XOR operation is nothing but the usual sum of  vectors of $X^S$ and $X^A$ and $X^{\bar{A}}$ are  vector subspaces.
\end{remark}
\begin{remark}
\label{rm:splitA}
Note that (\ref{eq:piAoplus}) can be split into three parts
\begin{eqnarray}
\pi_{A}(z)&=& \sum_{k\in X^n} |z_k|^4 + \sum_{k\in X^S} \sum_{l\in X^A_*} |z_k|^2 |z_{k\oplus l }|^2 + \sum_{k\in X^S} \sum_{m\in X^{\bar{A}}_*} |z_k|^2 |z_{k\oplus m }|^2
\nonumber\\
&+&\sum_{k\in X^S} \sum_{l\in X^A_*}\sum_{m\in X^{\bar{A}}_*} \Re\left[ z_k\,  z_{k \oplus l \oplus m} \,\bar{z}_{k \oplus l} \, \bar{z}_{k \oplus m}\right],
\label{eq:splitA}
\end{eqnarray}
where $X^A_*= X^A \backslash \{0\}$.

It is an easy exercise to check that the number of  monomials $|z_k|^4$ is $\mathcal{N}^{(1)}_{\mathrm{tot}}=2^n$,
the number of  monomials
 $|z_k|^2 |z_h|^2$ with $k\neq h$ is
\begin{equation}
\mathcal{N}^{(2)}_{\mathrm{tot}}=2^{n}\left(2^{n_A}+2^{n_{\bar{A}}}-2\right),
\end{equation}
and the number of monomials  $\Re[ z_k z_l \bar{z}_m \bar{z}_n]$ with distinct indices is
\begin{eqnarray}
 \mathcal{N}^{(4)}_{\mathrm{tot}}= 2^{n}
\left(2^{n_A}-1\right)\left(2^{n_{\bar{A}}}-1\right).
\end{eqnarray}
One gets $\mathcal{N}^{(1)}_{\mathrm{tot}}+\mathcal{N}^{(2)}_{\mathrm{tot}}+\mathcal{N}^{(4)}_{\mathrm{tot}}=2^{2n}$, in agreement with the first equality in (\ref{eq:piAoplus}). Moreover, the number of \emph{distinct} monomials of the various types are
$\mathcal{N}^{(1)}=\mathcal{N}^{(1)}_{\mathrm{tot}}$, $\mathcal{N}^{(2)}=\mathcal{N}^{(2)}_{\mathrm{tot}}/2$, and $\mathcal{N}^{(4)}=\mathcal{N}^{(4)}_{\mathrm{tot}}/4$.
\end{remark}

\section{Multipartite entanglement}

The aim of this section is to generalize the ideas of the previous section to the case of multipartite entanglement. We require that the information in a maximally multipartite entangled state be distributed as well as possible.
In the ideal case this would mean that
\begin{definition}[Perfect MMES]
\label{def:perfect}
A  state $\ket{\psi}$ maximally entangled with respect to every bipartition $(A,\bar{A})$ is called a \emph{perfect maximally multipartite entangled state (perfect MMES)}.
\end{definition}
\begin{theorem}[Perfect MMES characterization]
\label{prop:perfect}
The following statements are equivalent:
\begin{enumerate}
\item $\ket{\psi}\in\mathcal{H}_S$ is a perfect MMES;
\item $\rho_A=1 /N_A$ for every subsystem $A\subset S$ with $n_A \leq n/2$;
\item $\rho_A=1 /N_A$ for every maximal subsystem $A\subset S$; 
\item $\mathcal{E}_A(\ket{\psi})= \mathcal{L}_A(\ket{\psi})= 1$ for every balanced bipartition $(A,\bar{A})$;
\item $\pi_A(\ket{\psi})= 1/N_A$ for every balanced bipartition $(A,\bar{A})$.
\end{enumerate}
\end{theorem}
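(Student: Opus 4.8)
The plan is to prove the chain of implications $1 \Leftrightarrow 2$, $2 \Rightarrow 3$, $3 \Rightarrow 2$, and the equivalence of $2$ (restricted to maximal subsystems), $4$ and $5$, so that all five statements become mutually equivalent. The starting point is Corollary~\ref{cor:mbe}, which already tells us that a state is maximally entangled with respect to a bipartition $(A,\bar{A})$ if and only if $\rho_A = 1/N_A$. Unwinding Definition~\ref{def:perfect} through the one-to-one correspondence between bipartitions and subsets $A\subset S$ with $n_A\le n/2$, statement~1 says precisely that $\rho_A = 1/N_A$ for every such $A$; this is statement~2 verbatim, so $1\Leftrightarrow 2$ is immediate.

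Next I would handle $2\Leftrightarrow 3$. The implication $2\Rightarrow 3$ is trivial, since maximal subsystems (those with $n_A=[n/2]$) are a special case of subsystems with $n_A\le n/2$. For $3\Rightarrow 2$, the key tool is Theorem~\ref{pr:subpartition} (``Smaller subsystems'') together with the remark in its proof that if $\rho_A = 1/N_A$ then $\rho_B = \tr_{\bar{B}\cap A}\rho_A = 1/N_B$ for every $B\subset A$: given any $B\subset S$ with $n_B\le n/2$, we may enlarge $B$ to a maximal subsystem $A\supset B$ (just add $[n/2]-n_B$ further qubits), apply hypothesis~3 to get $\rho_A = 1/N_A$, and then trace down to conclude $\rho_B = 1/N_B$. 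Hence statement~2 follows. One should note here the subtlety flagged in Remark~\ref{rem:idea}: when $n$ is odd, $\bar A$ with $n_{\bar A}=[(n+1)/2]>[n/2]$ is \emph{not} required to be maximally mixed, so the restriction $n_A\le n/2$ in statements~2 and~3 is essential and must be carried correctly through the argument.

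Finally, the equivalence of $3$, $4$ and $5$ is essentially a restatement over balanced bipartitions. By Definition~\ref{def:mbes} and Theorem~\ref{th:linentbounds}, for a fixed balanced bipartition $(A,\bar A)$ we have $\mathcal{E}_A(\ket{\psi})=1 \Leftrightarrow \mathcal{L}_A(\ket{\psi})=1 \Leftrightarrow \ket{\psi}$ maximally entangled w.r.t.\ $(A,\bar A)$, and by Lemma~\ref{lm:purityconstraint} the latter is equivalent to $\pi_A(\ket{\psi})=1/N_A$; by Corollary~\ref{cor:mbe} it is also equivalent to $\rho_A=1/N_A$. Since balanced bipartitions correspond exactly to maximal subsystems, quantifying over all of them turns these pointwise equivalences into the equivalence of statements~3, 4 and~5. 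Assembling the pieces, $1\Leftrightarrow 2\Leftrightarrow 3\Leftrightarrow 4\Leftrightarrow 5$.

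I do not expect a genuine obstacle here — every ingredient has already been established earlier in the paper. The only point requiring care is the enlargement step in $3\Rightarrow 2$ and the correct bookkeeping of the cardinality constraints when $n$ is odd, so that one never inadvertently asserts that the larger party of an unbalanced bipartition is maximally mixed.
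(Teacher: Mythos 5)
Your proposal is correct and follows essentially the same route as the paper: the equivalence $1\Leftrightarrow 2$ via Definition~\ref{def:perfect} and Corollary~\ref{cor:mbe}, the equivalence $2\Leftrightarrow 3$ via Theorem~\ref{pr:subpartition} (enlarging to a maximal subsystem and tracing down), and the equivalence of $3$, $4$, $5$ via Definition~\ref{def:mbes}, Theorem~\ref{th:linentbounds} and Lemma~\ref{lm:purityconstraint}. Your explicit attention to the cardinality constraint for odd $n$ is a sensible precaution but does not change the argument.
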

\begin{proof}
Equivalence between 1 and 2 follows from Definition \ref{def:perfect} and Corollary \ref{cor:mbe}. Statements 2 and 3 are equivalent by Theorem \ref{pr:subpartition}. Equivalence between 3 and 4 follows from Definition \ref{def:mbes} and Theorem \ref{th:linentbounds}. Finally, 4 and 5 are equivalent by virtue of Lemma \ref{lm:purityconstraint}.
\qed
\end{proof}
\begin{remark}
Note that the requirement that a given balanced bipartition $(A,\bar{A})$ be in a maximally entangled state could collide with the same requirement for a different balanced bipartition $(B,\bar{B})$, with $B\neq A$. Indeed, the local unitaries $U^A$ and $U^{\bar{A}}$ in Theorem \ref{prop:mbes} are in general nonlocal for the bipartition $(B,\bar{B})$. Thus, at variance with the bipartite case,  a perfect MMES cannot exist. This motivates the following definition.
\end{remark}
\begin{definition}[MMES]
\label{def:MMES} Let us define the \emph{potential of multipartite entanglement} as
\begin{equation}
\label{eq:pime}
\pi_{\mathrm{ME}}(\ket{\psi})
=\left( \begin{array}{c}  n \\  {[n/2]}  \end{array}  \right)^{-1}
\sum_{|A|=[n/2]}\pi_A(\ket{\psi}).
\end{equation}
A \emph{maximally multipartite entangled state (MMES)} $\ket{\varphi}$ is a minimizer of $\pi_{\mathrm{ME}}$,
\begin{equation}
\label{eq:minimizer}
\pi_{\mathrm{ME}}(\ket{\varphi})=\min \{\pi_{\mathrm{ME}}(\ket{\psi})\; | \; \ket{\psi}\in \mathcal{H}_S, \bra{\psi}\psi\rangle=1\}.
\end{equation}
\end{definition}
The potential $\pi_{\mathrm{ME}}$  measures the average bipartite entanglement over all possible  $\left( \begin{array}{c}  n \\  {[n/2]}  \end{array}  \right)$ balanced bipartition and thus inherits the bounds on the purity given in Lemma \ref{lm:purityconstraint}, namely,
\begin{lemma}[Bounds on $\pi_{\mathrm{ME}}$]
\label{lm:pmeconstraint}
The potential of multipartite entanglement satisfies
\begin{equation}
1/N_A\le\pi_{\mathrm{ME}}(\ket{\psi})\le 1,
\end{equation}
with $N_A=2^{[n/2]}$, for all normalized $\ket{\psi}\in\mathcal{H}_S$. \qed
\end{lemma}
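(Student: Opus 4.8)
The plan is to read off the result directly from the defining formula (\ref{eq:pime}): $\pi_{\mathrm{ME}}(\ket{\psi})$ is nothing but the arithmetic mean of the bipartite purities $\pi_A(\ket{\psi})$ over the $\binom{n}{[n/2]}$ balanced bipartitions, and each such purity has already been bounded in Lemma \ref{lm:purityconstraint}. First I would observe that every \emph{balanced} bipartition $(A,\bar{A})$ has, by definition, $n_A=[n/2]$, so the relevant subsystem dimension is the \emph{same} for all terms in the sum, namely $N_A=2^{n_A}=2^{[n/2]}$. Consequently Lemma \ref{lm:purityconstraint} applies uniformly and gives $1/N_A\le\pi_A(\ket{\psi})\le 1$ for every bipartition appearing in (\ref{eq:pime}), with one and the same constant $N_A$.

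Then I would conclude by the elementary fact that a convex combination of real numbers all lying in a fixed interval again lies in that interval: the weights $\binom{n}{[n/2]}^{-1}$ are nonnegative and sum to one, since $\binom{n}{[n/2]}^{-1}\sum_{|A|=[n/2]}1=1$. Summing the inequalities $1/N_A\le\pi_A(\ket{\psi})\le 1$ over the $\binom{n}{[n/2]}$ balanced bipartitions and dividing by $\binom{n}{[n/2]}$ yields $1/N_A\le\pi_{\mathrm{ME}}(\ket{\psi})\le 1$, which is the claim.

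There is no real obstacle here; the entire content is inherited from Lemma \ref{lm:purityconstraint}, the only point worth a word being that (\ref{eq:pime}) is a genuine average so the bounds survive the averaging. If one wished to say a little more, the equality cases of Lemma \ref{lm:purityconstraint} would give that $\pi_{\mathrm{ME}}(\ket{\psi})=1/N_A$ iff $\ket{\psi}$ is maximally entangled with respect to every balanced bipartition — i.e.\ a perfect MMES in the sense of Definition \ref{def:perfect} and Theorem \ref{prop:perfect} — and $\pi_{\mathrm{ME}}(\ket{\psi})=1$ iff $\ket{\psi}$ is separable with respect to every balanced bipartition; this refinement is, however, not needed for the stated bounds.
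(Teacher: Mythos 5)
Your proposal is correct and coincides with the paper's own (implicit) argument: the paper introduces the lemma precisely by remarking that $\pi_{\mathrm{ME}}$ is the average of the purities over balanced bipartitions and therefore inherits the bounds of Lemma \ref{lm:purityconstraint}. Your additional observation that all balanced bipartitions share the same $N_A=2^{[n/2]}$, so the averaging preserves a single common interval, is exactly the point that makes this immediate.
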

The upper and lower bounds are characterized by the following
\begin{theorem}[Optimizing states]
\label{th:fullyfactorized}
The upper bound $\pi_{\mathrm{ME}}(z)=1$ is attained by the \emph{fully factorized states}, whose Fourier coefficients in the computational basis $z=(z_k)_{k\in X^n}$ are $z_k=\prod_{i\in S} \alpha_{k_i}^i$, with $|\alpha_0^i|^2 + |\alpha_1^i|^2=1$. On the other hand, the lower bound $\pi_{\mathrm{ME}}(z)=1/N_A$, if attained,  would correspond to a perfect MMES.
\end{theorem}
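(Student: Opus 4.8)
The plan is to exploit that, by Definition~\ref{def:MMES}, $\pi_{\mathrm{ME}}(z)$ is a convex combination of the purities $\pi_A(z)$ over all balanced bipartitions, each of which obeys $1/N_A\le\pi_A(z)\le 1$ by Lemma~\ref{lm:purityconstraint} with $N_A=2^{[n/2]}$. Hence $\pi_{\mathrm{ME}}(z)$ equals one of the two endpoints if and only if \emph{every} term $\pi_A(z)$ equals that same endpoint, for every balanced $(A,\bar A)$. Both halves of the theorem will follow from this elementary observation together with the earlier characterizations of the extremal single-bipartition cases.

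For the upper bound, $\pi_{\mathrm{ME}}(z)=1$ is thus equivalent to $\pi_A(z)=1$ for all balanced $(A,\bar A)$, i.e.\ (Lemma~\ref{lm:purityconstraint} and Corollary~\ref{cor:mbe}) to $\ket\psi$ being separable across every balanced bipartition, equivalently $\rho_A$ a rank-one projection for every maximal $A$. If $\ket\psi$ is fully factorized, $z_k=\prod_{i\in S}\alpha^i_{k_i}$, then $\ket\psi=\bigotimes_{i\in S}\ket{\alpha^i}$ with $\ket{\alpha^i}=\alpha^i_0\ket 0+\alpha^i_1\ket 1$, so $\rho_A=\bigotimes_{i\in A}\ket{\alpha^i}\bra{\alpha^i}$ has rank one and $\pi_A(z)=1$ for every $A$; averaging gives $\pi_{\mathrm{ME}}(z)=1$, so the bound is attained. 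For the converse I would peel off one qubit at a time: fix $j\in S$ and (for $n\ge 2$) choose a balanced $A\ni j$; separability across $(A,\bar A)$ gives $\ket\psi=\ket\phi_A\otimes\ket\chi_{\bar A}$. Pick any $j'\in\bar A$ and set $A'=(A\setminus\{j\})\cup\{j'\}$, again balanced, so $\rho_{A'}$ is rank one; but since $\ket\psi$ is a product across $(A,\bar A)$ one has $\rho_{A'}=\rho^{\phi}_{A\setminus\{j\}}\otimes\rho^{\chi}_{\{j'\}}$, and a tensor product of states is rank one only if each factor is, whence $\rho^{\phi}_{A\setminus\{j\}}$ is pure. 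By the Schmidt decomposition (Lemma~\ref{schmidt}, together with Lemma~\ref{lem:reduced}) this forces $\ket\phi=\ket{\alpha^j}_j\otimes\ket{\phi''}_{A\setminus\{j\}}$, hence $\rho_j=\rho^{\phi}_j=\ket{\alpha^j}\bra{\alpha^j}$ is pure. Thus all one-qubit reduced density matrices of $\ket\psi$ are pure, and a straightforward induction (split off qubit $1$ as $\ket{\alpha^1}_1\otimes\ket{\psi'}$, observe the remaining marginals are those of the residual $(n-1)$-qubit pure state $\ket{\psi'}$, and repeat) yields $\ket\psi=\bigotimes_{i\in S}\ket{\alpha^i}$, i.e.\ $z_k=\prod_i\alpha^i_{k_i}$. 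This identifies the maximizers exactly with the fully factorized states.

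For the lower bound, $\pi_{\mathrm{ME}}(z)=1/N_A$ forces $\pi_A(z)=1/N_A$ for every balanced bipartition, which is precisely statement (5) of Theorem~\ref{prop:perfect}; hence, whenever this bound is attained, the minimizer is a perfect MMES (and conversely such a state saturates it, since by Theorem~\ref{prop:perfect} all its balanced purities equal $1/N_A$).

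I expect the only genuinely nonroutine point to be the converse half of the upper-bound characterization — upgrading ``separable across every balanced bipartition'' to ``fully factorized'' — where the overlapping-bipartition / one-qubit-peeling step is the essential device (a single bipartition is far from enough, as a product of Bell pairs across the ``wrong'' cut shows). Everything else is immediate from the convexity remark and Lemmas~\ref{lm:purityconstraint}, \ref{lm:pmeconstraint} and Theorem~\ref{prop:perfect}.
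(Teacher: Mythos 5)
Your proposal is correct, and its skeleton coincides with the paper's: both reduce the upper bound to ``$\pi_A(z)=1$ for every balanced bipartition, hence separable across every balanced bipartition'' via Lemma~\ref{lm:purityconstraint}, and both dispatch the lower bound by quoting Theorem~\ref{prop:perfect}. Where you diverge is in upgrading ``separable across every balanced cut'' to ``fully factorized.'' The paper invokes (without proof) the common-refinement identity: a simultaneous product across $(A,\bar A)$ and $(B,\bar B)$ factorizes across the four intersections $A\cap B$, $A\cap\bar B$, $\bar A\cap B$, $\bar A\cap\bar B$, and then uses that each singleton $\{i\}$ is an intersection of maximal subsets. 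You instead peel off one qubit at a time: comparing the cut $(A,\bar A)$ with the overlapping balanced cut $(A',\bar A')$, $A'=(A\setminus\{j\})\cup\{j'\}$, and using that a tensor product of density matrices has rank one only if each factor does, you conclude every single-qubit marginal is pure and then factorize by induction via the Schmidt decomposition. The two arguments exploit the same underlying device (overlapping balanced bipartitions), but yours is more explicit about the step the paper merely asserts, at the cost of being slightly longer; the paper's refinement identity is more general and gives the factorization over all four blocks in one stroke. Both handle the degenerate small-$n$ cases correctly, and your convexity observation at the outset is exactly the paper's implicit first line.
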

\begin{proof}
$\pi_{\mathrm{ME}}(\ket{\psi})= 1$ iff $\pi_{A}(\ket{\psi})=1$ for all balanced bipartitions $(A,\bar{A})$. By Lemma \ref{lm:purityconstraint} this happens iff $\ket{\psi}$ is separable with respect to all balanced bipartitions. Now, note that $\ket{\psi}=\ket{v_1}_A\otimes\ket{v_2}_{\bar{A}}$ and $\ket{\psi}=\ket{v_3}_B\otimes\ket{v_4}_{\bar{B}}$
iff $\ket{\psi}=\ket{v_{13}}_{A\cap B}\otimes\ket{v_{14}}_{A\cap\bar{B}}\otimes\ket{v_{23}}_{\bar{A}\cap B}\otimes\ket{v_{24}}_{\bar{A}\cap\bar{B}}$. Since for all $i\in S$, $\{i\}=\cap_r A_r$  for a suitable set $\{A_r\}$ of maximal subsystems, 
one has $\ket{\psi}=\bigotimes_{i\in S} \ket{v_i}_i$ with $\bra{v_i}v_i\rangle=1$. Thus $\ket{\psi}=\sum_{k\in X^{n}}\ket{k}\prod_{i\in S} \bra{k_i}v_i\rangle$, and the first part of the theorem follows by setting $\alpha_{k_i}^i=\bra{k_i}v_i\rangle$.
Concerning the second part, $\pi_{\mathrm{ME}}(\ket{\psi})=1/N_A$ iff $\pi_A(\ket{\psi})=1/N_A$ for all balanced bipartitions $(A,\bar{A})$. By Theorem \ref{prop:perfect} this happens iff $\ket{\psi}$ is a perfect MMES.
\qed
\end{proof}
\begin{remark}
In words, a perfect MMES is characterized by a multipartite entanglement that is maximum, in the sense that it saturates the minimum of the purity \emph{and} such a minimum does not depend on the bipartition.
However, if the minimum of the potential of multipartite entanglement is strictly larger than the lower bound in Lemma \ref{lm:pmeconstraint}, i.e.\ $\min \pi_{\mathrm{ME}}>1/N_A$, it may happen that different bipartitions yield different values of $\pi_A$, some of them smaller than $\min \pi_{\mathrm{ME}}$, some larger.
In such a situation, one can seek those states among the minimizers that have the lowest variance. This quest can be recast as an optimization problem \cite{MMES}. We will not elaborate further on this issue.
\end{remark}
Now we will examine in more details the potential of multipartite entanglement and we will determine its form.
\begin{theorem}[Fourier expression of $\pi_{\mathrm{ME}}$. Form 1]
\label{th:piDelta}
Given a state $\ket{\psi}\in\mathcal{H}_S$, the potential of multipartite entanglement has the following expression in terms of its Fourier coefficients in the computational basis $z=(z_k)_{k\in X^n}$
\begin{eqnarray}
\pi_{\mathrm{ME}}(z)
=  \sum_{k,k',l,l' \in X^n} \Delta(k, k'; l, l';[n/2])\, z_{k}\, z_{k'}\,
\bar{z}_{l}\, \bar{z}_{l'}\, ,
\label{eq:pimeDelta}
\end{eqnarray}
with a \emph{coupling function}
\begin{eqnarray}
\Delta(k,k';l,l';n_A)
=\frac{1}{2} \tilde\Delta(k,k';l,l';n_A) +\frac{1}{2} \tilde\Delta(k',k;l,l';n_A).
\label{eq:Deltadef1}
\end{eqnarray}
where
\begin{eqnarray}
\tilde \Delta(k,k';l,l';n_A)
=\left(\begin{array}{c}n \\n_A\end{array}\right)^{-1} \sum_{|A|=n_A} \delta_{k_{A}, l'_{A}} \delta_{k'_{A},l_{A}}
\delta_{k_{\bar{A}}, l_{\bar{A}}} \delta_{k'_{\bar{A}}, l'_{\bar{A}}} .
\label{eq:Deltadef}
\end{eqnarray}
\end{theorem}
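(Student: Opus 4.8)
The plan is to start from the averaged definition of the potential,
\[
\pi_{\mathrm{ME}}(z)=\binom{n}{[n/2]}^{-1}\sum_{|A|=[n/2]}\pi_A(z),
\]
and simply substitute the Fourier expression for each $\pi_A$ obtained in Theorem~\ref{th:rhoA piA}, namely Eq.~(\ref{eq:puritygeneral}),
\[
\pi_A(z)=\sum_{k,k',l,l'\in X^n} z_k z_{k'}\bar z_l\bar z_{l'}\,
\delta_{k_A,l'_A}\delta_{k'_A,l_A}\delta_{k_{\bar A},l_{\bar A}}\delta_{k'_{\bar A},l'_{\bar A}}.
\]
Interchanging the finite sums over $A$ and over the index quadruple $(k,k',l,l')$ is legitimate and immediately produces $\pi_{\mathrm{ME}}(z)=\sum_{k,k',l,l'}\tilde\Delta(k,k';l,l';[n/2])\,z_k z_{k'}\bar z_l\bar z_{l'}$ with $\tilde\Delta$ exactly as in Eq.~(\ref{eq:Deltadef}). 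So the only real content of the theorem is the symmetrization: one must argue that $\tilde\Delta$ can be replaced by its average $\Delta$ over the transposition $k\leftrightarrow k'$ without changing the quartic form.

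The symmetrization step is the heart of the argument. The monomials $z_k z_{k'}\bar z_l\bar z_{l'}$ are invariant under the exchange $k\leftrightarrow k'$ (since the $z$'s commute), so in the sum $\sum_{k,k',l,l'}\tilde\Delta(k,k';l,l';n_A)z_k z_{k'}\bar z_l\bar z_{l'}$ I may relabel the summation indices $k\leftrightarrow k'$ in exactly half the terms; averaging the original sum with the relabelled one gives
\[
\pi_{\mathrm{ME}}(z)=\sum_{k,k',l,l'}\tfrac12\big[\tilde\Delta(k,k';l,l';n_A)+\tilde\Delta(k',k;l,l';n_A)\big]z_k z_{k'}\bar z_l\bar z_{l'},
\]
which is precisely Eq.~(\ref{eq:pimeDelta}) with the definition (\ref{eq:Deltadef1}) of $\Delta$. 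One should remark (and I would include this as a short observation) why the symmetrization is desirable rather than merely cosmetic: $\tilde\Delta$ is \emph{not} symmetric under $k\leftrightarrow k'$ — indeed its defining product of Kronecker deltas treats $k$ and $k'$ asymmetrically, pairing $k_A$ with $l'_A$ but $k'_A$ with $l_A$ — whereas the natural coupling of a Hamiltonian quartic form ought to reflect the manifest $k\leftrightarrow k'$ (and $l\leftrightarrow l'$) symmetry of the monomials; the averaged $\Delta$ restores the $k\leftrightarrow k'$ symmetry, and one can check it simultaneously enjoys the $l\leftrightarrow l'$ symmetry as well, so that $\Delta$ is the genuinely symmetric coupling function of the quartic Hamiltonian.

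There is essentially no analytic obstacle here: all sums are finite, so there are no convergence or interchange-of-limits issues, and the computation reduces to bookkeeping with Kronecker deltas over subsets $A$ of $S$. The one point that merits care — and which I regard as the main ``obstacle'' only in the sense of being the place where a careless proof would go wrong — is keeping straight, in Eq.~(\ref{eq:puritygeneral}), which pairs of indices are matched on $A$ and which on $\bar A$, so that the identification of the averaged sum with $\Delta$ as spelled out in (\ref{eq:Deltadef1})--(\ref{eq:Deltadef}) is exact and not off by a transposition of $l$ and $l'$. Once the substitution from Theorem~\ref{th:rhoA piA} and the relabelling $k\leftrightarrow k'$ are both written out explicitly, the statement follows, and one closes with the remark above on the symmetry properties of $\Delta$.
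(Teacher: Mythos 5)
Your proposal is correct and follows exactly the paper's own route: the paper's proof is the one-line "plug Eq.~(\ref{eq:puritygeneral}) into (\ref{eq:pime}) and symmetrize," and your substitution, interchange of the finite sums, and relabelling $k\leftrightarrow k'$ of the summation indices is precisely that argument spelled out. Your closing observation on the $k\leftrightarrow k'$ and $l\leftrightarrow l'$ symmetries of $\Delta$ matches the remark and lemma the paper places immediately after the theorem.
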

\begin{proof}
The result follows by plugging the expression (\ref{eq:puritygeneral}) of $\pi_A$ given by Theorem \ref{th:rhoA piA} into (\ref{eq:pime}) of Definition \ref{def:MMES}, and by symmetrizing.
\qed
\end{proof}
\begin{remark}
In the spirit of Remark \ref{rm:permutation}, it is easy to see that the average can be extended to the whole permutation group, yielding
\begin{eqnarray}
\tilde\Delta(k,k';l,l';|C|)
=\frac{1}{n!}\sum_{p\in\mathcal{P}_n}  \delta_{k_{p(C)}, l'_{p(C)}} \delta_{k'_{p(C)},l_{p(C)}}
\delta_{k_{p(\bar{C})}, l_{p(\bar{C})}} \delta_{k'_{p(\bar{C})}, l'_{p(\bar{C})}} .
\end{eqnarray}
\end{remark}
\begin{remark}
Note that $\tilde\Delta$ would have served as well as $\Delta$ as a coupling function, namely
\begin{equation}
\pi_{\mathrm{ME}}(z)
=  \sum_{k,k',l,l' \in X^n} \tilde \Delta(k, k'; l, l';[n/2])\, z_{k}\, z_{k'}\,
\bar{z}_{l}\, \bar{z}_{l'}\, .
\end{equation}
However, while
$$\tilde \Delta(k,k';l,l';n_A)= \tilde \Delta(l,l';k,k';n_A),$$
which ensures the reality of $\pi_{\mathrm{ME}}$, one gets
$$\tilde \Delta(k',k;l,l';n_A)= \tilde \Delta(k,k';l',l ;n_A)=\tilde \Delta(k,k';l',l ;n_{\bar{A}}).$$
Thus, $\tilde \Delta(k, k'; l, l';[n/2])$ is a symmetric function of the pairs $(k,k')$ and $(l,l')$ only
if $n$ is even, when $[n/2]=n/2=n_A= n_{\bar{A}}$. Since $\pi_{\mathrm{ME}}$ does not depend on the antisymmetric
part of the coupling function, we shall use the symmetric coupling function $\Delta$.
We summarize its properties, which easily derive from this Remark in the following
\end{remark}
\begin{lemma}[Coupling function symmetries]
The coupling function $\Delta: X^{2 n}\times X^{2 n}\times  \mathbb{N}\to\mathbb{Q}$  has the following symmetries
\begin{equation}
\Delta(k,k';l,l';n_A)= \Delta(k',k; l,l';n_A)= \Delta(l,l'; k,k';n_A),
\end{equation}
for every  $k, k', l, l' \in X^{n}$ and every $n_A$ with $1\leq n_A \leq n-1$. \qed
\end{lemma}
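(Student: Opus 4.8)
The plan is to trace both identities back to a single building block. For a subset $A\subset S$ with $|A|=n_A$ put
\[
P_A(k,k';l,l')=\delta_{k_A,l'_A}\,\delta_{k'_A,l_A}\,\delta_{k_{\bar A},l_{\bar A}}\,\delta_{k'_{\bar A},l'_{\bar A}},
\]
so that, by Eq.~(\ref{eq:Deltadef}), $\tilde\Delta(k,k';l,l';n_A)=\binom{n}{n_A}^{-1}\sum_{|A|=n_A}P_A(k,k';l,l')$, and, by Eq.~(\ref{eq:Deltadef1}),
\[
\Delta(k,k';l,l';n_A)=\binom{n}{n_A}^{-1}\sum_{|A|=n_A}Q_A(k,k';l,l'),\qquad
Q_A(k,k';l,l')=\tfrac12\bigl[P_A(k,k';l,l')+P_A(k',k;l,l')\bigr].
\]
Since neither relabelling appearing in the statement moves the subset $A$, it suffices to prove the two corresponding identities for $Q_A$ at fixed $A$ (taking $1\le n_A\le n-1$, so that $A$ and $\bar A$ are both nonempty and $\binom{n}{n_A}\neq0$), and then sum over $|A|=n_A$ and divide by $\binom{n}{n_A}$.

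First I would dispatch $\Delta(k,k';l,l';n_A)=\Delta(k',k;l,l';n_A)$: interchanging $k$ and $k'$ simply swaps the two summands in the definition of $Q_A$, hence $Q_A(k',k;l,l')=Q_A(k,k';l,l')$. This is nothing but the remark that $\Delta$ is, by construction, the symmetrization of $\tilde\Delta$ over the transposition $k\leftrightarrow k'$.

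For $\Delta(k,k';l,l';n_A)=\Delta(l,l';k,k';n_A)$ I would use only the symmetry $\delta_{a,b}=\delta_{b,a}$ of the Kronecker delta. Expanding $Q_A(l,l';k,k')=\tfrac12[P_A(l,l';k,k')+P_A(l',l;k,k')]$ and reordering the (commuting) delta factors in each product, one checks termwise that $P_A(l,l';k,k')=P_A(k,k';l,l')$ and $P_A(l',l;k,k')=P_A(k',k;l,l')$; hence $Q_A(l,l';k,k')=Q_A(k,k';l,l')$, and summing over $|A|=n_A$ gives the identity. (Equivalently, these are precisely the relations for $\tilde\Delta$ recorded in the Remark preceding the Lemma, which one may simply invoke.) There is no genuine obstacle here: the only thing to watch is the bookkeeping of which of the four Kronecker deltas lands on the $A$-block and which on the $\bar A$-block after each relabelling, and that is purely mechanical.
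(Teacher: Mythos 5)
Your proof is correct and takes essentially the same route as the paper, which disposes of the lemma by noting that the symmetries ``easily derive'' from the relations for $\tilde\Delta$ recorded in the preceding Remark (themselves consequences of the symmetry of the Kronecker delta) together with the symmetrization $k\leftrightarrow k'$ built into the definition of $\Delta$. Your termwise verification on the summands $P_A$ at fixed $A$ is just a slightly more explicit rendering of that same bookkeeping, and it checks out.
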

The following definition and the subsequent lemma are the main ingredients for determining the explicit expression of the coupling function $\Delta$.
\begin{definition}[Admissible set]
Let us define the \emph{admissible set} as the set of all quadruples of sequences that yield a nonvanishing contribution to the  function $\tilde \Delta$, that is
\begin{eqnarray}
Q_{n_A}&=&\{(k,k',l,l')\in X^{4n} | k_A=l'_A,\, k'_A=l_A, \text{ and } k_{\bar{A}}=l_{\bar{A}},\, k'_{\bar{A}}=l'_{\bar{A}},
\nonumber\\
& &\phantom{\{(k,k',l,l')\in X^{4n} |} \text{ for some } (A,\bar{A}) \text{ with } |A|=n_A\}.
\end{eqnarray}
\end{definition}
Obviously,
\begin{equation}
Q_{n_A}\subset Q=\bigcup_{0\leq s\leq n} Q_{s}.
\end{equation}
\begin{lemma}[Admissible set characterization]
\label{lm:ker}
The set Q is the kernel of the function $q: X^{4n}\to X^n$,
\begin{equation}
q(k,k',l,l')=\big((k\oplus l) \vee (k' \oplus l')\big) \wedge \big((k \oplus l') \vee (k'\oplus l)\big),
\end{equation}
where $a\oplus b= (a_i \oplus b_i)_{i \in
S}= (a_i + b_i \mod 2)_{i \in
S}$ is the XOR operation, $a\vee b=(a_i \vee b_i)_{i\in S}=(a_i + b_i + a_i b_i \mod 2)_{i\in S}$
the OR operation, and $a\wedge b=(a_i \wedge b_i)_{i\in S}=(a_i b_i)_{i\in S}$ the AND
operation.
\end{lemma}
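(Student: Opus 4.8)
The plan is to verify the claimed identity one coordinate at a time, using that $\oplus$, $\vee$ and $\wedge$ all act componentwise on $X^n$. Thus $q(k,k',l,l')=0$ if and only if $q(k,k',l,l')_i=0$ for every $i\in S$, so it suffices to understand a single coordinate and then intersect over $i$.

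First I would rewrite membership in $Q$ as a pointwise condition. By definition $(k,k',l,l')\in Q$ iff there is a subset $A\subseteq S$ on which the ``swap'' relations $k_A=l'_A$, $k'_A=l_A$ hold while on $\bar{A}$ the ``identity'' relations $k_{\bar{A}}=l_{\bar{A}}$, $k'_{\bar{A}}=l'_{\bar{A}}$ hold (the case $A=\emptyset$, resp.\ $A=S$, giving $Q_0$, resp.\ $Q_n$). Read index by index, this forces each $i$ to satisfy the local swap condition $(k_i=l'_i$ and $k'_i=l_i)$ when $i\in A$, and the local identity condition $(k_i=l_i$ and $k'_i=l'_i)$ when $i\in\bar{A}$. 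Conversely, if every index satisfies at least one of the two local conditions, one may simply take $A=\{\,i\in S: k_i=l'_i \text{ and } k'_i=l_i\,\}$: the swap relations then hold on $A$ by construction, and on $\bar{A}$ the swap condition fails, hence the identity condition holds. Therefore
\[
(k,k',l,l')\in Q \iff \forall\, i\in S:\ \big(k_i=l'_i\wedge k'_i=l_i\big)\ \vee\ \big(k_i=l_i\wedge k'_i=l'_i\big).
\]

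It then remains to check that the right-hand side is exactly $q(k,k',l,l')_i=0$. Fixing $i$ and writing $a=k_i$, $b=k'_i$, $c=l_i$, $d=l'_i$, the $i$-th coordinate of $q$ is $\big((a\oplus c)\vee(b\oplus d)\big)\wedge\big((a\oplus d)\vee(b\oplus c)\big)$. Using the elementary facts, valid in $X=\mathbb{Z}_2$, that $x\oplus y=0\iff x=y$, that $x\vee y=0\iff x=y=0$, and that $u\wedge v=0\iff u=0$ or $v=0$, this coordinate vanishes iff $(a\oplus c)\vee(b\oplus d)=0$ or $(a\oplus d)\vee(b\oplus c)=0$, i.e.\ iff ($a=c$ and $b=d$) or ($a=d$ and $b=c$) --- precisely the disjunction above. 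Combining with the previous paragraph gives $Q=\ker q$.

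I expect the only delicate point to be the ``localization'' step: one must argue that the existential quantifier over bipartitions in the definition of $Q$ really reduces to a per-coordinate disjunction. The two things to observe are that an index at which \emph{both} local conditions hold (necessarily $k_i=k'_i=l_i=l'_i$) imposes no obstruction, since it may be placed in either block, whereas an index at which \emph{neither} holds makes the quadruple inadmissible for every choice of $A$. Once this is settled, the remainder is a finite Boolean computation in $\mathbb{Z}_2$, carried out above.
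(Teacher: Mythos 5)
Your proof is correct and follows essentially the same route as the paper's: rewrite the equalities as vanishing XORs, observe that membership in $Q$ reduces to a per-index disjunction of the ``swap'' and ``identity'' conditions, and identify that disjunction with the vanishing of the $i$-th coordinate of $q$. The only difference is presentational --- you make explicit the localization step (choosing $A=\{i: k_i=l'_i \text{ and } k'_i=l_i\}$) that the paper's chain of set equalities passes over silently, which is a worthwhile clarification but not a different argument.
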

\begin{proof}
The proof consists in a straightforward application of the above defined binary operations:
\begin{eqnarray}
Q &=&\{(k,k',l,l') | k_A=l'_A,\, k'_A=l_A,\, k_{\bar{A}}=l_{\bar{A}},\, k'_{\bar{A}}=l'_{\bar{A}}, \text{ for some } A\subset S\}\nonumber\\
&=&\{ k_i=l'_i,\, k'_i=l_i,\, k_j=l_j,\, k'_j=l'_j, \text{ with } i \in A, j \in \bar{A}\}\nonumber\\
&=&\{ k_i \oplus l'_i=0,\, k'_i\oplus l_i=0,\, k_j\oplus l_j=0,\, k'_j \oplus l'_j, \text{ with } i \in A, j \in \bar{A}\}\nonumber\\
&=&\{ (k_i \oplus l'_i) \vee (k'_i\oplus l_i)=0,\, (k_j\oplus l_j) \vee (k'_j \oplus l'_j)=0,  i \in A, j \in \bar{A}\}\nonumber\\
&=&\{ \big((k_i \oplus l'_i) \vee (k'_i\oplus l_i)\big) \wedge \big((k_i\oplus l_i) \vee (k'_i \oplus l'_i)\big)=0,  i \in S\}\nonumber\\
&=&\{ \big((k \oplus l') \vee (k' \oplus l)\big) \wedge \big((k \oplus l) \vee (k' \oplus l')\big)=0\}
\nonumber\\
&=& \mathop{\text{ker}} q . \nonumber
\end{eqnarray}
\qed
\end{proof}
\begin{remark}
Note that  $X^n$ can be viewed as a product ring  (of $n$ copies of $X=\mathbb{Z}_2$) with the  addition and multiplication mod $2$ defined componentwise, as usual. In this respect, the XOR operation is the sum $a+b$ and the AND operation is the product $a \cdot b$ of elements $a$ and $b$ of the product ring $X^n$. The OR operation is nothing but $a+b+a\cdot b$.
\end{remark}

After having proven all preparatory lemmata, now we come to the main result of this section that establishes an explicit form for the coupling function of the potential of multipartite entanglement.
\begin{theorem}[Coupling function]
\label{th:Delta}
The coupling function $\Delta$ has the following expression
\begin{equation}
\Delta(k,k';l,l';n_A)= g\big((k\oplus l) \vee (k' \oplus l'), (k \oplus l') \vee (k'\oplus l);n_A\big),
\end{equation}
where
\begin{equation}
 g(a,b;n_A) = \delta_0(a\wedge b) \; \hat{g}(|a|,|b|;n_A),
\label{eq:gdef}
\end{equation}
with $|a|=\sum_{i\in S} a_i$, and
\begin{eqnarray}
\hat{g}(s,t ;n_A)=
\frac{1}{2}\left(\begin{array}{c}n \\{n_A}\end{array}\right)^{-1}  \left[ \left(\begin{array}{c}
     n-s-t    \\
      {n_A}-s
\end{array} \right) +\left(\begin{array}{c}
     n-s-t    \\
      {n_A}-t
\end{array} \right)\right] .
\label{eq:hatg}
\end{eqnarray}
\end{theorem}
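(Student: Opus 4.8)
The plan is to fix the quadruple $(k,k',l,l')$, evaluate $\tilde\Delta$ by a direct count over bipartitions, and then symmetrize to obtain $\Delta$. Throughout write $a=(k\oplus l)\vee(k'\oplus l')$ and $b=(k\oplus l')\vee(k'\oplus l)$, the two vectors of $X^n$ already appearing in Lemma~\ref{lm:ker}.

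First I would rewrite the four Kronecker deltas in the definition \eqref{eq:Deltadef} of $\tilde\Delta$ in XOR form, exactly as in the proof of Lemma~\ref{lm:ker}: for a fixed bipartition $(A,\bar A)$ the summand $\delta_{k_A,l'_A}\delta_{k'_A,l_A}\delta_{k_{\bar A},l_{\bar A}}\delta_{k'_{\bar A},l'_{\bar A}}$ equals $1$ iff $(k\oplus l')_i=(k'\oplus l)_i=0$ for every $i\in A$ and $(k\oplus l)_i=(k'\oplus l')_i=0$ for every $i\in\bar A$, i.e.\ iff $b_i=0$ on $A$ and $a_i=0$ on $\bar A$. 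Equivalently, the summand is the characteristic function of the event $\mathrm{supp}(a)\subseteq A$ and $\mathrm{supp}(b)\subseteq\bar A$.

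Second I would count the subsets $A\subseteq S$ with $|A|=n_A$ meeting these two inclusions. Such an $A$ must contain the $s=|a|$ coordinates of $\mathrm{supp}(a)$ and must avoid the $t=|b|$ coordinates of $\mathrm{supp}(b)$, so it is obtained by adjoining $n_A-s$ further coordinates chosen arbitrarily among the $n-s-t$ coordinates outside $\mathrm{supp}(a)\cup\mathrm{supp}(b)$. For at least one such $A$ to exist the two supports must be disjoint, $a\wedge b=0$, which produces the factor $\delta_0(a\wedge b)$; when they are disjoint, the number of admissible $A$ is $\binom{n-s-t}{n_A-s}$, with the usual convention that the binomial vanishes when $n_A-s\notin\{0,\dots,n-s-t\}$. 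Dividing by $\binom{n}{n_A}$ yields $\tilde\Delta(k,k';l,l';n_A)=\binom{n}{n_A}^{-1}\delta_0(a\wedge b)\binom{n-s-t}{n_A-s}$.

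Finally I would observe that the transposition $k\leftrightarrow k'$ carries $a$ to $(k'\oplus l)\vee(k\oplus l')=b$ and $b$ to $a$, hence swaps $s$ and $t$ while fixing $a\wedge b$; therefore $\tilde\Delta(k',k;l,l';n_A)=\binom{n}{n_A}^{-1}\delta_0(a\wedge b)\binom{n-s-t}{n_A-t}$, and averaging the two expressions as prescribed by \eqref{eq:Deltadef1} reproduces precisely $\delta_0(a\wedge b)\,\hat g(|a|,|b|;n_A)=g(a,b;n_A)$, which is the claimed identity. The only delicate point is the first step: correctly pairing each Kronecker condition with the restriction to $A$ or $\bar A$ of the appropriate OR of XOR-differences, and recognizing that the four constraints together say exactly that $A$ separates $\mathrm{supp}(a)$ from $\mathrm{supp}(b)$. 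Once this is in place the counting and the symmetrization are routine.
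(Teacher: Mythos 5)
Your proposal is correct and follows essentially the same route as the paper's proof: both reduce the product of Kronecker deltas to the conditions $\mathrm{supp}(a)\subseteq A$, $\mathrm{supp}(b)\subseteq\bar{A}$ for $a=(k\oplus l)\vee(k'\oplus l')$ and $b=(k\oplus l')\vee(k'\oplus l)$, count the admissible bipartitions as $\delta_0(a\wedge b)\binom{n-s-t}{n_A-s}$, and symmetrize over $k\leftrightarrow k'$ to obtain $\hat{g}$. Your presentation is slightly more direct (evaluating the indicator per bipartition rather than first passing through the admissible set $Q_{n_A}$), but the decomposition, the counting argument, and the symmetrization are the same.
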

\begin{proof}
Let
$$a=(k\oplus l) \vee (k' \oplus l') \quad \text{and}\quad  b=(k \oplus l') \vee (k'\oplus l).$$
By Lemma \ref{lm:ker}, $(k,k',l,l')\in Q$ iff $a\wedge b=0$, and thus
$S_1=\{i\in S | a_i=b_i=1\}=\emptyset$. Therefore,
$$S=S_0+ A_1+B_1,$$
where $S_0=\{i\in S | a_i=b_i=0\}$, $A_1=\{i\in S | a_i=1\}$,  and $B_1=\{i \in S | b_i=1\}$.
Moreover,  it is easy to see that $a_i =0$ iff $k_i=l_i$ and $k'_i=l'_i$, with $i\in S$.
Thus,  if $(k,k',l,l')\in Q$, then $k_{\bar{A}_1}=l_{\bar{A}_1}$ and $k'_{\bar{A}_1}=l'_{\bar{A}_1}$, and, analogously,  $k_{\bar{B}_1}=l'_{\bar{B}_1}$ and $k'_{\bar{B}_1}=l_{\bar{B}_1}$.
As a consequence, $(k,k',l,l')\in Q_{n_A}$ iff there is a bipartition $(A,\bar{A})$, with $|A|=n_A$, such that
$$\bar{A}\subset \bar{A}_1 \quad \text{and}\quad A \subset \bar{B}_1,$$
that is
$A_1 \subset A$ and $B_1 \subset \bar{A}$. In other words, $(k,k',l,l')\in Q_{n_A}$ iff
$(k,k',l,l')\in Q$ and $|A_1|=|a|\leq n_A$, $|B_1|=|b|\leq n_{\bar{A}}$. Therefore, we can write
$$(k,k',l,l')\in Q_{n_A} \quad\text{iff} \quad a\wedge b=0, \text{ with }  |a|\leq n_A, \, |b|\leq n_{\bar{A}} ,$$
whence
$$\tilde\Delta(k,k';l,l';n_A)=\delta_0(a\wedge b)\, \delta_{[0,n_A]}(|a|)\,  \delta_{[0,n_{\bar{A}}]}(|b|)\, \left(\begin{array}{c}n \\{n_A}\end{array}\right)^{-1} \#(k,k',l,l'),$$
where
$\#(k,k',l,l')$ is the number of terms  of  the sum (\ref{eq:Deltadef}) that contribute to the  function $\tilde\Delta$ in Theorem \ref{th:piDelta}.

Now, according to the above conclusions, for a given admissible quadruple $(k,k',l,l')\in Q_{n_A}$  the number of terms $\#(k,k',l,l')$ is given by the number of bipartitions $(A,\bar{A})$ with $|A|=n_A$ and with
$A\subset \bar{B}_1=A_1+S_0$ and $\bar{A} \subset \bar{A}_1 =B_1+S_0$.
Since $A\cap\bar{A}=\emptyset$, $A_1\subset A$ and $B_1 \subset \bar{A}$, parties $A$ and $\bar{A}$ contend only for $S_0=S_0\cap A + S_0\cap \bar{A}$, namely
$$A=A_1+S_0\cap A \quad \text{and}\quad \bar{A} = B_1+S_0\cap \bar{A}.$$
Thus, their number equals the number of ways that $|A\backslash A_1|$ objects can be chosen from among $|S_0|$ objects. But $|A\backslash A_1|=|A|-|A_1|=n_A- |a|$ and $|S_0|=|S|-|A_1|-|B_1|=n-|a|-|b|$. Therefore,
$$\#(k,k',l,l')=\left(\begin{array}{c}
     n-|a|-|b|    \\
      {n_A}-|a|
\end{array} \right) .$$
By putting all together, and by stipulating that the binomial coefficient is zero when its arguments are negative, we obtain the stated form of the functions $\tilde\Delta$ and its symmetric part $\Delta$.
\qed
\end{proof}
\begin{remark}
\label{th:Delta1}
It is not difficult to see that an alternative form of $\hat{g}$ is the following
\begin{eqnarray}
\hat{g}(s,t ;n_A)=
\frac{1}{2}\left(\begin{array}{c}n \\ s, t\end{array}\right)^{-1}  \left[ \left(\begin{array}{c}
     {n_A}    \\
      s
\end{array} \right)\left(\begin{array}{c}
     {n_{\bar{A}}}    \\
      t
\end{array} \right) +\left(\begin{array}{c}
     {n_A}    \\
      t
\end{array} \right)\left(\begin{array}{c}
     {n_{\bar{A}}}    \\
      s
\end{array} \right)\right] ,
\label{eq:hatg1}
\end{eqnarray}
where
$$\left(\begin{array}{c}n \\ s, t\end{array}\right)= \frac{n!}{s!\, t!\, (n-s-t)!}$$
is the multinomial coefficient.
\end{remark}

By using the explicit form of the coupling function $\Delta$ one can give  the potential of multipartite entanglement a different form that has the advantage of being a sum over three indices only.
\begin{theorem}[$\pi_{\mathrm{ME}}$. Form 2]
\label{th:pme1}
The potential of multipartite entanglement can be written as
\begin{equation}
\pi_{\mathrm{ME}}(z)=\sum_{k,l,m\in X^n} g(l,m;[n/2])\, \Re\left[z_k\, z_{k\oplus l\oplus m}\, \bar{z}_{k\oplus l}\,\bar{z}_{k\oplus m}\right].
\label{eq:pimeoplus}
\end{equation}
\end{theorem}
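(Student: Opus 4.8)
The plan is to substitute the closed form of the coupling function established in Theorem~\ref{th:Delta} into the Form~1 expression of the potential from Theorem~\ref{th:piDelta}, and then reparametrize the four-fold sum by a single three-fold sum. First I would rename the dummy indices $k,k',l,l'$ of Eq.~(\ref{eq:pimeDelta}) to $a,b,c,d$, so that $\pi_{\mathrm{ME}}(z)=\sum_{a,b,c,d}\Delta(a,b;c,d;[n/2])\,z_a z_b\bar z_c\bar z_d$, and record two facts about the support of $\Delta$. By the definition~(\ref{eq:gdef}) of $g$, the coefficient $\Delta(a,b;c,d;n_A)$ vanishes unless $\big((a\oplus c)\vee(b\oplus d)\big)\wedge\big((a\oplus d)\vee(b\oplus c)\big)=0$, which by Lemma~\ref{lm:ker} is exactly the condition $(a,b,c,d)\in Q$. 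Moreover, every quadruple in the admissible set $Q$ satisfies $a\oplus b\oplus c\oplus d=0$: writing the defining relations of $Q_{n_A}$ componentwise as in the proof of Lemma~\ref{lm:ker}, one has $a_i=d_i$, $b_i=c_i$ for $i\in A$ and $a_i=c_i$, $b_i=d_i$ for $i\in\bar A$, so $a_i\oplus b_i\oplus c_i\oplus d_i=0$ in either case. Hence the sum is effectively restricted to the hyperplane $V=\{(a,b,c,d)\in X^{4n}\,|\,a\oplus b\oplus c\oplus d=0\}$.

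Next I would parametrize $V$ by the bijection $(k,l,m)\in X^{3n}\mapsto(k,\,k\oplus l\oplus m,\,k\oplus l,\,k\oplus m)\in V$, whose inverse is $k=a$, $l=a\oplus c$, $m=a\oplus d$. Under this substitution the monomial becomes $z_a z_b\bar z_c\bar z_d=z_k\,z_{k\oplus l\oplus m}\,\bar z_{k\oplus l}\,\bar z_{k\oplus m}$, while $a\oplus c=b\oplus d=l$ and $a\oplus d=b\oplus c=m$, so the two arguments of $g$ inside $\Delta$ collapse through $l\vee l=l$ and $m\vee m=m$, giving $\Delta(a,b;c,d;[n/2])=g(l,m;[n/2])$. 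This produces $\pi_{\mathrm{ME}}(z)=\sum_{k,l,m\in X^n} g(l,m;[n/2])\,z_k\,z_{k\oplus l\oplus m}\,\bar z_{k\oplus l}\,\bar z_{k\oplus m}$. The final step is to observe that $\pi_{\mathrm{ME}}$ is real, being by Definition~\ref{def:MMES} the average of the purities $\pi_A$, each real by Lemma~\ref{lm:purityconstraint}; since a real quantity equals its own real part, $\Re$ may be taken inside the sum, which is Eq.~(\ref{eq:pimeoplus}).

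An alternative route, bypassing Theorem~\ref{th:Delta}, is to average the Form~2 purity of Corollary~\ref{cor:piA1} directly over balanced bipartitions: for fixed $l,m\in X^n$ the number of balanced bipartitions $(A,\bar A)$ with $\mathrm{supp}\,l\subseteq A$ and $\mathrm{supp}\,m\subseteq\bar A$ equals $\delta_0(l\wedge m)\binom{n-|l|-|m|}{[n/2]-|l|}$, precisely the count carried out in the proof of Theorem~\ref{th:Delta}; dividing by $\binom{n}{[n/2]}$ and symmetrizing the resulting coefficient in $l\leftrightarrow m$ --- a symmetry that leaves each monomial $z_k z_{k\oplus l\oplus m}\bar z_{k\oplus l}\bar z_{k\oplus m}$ invariant --- yields exactly $g(l,m;[n/2])$ by the definition~(\ref{eq:gdef})--(\ref{eq:hatg}), after which one again passes to real parts.

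I expect the only delicate point --- the closest thing to an obstacle --- to be the bookkeeping in the first paragraph: one must be certain that the four-index sum is genuinely supported on the hyperplane $a\oplus b\oplus c\oplus d=0$ (equivalently that $\Delta$ is supported on $Q$ and that $Q\subseteq V$), so that the three-variable parametrization is a bijection onto the support of the summand and no term is lost or double counted. Everything else is the routine simplification $l\vee l=l$ and the passage to real parts.
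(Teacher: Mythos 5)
Your main argument is correct and is essentially the paper's own proof: the paper likewise observes that the four-index sum of Eq.~(\ref{eq:pimeDelta}) is supported on the admissible set $Q$, on which $k\oplus k'\oplus l\oplus l'=0$, writes $\Delta(k,k';l,l';n_A)=\delta_{k,k'\oplus l\oplus l'}\,g(k'\oplus l',k'\oplus l;n_A)$, reparametrizes by three XOR variables, and concludes via the reality of $\pi_{\mathrm{ME}}$. Your alternative route---averaging Corollary~\ref{cor:piA1} over balanced bipartitions and counting, for fixed $l,m$ with $l\wedge m=0$, the $\binom{n-|l|-|m|}{[n/2]-|l|}$ bipartitions with $\mathrm{supp}\,l\subseteq A$ and $\mathrm{supp}\,m\subseteq\bar{A}$---is also sound and slightly more direct, but it reuses the same combinatorial count as the proof of Theorem~\ref{th:Delta}.
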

\begin{proof}
Since $k\oplus 0=k$,
\begin{eqnarray*}
Q &=&\{(k,k',l,l') | k_A=l'_A,\, k'_A=l_A,\, k_{\bar{A}}=l_{\bar{A}},\, k'_{\bar{A}}=l'_{\bar{A}}, \text{ for some } A\subset S\}\\
&=&\{k_A=l'_A,\, k'_A\oplus l_A=0,\, k_{\bar{A}}=l_{\bar{A}},\, k'_{\bar{A}}\oplus l'_{\bar{A}}=0\}\\
&=&\{k_A=l'_A\oplus k'_A\oplus l_A,\, k'_A\oplus l_A=0,\, k_{\bar{A}}=l_{\bar{A}}\oplus k'_{\bar{A}}\oplus l'_{\bar{A}},\, k'_{\bar{A}}\oplus l'_{\bar{A}}=0\}\\
&=&\{k=k'\oplus l\oplus l',\, k'_A\oplus l_A=0,\,  k'_{\bar{A}}\oplus l'_{\bar{A}}=0\}\\
&=&\{k=k'\oplus l\oplus l',\, (k'\oplus l) \wedge (k' \oplus l')=0\}.
\end{eqnarray*}
Moreover, since $k\oplus k=0$, substituting for $k=k'\oplus l\oplus l'$ one gets
 $a= (k\oplus l) \vee (k' \oplus l')=  k' \oplus l'$
 and $b=(k\oplus l') \vee (k' \oplus l)=k' \oplus l$. Therefore,
$$\Delta(k,k';l,l';n_A)= \delta_{k,k'\oplus l\oplus l'}\,
g(k' \oplus l', k'\oplus l;n_A),$$
whence
$$
\pi_{\mathrm{ME}}(z)
=  \sum_{k',l,l' \in X^n} g(k' \oplus l', k'\oplus l;[n/2])\, z_{k'\oplus l\oplus l'}\, z_{k'}\,
\bar{z}_{l}\, \bar{z}_{l'}\, .
$$
By setting $l'=l\oplus k'$ and $l=m\oplus k'$, one obtains
$$
\pi_{\mathrm{ME}}(z)
=  \sum_{k',l,m \in X^n} g(l, m;[n/2])\, z_{k'\oplus l \oplus m}\, z_{k'}\,
\bar{z}_{k'\oplus m}\, \bar{z}_{k'\oplus l}\, .
$$
The thesis follows from the reality of $\pi_{\mathrm{ME}}(z)$.
\qed
\end{proof}
In analogy with the bipartite case examined in Remark \ref{rm:splitA}, the sum in (\ref{eq:pimeoplus}) can be split into three terms.
\begin{corollary}[$\pi_{\mathrm{ME}}$. Form 3]
\label{th:pme2}
\begin{eqnarray}
\pi_{\mathrm{ME}}(z)&=& \sum_{k\in X^n} |z_k|^4 + 2 \sum_{k\in X^n} \sum_{l\in X^n_*} \hat{g}(|l|,0;[n/2])\, |z_k|^2 |z_{k\oplus l }|^2
\nonumber\\
&+&\sum_{k\in X^n} \sum_{l,m\in X^n_*}
 g(l,m;[n/2])\, \Re\left[z_k\, z_{k\oplus l\oplus m}\, \bar{z}_{k\oplus l}\,\bar{z}_{k\oplus m}\right],
 \label{eq:splitpme}
 \end{eqnarray}
where $X^n_*= X^n \backslash \{0\}$.
\end{corollary}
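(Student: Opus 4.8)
The plan is to obtain (\ref{eq:splitpme}) directly from Theorem~\ref{th:pme1}, Eq.~(\ref{eq:pimeoplus}), by partitioning the double sum over $(l,m)\in X^n\times X^n$ according to whether each of $l$ and $m$ equals the zero sequence $0\in X^n$. This yields four pieces: $(l,m)=(0,0)$; $(l\in X^n_*,\,m=0)$; $(l=0,\,m\in X^n_*)$; and $(l,m)\in X^n_*\times X^n_*$. The last piece is literally the third term on the right-hand side of (\ref{eq:splitpme}), so the task reduces to identifying the first three.

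For $(l,m)=(0,0)$ the summand collapses to $g(0,0;[n/2])\,\Re[z_k z_k \bar z_k \bar z_k]=g(0,0;[n/2])\,|z_k|^4$. I would then evaluate $g(0,0;n_A)$: since $0\wedge 0=0$, definition (\ref{eq:gdef}) gives $g(0,0;n_A)=\hat g(0,0;n_A)$, and (\ref{eq:hatg}) yields $\hat g(0,0;n_A)=\frac12\binom{n}{n_A}^{-1}\bigl[\binom{n}{n_A}+\binom{n}{n_A}\bigr]=1$. Hence this piece equals $\sum_{k\in X^n}|z_k|^4$, the first term of (\ref{eq:splitpme}).

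For the piece with $l\in X^n_*$, $m=0$ I would use $k\oplus l\oplus m=k\oplus l$ and $k\oplus m=k$, so the summand becomes $g(l,0;[n/2])\,\Re[z_k z_{k\oplus l}\bar z_{k\oplus l}\bar z_k]=g(l,0;[n/2])\,|z_k|^2|z_{k\oplus l}|^2$, which is manifestly real. Because $l\wedge 0=0$, (\ref{eq:gdef}) gives $g(l,0;n_A)=\hat g(|l|,0;n_A)$, producing one copy of the middle term of (\ref{eq:splitpme}). The piece with $l=0$, $m\in X^n_*$ is handled identically with $m$ in place of $l$; using that $\hat g(s,t;n_A)$ is symmetric in its first two arguments --- immediate from (\ref{eq:hatg}), or from (\ref{eq:hatg1}) --- one has $g(0,m;n_A)=\hat g(0,|m|;n_A)=\hat g(|m|,0;n_A)$, and relabelling the dummy index $m\to l$ shows this piece coincides with the previous one. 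Summing the two gives the coefficient $2$ in the middle term, which completes the argument.

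I do not expect any genuine obstacle here: the statement is a bookkeeping rearrangement of Theorem~\ref{th:pme1}, and the only nontrivial inputs are the normalization $\hat g(0,0;[n/2])=1$ and the symmetry $\hat g(s,t;n_A)=\hat g(t,s;n_A)$, both of which follow at once from the closed form (\ref{eq:hatg}).
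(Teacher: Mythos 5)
Your proposal is correct and follows essentially the same route as the paper's own proof: split the sum in Eq.~(\ref{eq:pimeoplus}) according to whether $l$ and $m$ vanish, use $g(0,0;[n/2])=\hat{g}(0,0;[n/2])=1$ for the quartic diagonal terms, and use $\delta_0(l\wedge 0)=1$ together with $g(l,0;[n/2])=g(0,l;[n/2])=\hat{g}(|l|,0;[n/2])$ to merge the two mixed pieces into the factor of $2$. The only difference is that you spell out the binomial evaluations explicitly, which the paper leaves implicit.
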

\begin{proof}
The monomials $|z_k|^4$ are obtained from (\ref{eq:pimeoplus}) when $l=m=0$. In such a case $g(0,0;[n/2])=\hat{g}(0,0;[n/2])=1$. On the other hand,  the monomials $|z_k|^2 |z_h|^2$ with $k\neq h$ are obtained when either $l=0$ or $m=0$. In such a case, since $\delta_0(l\wedge 0)=1$ for all $l\in X^n$,  $g(l,0;[n/2])=g(0,l;[n/2])=\hat{g}(|l|,0;[n/2])$.
\qed
\end{proof}
A measure of the complexity of the potential of multipartite entanglement is given by the number of its terms. In particular, as we will see in the following, the crucial ones are the interfering monomials $\Re[ z_k z_l \bar{z}_m \bar{z}_n]$.
\begin{theorem}[Number of terms in $\pi_{\mathrm{ME}}$]
\label{th:pme3e}
Consider $\pi_{\mathrm{ME}}(z)$. The number of distinct monomials $|z_k|^4$  and the number of distinct monomials
 $|z_k|^2 |z_h|^2$ with $k\neq h$ are
\begin{equation}
\mathcal{N}^{(1)}=2^n, \qquad
\mathcal{N}^{(2)}=2^{2n-2}-2^{n-1}+ \frac{2^{n}}{3+(-1)^n}
\left(\begin{array}{c} n    \\ {[n/2]} \end{array} \right),
\label{eq:N1 N2}
\end{equation}
respectively. The number of distinct monomials  $\Re[ z_k z_l \bar{z}_m \bar{z}_n]$ with distinct indices is
\begin{eqnarray}
 \mathcal{N}^{(4)}= 2^{n-3} \sum_{1\leq s, t \leq \left[\frac{n+1}{2}\right]}
\left(\begin{array}{c} n  \\ s \end{array} \right)
\left(\begin{array}{c} n-s  \\ t \end{array} \right)
= 2^{n-3} \sum_{1\leq s, t \leq \left[\frac{n+1}{2}\right]}
\left(\begin{array}{c} n  \\ s, t \end{array} \right).
\label{eq:N4}
\end{eqnarray}
\end{theorem}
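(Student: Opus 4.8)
The plan is to start from the split form of the potential given in Corollary~\ref{th:pme2}, Eq.~(\ref{eq:splitpme}), which already separates the monomials of $\pi_{\mathrm{ME}}(z)$ into the three families $|z_k|^4$, $|z_k|^2|z_h|^2$ with $k\ne h$, and the interfering $\Re[z_kz_l\bar z_m\bar z_n]$ with four pairwise distinct indices; in each family one then counts the monomials whose coefficient does not vanish. The first count is immediate, since the first sum in (\ref{eq:splitpme}) runs over the $2^n$ sequences $k\in X^n$ and each gives one distinct monomial $|z_k|^4$ with coefficient $1$, so $\mathcal{N}^{(1)}=2^n$. For $\mathcal{N}^{(2)}$, a monomial $|z_k|^2|z_h|^2$ with $k\ne h$ occurs in the middle sum of (\ref{eq:splitpme}) iff $\hat g(|k\oplus h|,0;[n/2])\ne0$; specialising the alternative form (\ref{eq:hatg1}) to $t=0$ gives $\hat g(s,0;n_A)=\tfrac12\binom{n}{s}^{-1}\big[\binom{n_A}{s}+\binom{n_{\bar{A}}}{s}\big]$, which, with the convention that a binomial coefficient vanishes when its lower index exceeds its upper index, is nonzero exactly for $1\le s\le n_{\bar{A}}=[(n+1)/2]$. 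Since for fixed $k$ there are $\binom{n}{s}$ sequences $h$ with $|k\oplus h|=s$, counting unordered pairs yields
\begin{equation*}
\mathcal{N}^{(2)}=2^{n-1}\sum_{s=1}^{[(n+1)/2]}\binom{n}{s} ,
\end{equation*}
and I would then evaluate this partial binomial sum using $\binom{n}{s}=\binom{n}{n-s}$, treating $n$ even (where the central term $\binom{n}{n/2}$ has to be restored once) and $n$ odd separately, obtaining the closed expression (\ref{eq:N1 N2}); the factor $2^n/(3+(-1)^n)$ merely encodes the two values $2^n/4$ and $2^n/2$.

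For $\mathcal{N}^{(4)}$ I would work from Form~2, Eq.~(\ref{eq:pimeoplus}): the interfering monomials are $\Re[z_k\,z_{k\oplus l\oplus m}\,\bar z_{k\oplus l}\,\bar z_{k\oplus m}]$, and their four indices $k,\,k\oplus l,\,k\oplus m,\,k\oplus l\oplus m$ are pairwise distinct exactly when $l,m\ne0$ and $l\ne m$. Such a monomial has nonzero coefficient $g(l,m;[n/2])$ iff in addition $l\wedge m=0$ and $\hat g(|l|,|m|;[n/2])\ne0$; by (\ref{eq:hatg1}) the latter amounts to $|l|\le n_A$ and $|m|\le n_{\bar{A}}$, or $|m|\le n_A$ and $|l|\le n_{\bar{A}}$, and since $l\wedge m=0$ forces $|l|+|m|\le n$ a short check reduces this to $1\le|l|,|m|\le[(n+1)/2]$. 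Choosing the disjoint supports of $l$ and $m$ of sizes $s$ and $t$ in $\binom{n}{s}\binom{n-s}{t}$ ways and letting $k\in X^n$ be arbitrary, the number of triples $(k,l,m)\in X^n\times X^n_*\times X^n_*$ with pairwise distinct indices and nonzero coefficient is $2^n\sum_{1\le s,t\le[(n+1)/2]}\binom{n}{s}\binom{n-s}{t}$, the terms with $s+t>n$ dropping out automatically. It then remains to divide by the number of triples that produce a given monomial.

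That number is always $8$, and the transparent way to establish it is the following. The four indices of a monomial form the coset $k\oplus V$ of the two-dimensional $\mathbb{Z}_2$-subspace $V=\{0,l,m,l\oplus m\}$, and since $\Re[z_az_b\bar z_c\bar z_d]=\Re[z_cz_d\bar z_a\bar z_b]$ the monomial retains only this coset together with the partition of its four points into the pairs $\{k,k\oplus l\oplus m\}$ and $\{k\oplus l,k\oplus m\}$ --- equivalently the common \emph{direction} $l\oplus m\in V$. Conversely, the coset and this direction force $\{l,m\}$ to be the two nonzero elements of $V$ other than $l\oplus m$, while $k$ may be any of the four coset points and $(l,m)$ either ordering of $\{l,m\}$; this produces $4\times2=8$ triples, all carrying the same unordered pair $\{l,m\}$ and hence the same (nonzero) coefficient and the same admissibility. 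Dividing by $8$ gives (\ref{eq:N4}), the second equality there being merely the definition of the multinomial coefficient. The routine parts are the two elementary binomial-sum evaluations and the non-vanishing criteria for $\hat g$; the delicate point, which I expect to be the main obstacle, is this $8$-to-$1$ bookkeeping for $\mathcal{N}^{(4)}$ --- pinning down exactly which triples $(k,l,m)$ yield the same interfering monomial and verifying that the eight preimages are simultaneously admissible --- which the affine-plane picture above makes transparent.
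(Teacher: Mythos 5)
Your proposal is correct and follows essentially the same route as the paper: both count the admissible index triples $(k,l,m)$ in Form 2/Form 3 via the non-vanishing criteria for $\hat g$ and the binomial count $\binom{n}{s}\binom{n-s}{t}$ of disjoint supports, then divide by the multiplicity with which each distinct monomial occurs. The only difference is that the paper disposes of the $4$-to-$1$ and $8$-to-$1$ degeneracies with the phrase ``by symmetry,'' whereas your coset-and-direction argument makes that step explicit; this is a worthwhile elaboration but not a different proof.
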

\begin{proof}
The total number of terms of the sum in (\ref{eq:pimeoplus}) is given by
\begin{eqnarray*}
 \mathcal{N}_{\mathrm{tot}}&=&\sum_{k,l,m\in X^n} \delta_{\mathbb{Q}_*}\big(g(l,m;[n/2])\big)\\
&=&
2^n \sum_{l,m\in X^n} \delta_0(l\wedge m)\, \delta_{\left[0,\frac{n+1}{2}\right]}(|l|)\,
\delta_{\left[0,\frac{n+1}{2}\right]}(|m|)\\
&=& 2^n \sum_{0\leq s, t \leq \left[\frac{n+1}{2}\right]} \sum_{l,m\in X^n} \delta_0(l\wedge m)\,
\delta_s(|l|)\, \delta_t(|m|) \\
&=& 2^n \sum_{0\leq s, t \leq \left[\frac{n+1}{2}\right]} \sum_{l\in X^n}
\delta_s(|l|)  \sum_{m\in X^{n-s}} \delta_t (|m|) \\
&=& 2^n \sum_{0\leq s, t \leq \left[\frac{n+1}{2}\right]}
\left(\begin{array}{c} n  \\ s \end{array} \right)
\left(\begin{array}{c} n-s  \\ t \end{array} \right) .
\end{eqnarray*}
Therefore, the total number of monomials $|z_k|^4$ is
\begin{eqnarray*}
 \mathcal{N}_{\mathrm{tot}}^{(1)}&=&\sum_{k\in X^n} \delta_{\mathbb{Q}_*}\big(g(0,0;[n/2])\big) = 2^n
\left(\begin{array}{c} n  \\ 0 \end{array} \right)
\left(\begin{array}{c} n \\ 0 \end{array} \right)
= 2^n,
\end{eqnarray*}
while the total number of  monomials
 $|z_k|^2 |z_h|^2$ with $k\neq h$ is
\begin{eqnarray*}
 \mathcal{N}_{\mathrm{tot}}^{(2)}&=&2 \sum_{k\in X^n}\sum_{l\in X^n_*} \delta_{\mathbb{Q}_*}\big(g(l,0;[n/2])\big) = 2^{n+1}
 \sum_{1\leq t \leq \left[\frac{n+1}{2}\right]}
\left(\begin{array}{c} n  \\ 0 \end{array} \right)
\left(\begin{array}{c} n  \\ t \end{array} \right)\\
&=& 2^{n} \sum_{1\leq t \leq \left[\frac{n+1}{2}\right]}
\left[\left(\begin{array}{c} n  \\ t \end{array} \right)+\left(\begin{array}{c} n  \\ {n-t} \end{array} \right)\right]\\
&=& 2^{n}  \sum_{1\leq t \leq \left[\frac{n+1}{2}\right]}
\left(\begin{array}{c} n  \\ t \end{array} \right)+ 2^n
 \sum_{\left[\frac{n}{2}\right]\leq t \leq n}
\left(\begin{array}{c} n  \\ t \end{array} \right)\\
&=& 2^{n}  \sum_{1\leq t \leq n}
\left(\begin{array}{c} n  \\ t \end{array} \right)+ 2^n
 \sum_{\left[\frac{n}{2}\right]\leq t \leq \left[\frac{n+1}{2}\right]}
\left(\begin{array}{c} n  \\ t \end{array} \right)\\
&=& 2^{2 n} -2^n + 2^n
 \sum_{\left[\frac{n}{2}\right]\leq t \leq \left[\frac{n+1}{2}\right]}
\left(\begin{array}{c} n  \\ {[n/2]} \end{array} \right)\\
&=& 2^{2 n} -2^n +
\frac{2^{n+2}}{3+(-1)^n}
\left(\begin{array}{c} n  \\ {[n/2]} \end{array} \right).
\end{eqnarray*}
On the other hand, the total number of  monomials  $\Re[ z_k z_l \bar{z}_m \bar{z}_n]$ with distinct indices reads
\begin{eqnarray*}
 \mathcal{N}_{\mathrm{tot}}^{(4)}= \sum_{k\in X^n}\sum_{l,m\in X^n_*} \delta_{\mathbb{Q}_*}\big(g(l,m;[n/2])\big) =
2^n \sum_{1\leq s, t \leq \left[\frac{n+1}{2}\right]}
\left(\begin{array}{c} n  \\ s \end{array} \right)
\left(\begin{array}{c} n-s  \\ t \end{array} \right).
\end{eqnarray*}
The results follow, since by symmetry, the numbers of distinct monomials are
$\mathcal{N}^{(1)}=\mathcal{N}_{\mathrm{tot}}^{(1)}$, $\mathcal{N}^{(2)}=\mathcal{N}_{\mathrm{tot}}^{(2)}/4$, and $\mathcal{N}^{(4)}=\mathcal{N}_{\mathrm{tot}}^{(4)}/8$.
\qed
\end{proof}
\begin{table}
\caption{Number of monomials in $\pi_{\mathrm{ME}}(z)$.}
\label{tab:monomials}       
\centering
\begin{tabular}{llll}
\hline\noalign{\smallskip}
$n$ & $\mathcal{N}^{(1)}$ & $\mathcal{N}^{(2)}$ & $\mathcal{N}^{(4)}$  \\
\noalign{\smallskip}\hline\noalign{\smallskip}
2 & 4 & 4 & 1\\
3 & 8 & 24 & 12\\
4 & 16 & 80 & 84\\
5 & 32 & 400 & 680\\
6 & 64 & 1312 & 4000\\
7 & 128 & 6272 & 28672\\
8 & 256 & 20736 & 162624\\
$n\to\infty$ & $2^{n}$  & $2^{2n-2}$ & $2^{n-3}3^n$\\
\noalign{\smallskip}\hline
\end{tabular}
\end{table}
\begin{remark}
For large values of $n$, by making use of Stirling's approximation one gets
$$\left(\begin{array}{c} n  \\ n/2 \end{array} \right) \sim  2^n \sqrt{\frac{2}{\pi n}},$$
hence, from (\ref{eq:N1 N2})
\begin{equation}
\mathcal{N}^{(2)}\sim 2^{2n-2}\left(1+ \frac{4}{3+(-1)^n} \sqrt{\frac{2}{\pi n}}
 \right), \qquad n\to\infty.
\end{equation}
The asymptotics of $\mathcal{N}^{(4)}$ is a little more elaborated.
 First note that, by Stirling,
\begin{eqnarray*}
 \mathcal{N}^{(4)} \sim 2^{n-3} \sum_{1\leq s, t \leq \left[\frac{n+1}{2}\right]}
\frac{1}{2\pi n\sqrt{\frac{s}{n} \frac{t}{n}\left(1-\frac{s}{n}-\frac{t}{n}\right)}}
\exp\left( n\, H\left(\frac{s}{n},\frac{t}{n}\right)\right),
\end{eqnarray*}
where the function $H: \Delta^2 \to \mathbb{R}$, defined on the simplex $\Delta^2=\{(x,y)\in [0,1]^2\, | \, x+y=1 \}$,
is the entropy
\begin{equation*}
H(x,y)=-x \log x - y \log y - (1-x-y) \log(1-x-y).
\end{equation*}
Then, for $n\to\infty$, by using the same arguments as in the proof of Laplace - De Moivre theorem \cite{sinai}, one can show that
\end{remark}
\begin{eqnarray}
\mathcal{N}^{(4)} \sim 2^{n-3} 3^n, \qquad n\to\infty.
\end{eqnarray}
The numbers of different types of monomials appearing in $\pi_{\mathrm{ME}}(z)$, as well as their asymptotic expansions, are given in Table \ref{tab:monomials}.

We conclude the section by exhibiting another form of multipartite entanglement.
\begin{theorem}[$\pi_{\mathrm{ME}}$. Form 4]
\label{th:pime4}
The potential of multipartite entanglement can be written as
\begin{eqnarray}
\pi_{\mathrm{ME}}(z)= 1-\frac{1}{2}\sum_{k,l,m\in X^n}
 g(l,m;[n/2])\, \left|z_k\, z_{k\oplus l\oplus m} - z_{k\oplus l}\, z_{k\oplus m}\right|^2.
 \label{eq:pime4}
 \end{eqnarray}
\end{theorem}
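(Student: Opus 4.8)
The plan is to derive~(\ref{eq:pime4}) directly from Form~2, Eq.~(\ref{eq:pimeoplus}) of Theorem~\ref{th:pme1}, by expanding the modulus squared on the right-hand side. Using $|\alpha-\beta|^2=|\alpha|^2+|\beta|^2-2\Re(\alpha\bar\beta)$ with $\alpha=z_k z_{k\oplus l\oplus m}$ and $\beta=z_{k\oplus l}z_{k\oplus m}$, and recognizing the cross term as the summand of~(\ref{eq:pimeoplus}), one gets
\begin{equation*}
\frac{1}{2}\sum_{k,l,m\in X^n} g(l,m;[n/2])\,\bigl|z_k z_{k\oplus l\oplus m}-z_{k\oplus l}z_{k\oplus m}\bigr|^2 = D_1+D_2-\pi_{\mathrm{ME}}(z),
\end{equation*}
where $D_1=\tfrac{1}{2}\sum_{k,l,m}g(l,m;[n/2])\,|z_k|^2|z_{k\oplus l\oplus m}|^2$ and $D_2=\tfrac{1}{2}\sum_{k,l,m}g(l,m;[n/2])\,|z_{k\oplus l}|^2|z_{k\oplus m}|^2$. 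Thus the theorem is equivalent to the normalization identity $D_1+D_2=1$.

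First I would note that $D_1=D_2$: for each fixed pair $(l,m)$ the shift $k\mapsto k\oplus l$ is a bijection of $X^n$ and carries the summand of $D_2$ to that of $D_1$. Hence it suffices to prove $\sum_{k,l,m\in X^n} g(l,m;[n/2])\,|z_k|^2\,|z_{k\oplus l\oplus m}|^2 = 1$. Next I would exploit the shape of the coupling function: by~(\ref{eq:gdef}), $g(l,m;[n/2])=\delta_0(l\wedge m)\,\hat{g}(|l|,|m|;[n/2])$, so only pairs $(l,m)$ with disjoint supports contribute. For such a pair put $j=l\oplus m$, whose support $J$ is the disjoint union of the supports of $l$ and $m$; conversely every $j\in X^n$ arises from $2^{|j|}$ ordered disjoint pairs, $\binom{|j|}{s}$ of them with $|l|=s$ (and then $|m|=|j|-s$). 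Grouping the $(l,m)$-sum according to $j$ replaces the weight by
\begin{equation*}
w(j):=\sum_{s=0}^{|j|}\binom{|j|}{s}\,\hat{g}\!\left(s,|j|-s;[n/2]\right).
\end{equation*}

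The crux of the proof is that $w(j)=1$ for every $j\in X^n$. Inserting the closed form~(\ref{eq:hatg}) with $s+t=|j|$ and $n-s-t=n-|j|$, and performing the substitution $s\mapsto|j|-s$ in the term containing the second binomial, both summands collapse to the same Vandermonde sum $\sum_s\binom{|j|}{s}\binom{n-|j|}{[n/2]-s}=\binom{n}{[n/2]}$, so that $w(j)=\tfrac{1}{2}\binom{n}{[n/2]}^{-1}\cdot 2\binom{n}{[n/2]}=1$. (Equivalently, $w(j)=1$ is a double-counting statement: by the argument in the proof of Theorem~\ref{th:Delta}, $\binom{n}{[n/2]}\,\hat{g}(s,t;[n/2])$ counts, for a quadruple with $|a|=s$, $|b|=t$, the balanced bipartitions admissible for it, and every balanced bipartition $(A,\bar A)$ is compatible with exactly one splitting $J=(J\cap A)\sqcup(J\cap\bar A)$ of the support of $j$.) I expect this combinatorial identity to be the only genuinely nontrivial ingredient; some care is required to keep track of which of the two binomials in $\hat{g}$ is nonvanishing (negative arguments).

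Finally, with $w\equiv1$, using again the bijectivity of $j\mapsto k\oplus j$ and the normalization $\sum_k|z_k|^2=1$ from~(\ref{eq:normz}),
\begin{equation*}
\sum_{k,l,m} g(l,m;[n/2])\,|z_k|^2|z_{k\oplus l\oplus m}|^2=\sum_{k,j\in X^n}|z_k|^2|z_{k\oplus j}|^2=\sum_{k\in X^n}|z_k|^2=1 ,
\end{equation*}
so $D_1+D_2=1$. Substituting back gives $1-\tfrac12\sum g\,|z_k z_{k\oplus l\oplus m}-z_{k\oplus l}z_{k\oplus m}|^2=1-(1-\pi_{\mathrm{ME}}(z))=\pi_{\mathrm{ME}}(z)$, which is~(\ref{eq:pime4}). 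Everything apart from $w(j)\equiv1$ is relabelling of summation indices.
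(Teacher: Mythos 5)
Your proposal is correct and follows essentially the same route as the paper: expand the squared modulus to reduce the claim to the normalization identity $\sum_{k,l,m} g(l,m;[n/2])\,|z_k|^2|z_{k\oplus l\oplus m}|^2=1$, and establish the key combinatorial fact (your $w(j)\equiv 1$ is exactly the paper's lemma $\sum_m g(l\oplus m,m;n_A)=1$) via Vandermonde's identity. The only cosmetic difference is that you apply Vandermonde directly to the form (\ref{eq:hatg}) of $\hat g$, whereas the paper first passes to the multinomial form (\ref{eq:hatg1}) of Remark \ref{th:Delta1}.
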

\begin{proof}
Let us consider (\ref{eq:pimeoplus}). By substituting the identity
\begin{eqnarray*}
\Re\left[z_k\, z_{k\oplus l\oplus m}\, \bar{z}_{k\oplus l}\,\bar{z}_{k\oplus m}\right] &=&
-\frac{1}{2} \left|z_k\, z_{k\oplus l\oplus m} - z_{k\oplus l}\, z_{k\oplus m}\right|^2
\nonumber\\
& & + \frac{1}{2} \left|z_k\, z_{k\oplus l\oplus m}\right|^2 +  \frac{1}{2} \left| z_{k\oplus l}\, z_{k\oplus m}\right|^2,
\end{eqnarray*}
one gets
\begin{eqnarray*}
\pi_{\mathrm{ME}}(z)&=& -\frac{1}{2}\sum_{k,l,m\in X^n}
 g(l,m;[n/2])\, \left|z_k\, z_{k\oplus l\oplus m} - z_{k\oplus l}\, z_{k\oplus m}\right|^2
 \nonumber\\
 & &+  \frac{1}{2} \sum_{k,l,m\in X^n} g(l,m;[n/2])\,\left(\left|z_k\, z_{k\oplus l\oplus m}\right|^2 +  \left| z_{k\oplus l}\, z_{k\oplus m}\right|^2\right).
\end{eqnarray*}
Now, by simple manipulations,
$$ \sum_{k\in X^n}  \left| z_{k\oplus l}\, z_{k\oplus m}\right|^2= \sum_{k\in X^n}\left|z_k\, z_{k\oplus l\oplus m}\right|^2$$
and
$$\sum_{k,l,m\in X^n} g(l,m;[n/2])\,\left|z_k\, z_{k\oplus l\oplus m}\right|^2 =
\sum_{k,l \in X^n} \left|z_k\, z_{k\oplus l}\right|^2 \sum_{m\in X^n} g(l\oplus m,m;[n/2]).
$$
Thus,
\begin{eqnarray*}
\pi_{\mathrm{ME}}(z)&=& -\frac{1}{2}\sum_{k,l,m\in X^n}
 g(l,m;[n/2])\, \left|z_k\, z_{k\oplus l\oplus m} - z_{k\oplus l}\, z_{k\oplus m}\right|^2
 \nonumber\\
 & &+   \sum_{k,l \in X^n} \left|z_k\, z_{k\oplus l}\right|^2 \sum_{m\in X^n} g(l\oplus m,m;[n/2]).
\end{eqnarray*}
Let us assume for a moment that
\begin{equation}
\label{eq:Y1}
\sum_{m\in X^n} g(l\oplus m,m;[n/2])=1, \qquad \forall l \in X^n.
\end{equation}
Then, the result follows by normalization (\ref{eq:normz}),  $z\in \mathbb{S}^{2N-1}$, since
$$\sum_{k,l \in X^n} \left|z_k\, z_{k\oplus l}\right|^2= \Big(\sum_{k \in X^n} |z_k|^2\Big)^2=1.$$
In fact, equality (\ref{eq:Y1}), is a consequence of the following lemma, for $n_A=[n/2]$.
\qed
\end{proof}
\begin{lemma}
The following equality holds
\begin{equation}
Y=\sum_{m\in X^n} g(l\oplus m,m; n_A )=1, \qquad \forall l \in X^n, \quad \forall n_A\in S.
\end{equation}
\end{lemma}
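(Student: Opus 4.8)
The plan is to reduce the sum over $m\in X^n$ to a one–dimensional binomial sum indexed by the Hamming weight $|m|$, and then evaluate that sum with Vandermonde's identity.

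First I would unpack the constraint $\delta_0\big((l\oplus m)\wedge m\big)=1$ coming from the factor $\delta_0(a\wedge b)$ in $g$, see (\ref{eq:gdef}), coordinate by coordinate. At an index $i$ with $l_i=0$ one has $(l\oplus m)_i\wedge m_i=m_i\wedge m_i=m_i$, so the constraint forces $m_i=0$; at an index $i$ with $l_i=1$ one has $(l\oplus m)_i\wedge m_i=\bar m_i\wedge m_i=0$ automatically. Hence only the $m$ with $\mathrm{supp}(m)\subseteq\mathrm{supp}(l)$ contribute. For such an $m$, a second coordinatewise count shows that the ones of $m$ turn exactly $|m|$ of the ones of $l$ into zeros and change nothing outside $\mathrm{supp}(l)$, so $|l\oplus m|=|l|-|m|$. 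Writing $r=|l|$, there are $\binom{r}{j}$ such $m$ with $|m|=j$, so the sum collapses to
\begin{equation*}
Y=\sum_{j=0}^{r}\binom{r}{j}\,\hat g(r-j,j;n_A).
\end{equation*}

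Next I would substitute the explicit expression (\ref{eq:hatg}) of $\hat g$. Since $(r-j)+j=r$, the binomial upper index $n-s-t$ is always $n-r$, giving
\begin{equation*}
Y=\frac12\binom{n}{n_A}^{-1}\sum_{j=0}^{r}\binom{r}{j}\left[\binom{n-r}{n_A-r+j}+\binom{n-r}{n_A-j}\right].
\end{equation*}
The change of index $j\mapsto r-j$, together with $\binom{r}{j}=\binom{r}{r-j}$, maps the first inner sum onto the second, so the two are equal and
\begin{equation*}
Y=\binom{n}{n_A}^{-1}\sum_{j=0}^{r}\binom{r}{j}\binom{n-r}{n_A-j}=\binom{n}{n_A}^{-1}\binom{n}{n_A}=1,
\end{equation*}
the last equality being Vandermonde's identity. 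Only $r=|l|$ enters, so this holds for every $l\in X^n$ and every $n_A$; taking $n_A=[n/2]$ gives (\ref{eq:Y1}), which is the fact used in the proof of Theorem \ref{th:pime4}.

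The argument is essentially mechanical; the only mild subtlety is that throughout one must keep in force the convention (already adopted in Theorem \ref{th:Delta}) that a binomial coefficient vanishes when its arguments fall outside the admissible range — this is precisely what legitimises extending the $j$–sum over all of $\{0,\dots,r\}$ and invoking Vandermonde without case distinctions. I do not foresee any real obstacle beyond this bookkeeping and the coordinatewise analysis of the $\delta_0(a\wedge b)$ constraint.
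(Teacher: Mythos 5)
Your proof is correct and follows essentially the same route as the paper: analyse the constraint $\delta_0\big((l\oplus m)\wedge m\big)$ coordinatewise to restrict $m$ to the support of $l$, collapse the sum to one over the Hamming weight $j=|m|$ with multiplicity $\binom{r}{j}$, symmetrize via $j\mapsto r-j$, and finish with Vandermonde's identity. The only (immaterial) difference is that you substitute the form (\ref{eq:hatg}) of $\hat g$ directly, whereas the paper first passes to the multinomial form (\ref{eq:hatg1}) of Remark \ref{th:Delta1} and lands on $\binom{n}{r}^{-1}\binom{n}{r}$ instead of $\binom{n}{n_A}^{-1}\binom{n}{n_A}$.
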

\begin{proof}
From Eq. (\ref{eq:gdef}) in Theorem \ref{th:Delta} we get
\begin{eqnarray*}
Y=\sum_{m\in X^n}\delta_0 \big((l\oplus m)\wedge m \big)\, \hat{g}(|l\oplus m|, |m|;n_A).
\end{eqnarray*}
Let us define the set $B=\{i\in S \,|\, l_i=0 \}\subset S$, so that $l_B=0$. We get
$(l\oplus m)\wedge m=0$ iff
$m_B=0$ and $(l_{\bar{B}} \oplus  m_{\bar{B}}) \wedge m_{\bar{B}}=0$. But the second equality is identically satisfied, because $l_{\bar{B}}$ is a vector of all 1. Thus
$\delta_0 \big((l\oplus m)\wedge m \big)=\delta_0(m_{\bar{B}})$ and we get
\begin{eqnarray*}
Y&=& \sum_{m\in X^n} \delta_0(m_{\bar{B}})\, \hat{g}(|l_{\bar{B}} \oplus  m_{\bar{B}}|, | m_{\bar{B}}|;n_A)\\
&=&
\sum_{m\in X^r} \hat{g}(|11\dots1 \oplus  m|, | m|;n_A)
\nonumber\\
&= & \sum_{m\in X^r} \hat{g}(r-|m|, |m|;n_A)
\nonumber\\
&= &\sum_{m\in X^r} \sum_{0\leq t\leq r}  \delta_t(|m|)\, \hat{g}(r-t, t ;n_A)
=\sum_{0\leq t\leq r}  \left(\begin{array}{c} r  \\ t \end{array} \right) \hat{g}(r-t, t ;n_A),
\end{eqnarray*}
where $r=|\bar{B}|=|l|.$ Let us now use the form  (\ref{eq:hatg1}) of the function $\hat{g}$ given in Remark \ref{th:Delta1},
\begin{eqnarray*}
Y&=&\frac{1}{2}\sum_{0\leq t\leq r}
\left(\begin{array}{c} r  \\ t \end{array} \right)
\left(\begin{array}{c} n \\ r-t, t\end{array}\right)^{-1}
\left[ \left(\begin{array}{c} {n_A} \\ r-t \end{array} \right)
\left(\begin{array}{c} {n_{\bar{A}}}  \\  t \end{array} \right)
+\left(\begin{array}{c} {n_A}    \\  t \end{array} \right)
\left(\begin{array}{c} {n_{\bar{A}}} \\ r-t \end{array} \right)\right]
\\
&=&\frac{1}{2}\left(\begin{array}{c} n \\ r\end{array}\right)^{-1}
\sum_{0\leq t\leq r}
\left[ \left(\begin{array}{c} {n_A} \\ r-t \end{array} \right)
\left(\begin{array}{c} {n_{\bar{A}}}  \\  t \end{array} \right)
+\left(\begin{array}{c} {n_A}    \\  t \end{array} \right)
\left(\begin{array}{c} {n_{\bar{A}}} \\ r-t \end{array} \right)\right]
\\
&=&\left(\begin{array}{c} n \\ r\end{array}\right)^{-1}
\sum_{0\leq t\leq r}
\left(\begin{array}{c} {n_A}    \\  t \end{array} \right)
\left(\begin{array}{c} {n_{\bar{A}}} \\ r-t \end{array} \right).
\end{eqnarray*}
By recalling  Vandermonde's identity \cite{comtet},
$$\sum_j \left(\begin{array}{c} m \\ j\end{array}\right)
\left(\begin{array}{c} n-m \\ k-j \end{array}\right)=
\left(\begin{array}{c} n \\ k\end{array}\right),$$
since $n_A+n_{\bar{A}}=n$, we get
\begin{eqnarray*}
Y&=&\left(\begin{array}{c} n \\ r\end{array}\right)^{-1}
\sum_{0\leq t\leq r}
\left(\begin{array}{c} {n_A}    \\  t \end{array} \right)
\left(\begin{array}{c} {n- n_A} \\ r-t \end{array} \right)=
\left(\begin{array}{c} n \\ r\end{array}\right)^{-1}
\left(\begin{array}{c} n \\ r\end{array}\right)=1.
\end{eqnarray*}
\qed
\end{proof}
\begin{remark}
\label{rm:avlinent}
Recall that, by Theorem \ref{th:fullyfactorized}, the potential of multipartite entanglement attains its upper bound $\pi_{\mathrm{ME}}(z)=1$ on fully factorized states. Thus in (\ref{eq:pime4}) the nonegative sum which is subtracted from unity represents the amount of entanglement of $\ket{\psi}$, and MMES  are those states that maximizes the distances $\left|z_k\, z_{k\oplus l\oplus m} - z_{k\oplus l}\, z_{k\oplus m}\right|^2$. In fact, the average over balanced bipartition of the linear entropy (\ref{eq:linent}) yields
\begin{equation}
\mathcal{L}_{\mathrm{ME}}(z)=\frac{N_A}{N_A-1}\left(1-\pi_{\mathrm{ME}}(z) \right),
\end{equation}
with $N_A= 2^{[N/2]}$. Thus, apart from a normalization factor, the sum in (\ref{eq:pime4})
is nothing but the average linear entropy. Note that the number of terms in the sum is $\mathcal{N}^{(4)}$ given in (\ref{eq:N4}), since the terms with $l=0$ or $m=0$ identically vanish.
\end{remark}
\begin{remark}
In the spirit of the above Remark, one can prove Theorem \ref{th:pime4} by following a different path. First, one can easily write an expression analogous to (\ref{th:fullyfactorized}) for the purity $\pi_A(z)$ of a given bipartition.
Incidentally, this would give an explicit expression of $\mathcal{L}_A(z)$. Then, one considers the average over balanced bipartitions and, by  noting that the proofs of Theorems \ref{th:Delta} and \ref{th:pme1} do not depend on the particular form of the monomials $z_k z_{l} \bar{z}_{m}\bar{z}_{n}$, that can be replaced by $|z_k z_{l}- z_{m}z_{n}|^2$, one obtains the desired result. By comparing the two proofs, since the average of 1 is 1, one can easily distillate an alternative combinatoric proof of Vandermonde's identity.
\end{remark}

\begin{example}
Consider $n=2$ qubits. One gets
\begin{eqnarray}
\pi_{\mathrm{ME}}(z)&=&
|z_{00}|^4 + |z_{01}|^4 + |z_{10}|^4 + |z_{11}|^4
\nonumber\\
& & + 2\left( |z_{00}|^2 |z_{01}|^2 + |z_{00}|^2 |z_{10}|^2 + |z_{11}|^2 |z_{01}|^2 + |z_{11}|^2 |z_{10}|^2\right)
\nonumber\\
& & + 4\Re(z_{00}\bar{z}_{01}\bar{z}_{10}z_{11})
\nonumber\\
&=& 1-2|z_{00}z_{11}-z_{01}z_{10}|^2.
\end{eqnarray}
The first equality follows from Corollary \ref{th:pme2}, while the second equality derives from Theorem \ref{th:pime4}. Note that the number of terms $\mathcal{N}^{(1)}=\mathcal{N}^{(2)}=4$ and $\mathcal{N}^{(4)}=1$ is in agreement with the counting of Theorem \ref{th:pme3e} and Remark \ref{rm:avlinent}. See Table \ref{tab:monomials}.
\end{example}

\begin{example}
For 3 qubits we will give the potential of multipartite entanglement in the form 4 of Theorem \ref{th:pime4}:
\begin{eqnarray}
\pi_{\mathrm{ME}}(z)&=&1 -2 \sum_{p\in\mathcal{C}_3} \Big( | z_{p(000)} z_{p(011)} - z_{p(001)} z_{p(010)} |^2
\nonumber\\
& & \phantom{1 -2 \sum_{p\in\mathcal{C}_3} \Big(}
+ | z_{p(100)} z_{p(111)} - z_{p(101)} z_{p(110)} |^2
\nonumber\\
& & \phantom{1 -2 \sum_{p\in\mathcal{C}_3} \Big(}
+\frac{1}{3}  | z_{p(100)} z_{p(011)} - z_{p(101)} z_{p(010)} |^2
\nonumber\\
& & \phantom{1 -2 \sum_{p\in\mathcal{C}_3} \Big(}
+\frac{1}{3}   | z_{p(000)} z_{p(111)} - z_{p(001)} z_{p(110)} |^2 \Big),
\end{eqnarray}
where the sum is over the $3$ cyclic permutations
\begin{equation}
\label{eq:C3def}
\mathcal{C}_3=\{s^i\, |\, s(1,2,3)=(2,3,1),\; i=0,1,2\}
\end{equation}
of the qubits $S=\{1,2,3\}$. Here $p(k)$ denotes the natural action of the permutation group on $k$,
\begin{equation}
\label{eq:Pnaction}
\mathcal{P}_n\times X^n \ni (p,k)  \mapsto p(k)=(k_{p(i)})_{i\in S} \in X^n.
\end{equation}
In agreement with Remark \ref{rm:avlinent}, the number of distinct terms is $\mathcal{N}^{(4)}=3 \times 4=12$.  See Table \ref{tab:monomials}.
\end{example}

Now we will focus on the problem of the existence of perfect MMES. In particular we will try to construct them by using characterization 2 of Theorem  \ref{prop:perfect}.
It is not obvious that a state with  $\pi_{\mathrm{ME}}=1/N_A$ exists: in order to find a solution one must solve for $\rho_{A}=1/N_A, \forall \; (A,\bar{A})$, and this set of equations might not admit a solution.

\section{Perfect MMES. Probabilistic approach}
\label{sec:perfect MMES}

We will look more closely at the equations
\begin{equation}
\rho_A=1 /N_A, \quad \text{for every subsystem } A\subset S \text{ with }  |A| \leq n/2,
\label{eq:rho1NA}
\end{equation}
that, according to Theorem \ref{prop:perfect}, characterize a perfect MMES. Although we could consider only maximal subsets $A\subset S$, with $|A|=[n/2]$, it will be more convenient to consider also smaller subsets $A$.

Let us first consider the diagonal elements in the computational basis $\{\ket{\ell}_A\}_{\ell \in X^{n_A}}\subset \mathcal{H}_A$. By Eq. (\ref{eq:rhoA}) of Theorem \ref{th:rhoA piA}, one gets
\begin{eqnarray}
\left\langle \ell |\rho_{A}|\ell \right\rangle &=&
\sum_{k , l\in X^n}
z_{k} \bar{z}_{l}
\delta_{k_{\bar{A}},l_{\bar{A}}} \delta_{k_A,\ell} \delta_{l_A,\ell}
= \sum_{k , l\in X^n}
z_{k} \bar{z}_{l}
\delta_{k,l} \delta_{k_A,\ell}\nonumber \\
&=& \sum_{k \in X^n}
|z_{k}|^2  \delta_{k_A,\ell} .
\end{eqnarray}
Therefore, from (\ref{eq:rho1NA}) we obtain
\begin{eqnarray}
\left\langle \ell |\rho_{A}| \ell\right\rangle = \sum_{k\in X^{n}}|z_{k}|^{2} \delta_{k_A, \ell}=1/N_A,
\label{eq:diagrho}
\end{eqnarray}
with $N_A=2^{|A|}$, $\forall \ell \in X^{|A|}$, $\forall A\subset S$, with $|A|\leq n/2$.

Now note that, due to normalization, $\sum |z_k|^2=1$, we can look at $(|z_k|^2)_{k\in X^n}$ as a probability vector on the finite space $X^n$ of $n$ \emph{classical}  bits. In view of this interpretation, we will introduce the
\begin{definition}[Population probability]
Given a state $\ket{\psi}\in\mathcal{H}_S$ and its Fourier coefficients $(z_k)_{k\in X^n}$ in the computational basis,  we define the \emph{population probability vector}
in the computational basis,
\begin{equation}
P_S(k) = |z_{k}|^{2},
\end{equation}
as the probability of the binary sequence $k=(k_i)_{i\in S}\in X^n$.
Moreover, let $E[\cdot]$  denote the expectation value with respect to $P_S$,
\begin{equation}
E[f(k)] = \sum_{k\in X^n} f(k) P_S(k) =  \sum_{k\in X^n} f(k) |z_{k}|^{2},
\end{equation}
for any function $f: X^n \to \mathbb{C}$.
\end{definition}
According to the above definition, Eq. (\ref{eq:diagrho}) reads
\begin{eqnarray}
E\left[ \delta_{k_A, \ell} \right] = 2^{-|A|},  \quad
\forall \ell \in X^{|A|}, \quad  \forall A\subset S, \; \text{with } |A|\leq n/2.
\end{eqnarray}
By noting that the above expectation value is nothing but the marginal probability
distribution
\begin{equation}
E\left[ \delta_{k_A, \ell} \right] = P_A (\ell),
\end{equation}
we have arrived at the following
\begin{theorem}
\label{th:flatmarg}
A necessary condition for a state $\ket{\psi}\in \mathcal{H}_S$ to be a perfect MMES is that all  the marginals over $n_A\leq n/2$ variables of its population probability vector in the computational basis $P_S(k) = |z_{k}|^{2}$, are completely random:
\begin{eqnarray}
P_{A}(\ell)= 2^{-|A|},  \quad
\forall \ell \in X^{|A|}, \quad  \forall A\subset S, \; \text{with } |A|\leq n/2.
\end{eqnarray} \qed
\end{theorem}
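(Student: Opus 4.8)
The plan is to extract the statement purely from the diagonal part of the reduced density matrices, exploiting characterization 2 of Theorem \ref{prop:perfect}. First I would observe that if $\ket{\psi}$ is a perfect MMES, then $\rho_A = 1/N_A$ for every subsystem $A\subset S$ with $n_A\le n/2$; in particular every diagonal matrix element in the computational basis of $\mathcal{H}_A$ satisfies $\bra{\ell}\rho_A\ket{\ell} = 1/N_A$ for all $\ell\in X^{n_A}$. This step deliberately throws away all information about the off-diagonal entries of $\rho_A$, which is precisely the reason the resulting condition is only necessary and not sufficient.

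Next I would compute $\bra{\ell}\rho_A\ket{\ell}$ explicitly from the Fourier expression (\ref{eq:rhoA}) of Theorem \ref{th:rhoA piA}. Sandwiching $\rho_A=\sum_{k,l}z_k\bar z_l\,\delta_{k_{\bar A},l_{\bar A}}\ket{k_A}\bra{l_A}$ between $\bra{\ell}$ and $\ket{\ell}$ produces the factor $\delta_{k_A,\ell}\delta_{l_A,\ell}\delta_{k_{\bar A},l_{\bar A}}$. The one elementary observation needed is that this triple Kronecker delta collapses to $\delta_{k,l}\delta_{k_A,\ell}$: once $k_A=l_A=\ell$ and $k_{\bar A}=l_{\bar A}$, the full strings $k$ and $l$ must coincide on all of $S$. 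Hence $\bra{\ell}\rho_A\ket{\ell}=\sum_{k\in X^n}|z_k|^2\,\delta_{k_A,\ell}$.

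Finally I would reinterpret this sum probabilistically. Since $\sum_k|z_k|^2=1$, the vector $P_S(k)=|z_k|^2$ is a probability distribution on $X^n$, and $\sum_k|z_k|^2\delta_{k_A,\ell}=E[\delta_{k_A,\ell}]$ is by definition the marginal $P_A(\ell)$ of $P_S$ on the block $A$. Combining with $\bra{\ell}\rho_A\ket{\ell}=1/N_A=2^{-|A|}$ yields $P_A(\ell)=2^{-|A|}$ for every $\ell\in X^{|A|}$ and every $A$ with $|A|\le n/2$, which is the assertion. I do not expect any genuine obstacle here: the only mildly delicate point is the delta-collapse identity above, and the real content is conceptual rather than technical, namely the identification of the computational-basis diagonal of $\rho_A$ with a classical marginal of the population probability vector. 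Everything else is bookkeeping already prepared by Theorem \ref{th:rhoA piA} and the Definition of the population probability.
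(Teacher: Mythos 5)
Your proposal is correct and follows exactly the route the paper itself takes: reading off the diagonal of $\rho_A=1/N_A$ via the Fourier expression (\ref{eq:rhoA}), collapsing the Kronecker deltas to $\delta_{k,l}\delta_{k_A,\ell}$, and identifying $\sum_k|z_k|^2\delta_{k_A,\ell}$ with the marginal $P_A(\ell)$. No gaps.
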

\begin{remark}
For $A=p(C)$ with $p\in \mathcal{P}_n$ and $C=\{1,2,\dots,n_A\}$ one can write
\begin{eqnarray}
E\left[
\prod_{j=1}^{n_A}\delta_{k_{p(j)},\ell_{j}}\right]= 2^{-n_A},  \quad
\forall \ell \in X^{n_A}, \;  \forall p\in \mathcal{P}_n, \quad \text{with } n_A\leq n/2,
\end{eqnarray}
which means that
\begin{eqnarray}
P_{p(C)}(\ell)= 2^{-n_A},  \quad
\forall \ell \in X^{n_A}, \;  \forall p\in \mathcal{P}_n, \quad \text{with } n_A\leq n/2.
\end{eqnarray}
\end{remark}
\begin{remark}
According to Theorem \ref{th:flatmarg}, a first step in the problem of seeking perfect MMES is the following: Search for all probability functions on $X^n$, whose marginals on $n_A\leq n/2$ variables are uniform.
\end{remark}
The solution to this problem is given by the following
\begin{theorem}[Perfect MMES population]
\label{th:populationMMES}
The population probability vector in the computational basis of a perfect MMES of $n$ qubits has the form
\begin{eqnarray}
|z_k|^2 = P_{S}(k)
= 2^{-n}+\sum_{\frac{n}{2}< r \leq n} \sum_{j\in [S^r]} c_{j}^{(r)}\prod_{1\leq l\leq r} (2 k_{j_l} -1), \quad k\in S,
\label{eq:PSk}
\end{eqnarray}
for some $c_{j}^{(r)}\in\mathbb{R}$, where $[S^r]=\{(j_1,\dots, j_r) \in S^r | j_1 < j_2<\dots <j_r\}$ denotes the set of ordered vectors of $S^r$.
\end{theorem}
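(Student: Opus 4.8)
The plan is to combine the necessary condition of Theorem~\ref{th:flatmarg} with a Walsh--Fourier analysis on the hypercube $X^n=\mathbb{Z}_2^n$. First I would expand the population probability vector in the orthogonal basis of multilinear monomials $w_j(k)=\prod_{1\le l\le r}(2k_{j_l}-1)$, indexed by the ordered tuples $j=(j_1<\dots<j_r)\in[S^r]$ with $0\le r\le n$ and $w_\emptyset\equiv 1$. Since $2k_i-1=-(-1)^{k_i}$, each $w_j$ equals $(-1)^{|j|}$ times a character of $\mathbb{Z}_2^n$, so the $2^n$ functions $w_j$ are mutually orthogonal for the uniform measure on $X^n$ and hence form a basis of the real-valued functions on $X^n$. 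Thus one can write $P_S(k)=|z_k|^2=\sum_{r=0}^n\sum_{j\in[S^r]} c_j^{(r)}\, w_j(k)$ with $c_j^{(r)}\in\mathbb{R}$ (reality is automatic, both $P_S$ and the $w_j$ being real-valued). Because $\sum_{k_i\in\{0,1\}}(2k_i-1)=0$, one has $\sum_{k\in X^n}w_j(k)=0$ for every $j\neq\emptyset$, so the normalization $\sum_k|z_k|^2=1$ immediately fixes $c_\emptyset^{(0)}=2^{-n}$, which is the constant term already displayed in (\ref{eq:PSk}).

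The key computational step is the behaviour of $w_j$ under partial summation. Fixing $A\subset S$ and $\ell\in X^{n_A}$ and summing $w_j(k)$ over all $k$ with $k_A=\ell$, I would split the defining product into the factors carried by $j\cap A$, which are frozen to $\prod_{i\in j\cap A}(2\ell_i-1)$, and those carried by $j\cap\bar A$; using $\sum_{k_i\in\{0,1\}}(2k_i-1)=0$ once more, the sum vanishes unless $\mathrm{supp}(j)\subseteq A$, in which case it equals $2^{\,n-n_A}\,w_j(\ell)$. Consequently the marginal satisfies $P_A(\ell)=\sum_{k:\,k_A=\ell}P_S(k)=2^{\,n-n_A}\sum_{\mathrm{supp}(j)\subseteq A} c_j^{(r)}\, w_j(\ell)$, which is precisely the expansion of $P_A$ in the corresponding monomial basis on $X^{n_A}$.

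Then I would invoke Theorem~\ref{th:flatmarg}: for a perfect MMES one has $P_A(\ell)=2^{-n_A}$ for every $A$ with $|A|\le n/2$ and every $\ell\in X^{n_A}$. Matching this constant function against the (unique) monomial expansion just obtained forces $c_j^{(r)}=0$ for every nonempty $j$ whose support is contained in some such $A$, and re-confirms $c_\emptyset^{(0)}=2^{-n}$. Since any $j\in[S^r]$ with $1\le r\le n/2$ has a support of size $r\le n/2$, it is admissible as such an $A$ (take $A=\mathrm{supp}(j)$); hence $c_j^{(r)}=0$ for all $1\le r\le n/2$. The expansion of $P_S$ therefore collapses to the constant $2^{-n}$ plus the terms with $r>n/2$, which is exactly (\ref{eq:PSk}), with $c_j^{(r)}$ the Walsh coefficients of $|z_k|^2$.

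I expect the main obstacle to be purely organizational: stating the marginal-of-a-monomial identity cleanly (all the cancellation is just $1+(-1)=0$) and then appealing to uniqueness of the Walsh expansion to pass from ``$P_A$ is constant'' to ``all low-order coefficients vanish.'' I would also remark that the argument is reversible---any probability vector of the form (\ref{eq:PSk}) has uniform marginals over $\le n/2$ variables---so (\ref{eq:PSk}) is a genuine characterization of the admissible populations; the only constraint not visible in (\ref{eq:PSk}) is the positivity $P_S(k)\ge 0$, which further restricts the allowed families $(c_j^{(r)})$ to a convex polytope.
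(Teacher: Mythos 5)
Your proposal is correct and follows essentially the same route as the paper: expand $P_S$ in the multilinear monomial basis, fix the constant term $c^{(0)}=2^{-n}$ by normalization, and use the flat-marginal condition of Theorem~\ref{th:flatmarg} to kill all coefficients of order $r\leq n/2$. The only difference is cosmetic---where the paper computes the marginals explicitly for $|A|=1,2$ and then proceeds by induction, you obtain the vanishing of all low-order coefficients in one step from the marginal-of-a-monomial identity and the uniqueness of the Walsh expansion, which is a clean streamlining of the same argument.
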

\begin{proof}
Note that any function on $X^n$ is a multilinear function of the components of $k\in X^n$, because $k_i^2=k_i$.
Therefore, we can write
\begin{eqnarray*}
P_{S}(k)
= c^{(0)}+\sum_{r\in S} \sum_{j\in [S^r]} c_{j}^{(r)}\prod_{1\leq l\leq r} (2 k_{j_l} -1),
\end{eqnarray*}
which depends on the real parameters $c^{(r)}_j \in \mathbb{R}$, whose number is
$$1+\sum_{r\in S}|[S^r]|=\sum_{r\in S}
\left(\begin{array}{c} n \\ r \end{array} \right) = 2^n.$$
The normalization of $P_S$ implies that
$$
1=\sum_{k\in X^n} P_S(k)= 2^n   c^{(0)}+\sum_{r\in S} \sum_{j\in [S^r]} c_{j}^{(r)}\prod_{1\leq l\leq r} \sum_{k_{j_l}\in X} (2 k_{j_l} -1) = 2^n c^{(0)} ,
$$
that is
\begin{equation*}
c^{(0)}=2^{-n}.
\end{equation*}
Let us now consider a subset with one element $A=\{j\}$, with $j\in S$. For any $k_j\in X$ one must have
\begin{eqnarray*}
\frac{1}{2}=P_{\{j\}}(k_j)= \sum_{k\in X^{\bar{A}}} P_S(k)=2^{n-1}   c^{(0)}+c_{j}^{(1)}(2 k_{j} -1)= \frac{1}{2} + c_{j}^{(1)}(2 k_{j} -1),
\end{eqnarray*}
that is
$$c^{(1)}_j=0, \quad j\in S.$$
Analogously, for a subset with two elements $A=\{j_1,j_2\}$,
\begin{eqnarray*}
2^{-2}=P_A (k_{j_1},k_{j_2})= \sum_{k\in X^{\bar{A}}} P_S(k)=2^{-2}+c_{(j_1, j_2)}^{(2)}(2 k_{j_1} -1)(2 k_{j_2} -1),
\end{eqnarray*}
that is
$$c^{(2)}_j=0, \quad j\in [S^2].$$
By induction we get
\begin{eqnarray*}
c^{(r)}_j = 0, \quad \forall j \in  [S^r], \quad \text{for }  1\leq r \leq n/2,
\end{eqnarray*}
and the result follows.
\qed
\end{proof}
\begin{remark}
The range of the free parameters $c_{j}^{(r)}$ is determined by the inequalities
$0\leq P_S(k) \leq 1$, $\forall k\in S$. Their number is given by
\begin{eqnarray}
\sum_{\frac{n}{2}< r \leq n} | [S^r]| &=& \sum_{\frac{n}{2}< r \leq n}
\left( \begin{array}{c} n \\ r \end{array} \right) = \frac{1}{2} \sum_{\frac{n}{2}< r \leq n}
\left[\left( \begin{array}{c} n \\ r \end{array} \right) +
\left( \begin{array}{c} n \\ n-r \end{array} \right)
\right]\nonumber\\
&=&   \frac{1}{2} \sum_{0\leq r < \frac{n}{2}}
\left( \begin{array}{c} n \\ r \end{array} \right) + \frac{1}{2} \sum_{\frac{n}{2}< r \leq n}
\left( \begin{array}{c} n \\ r \end{array} \right)
\nonumber\\
&=& 2^{n-1}-\frac{1+(-1)^n}{4}\left(\begin{array}{c} n \\{[n/2]}\end{array}\right).
\label{eq:numcjr}
\end{eqnarray}
The particular solution $c_j^{(r)}=0$ for all $r\in S$  that yields a uniform probability $P_S(k)= 2^{-n}$ will play a role in the following.
\end{remark}

Theorem \ref{th:populationMMES}  completely determines the structure of the moduli $r_k=|z_k|=\sqrt{P_S(k)}$ of the Fourier coefficients $z_k$ of a perfect MMES in the computational basis. However this is only half of the work. In fact, the easy one. It remains to determine the phases, defined in the following
\begin{definition}
\label{def:modphas}
A state $\ket{\psi}\in\mathcal{H}_S$  can be expressed in the computational basis as
\begin{equation}
|\psi\rangle = \sum_{k\in X^n} z_k |k\rangle, \qquad z_k = r_k \zeta_k,
\end{equation}
where the \emph{Fourier moduli} belongs to the intersection of the positive hyperoctant with a hypersphere
\begin{equation}
r\in(\mathbb{R}_+)^N\cap \mathbb{S}^{N-1}=\{(r_k)_{k\in X^n} \,|\,  r_k\in \mathbb{R}_+, \sum_k r^2_k=1\}, \quad N=2^n,
\end{equation}
while the \emph{Fourier phases} belongs to the  torus $\mathbb{T}^N=(\mathbb{S}^1)^N$
\begin{equation}
\zeta\in\mathbb{T}^N=\{(\zeta_k)_{k\in X^n} \,|\,  \zeta_k\in \mathbb{C}, |\zeta_k|=1\}.
\end{equation}
\end{definition}
We will now show that the phases $\zeta$ of a perfect MMES are solutions to the system of the off-diagonal elements of the  equation $\rho_A=1/N_A$.
\begin{theorem}[Perfect MMES phases]
\label{th:phasesperfectMMES}
A state $\ket{\psi}\in\mathcal{H}_S$ is a perfect MMES iff its Fourier phases $\zeta$ in the computational basis  are  solutions to the equations
\begin{eqnarray}
& & \sum_{m \in X^{\bar{A}}}
r_{\ell\oplus m} r_{\ell'\oplus m} \zeta_{\ell\oplus m}\bar{\zeta}_{\ell'\oplus m} =0,
\nonumber \\
& &\forall \ell, \ell' \in X^{A}, \; \ell\neq\ell', \quad  \forall A\subset S, \; |A|=[n/2] ,
\label{eq:phases}
\end{eqnarray}
where $r_k=\sqrt{P_S(k)}$, with the population probability vector $P_S(k)$   given in Theorem \ref{th:populationMMES}, for some coefficients  $c_{j}^{(r)}\in\mathbb{R}$.
\end{theorem}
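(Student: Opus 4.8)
The plan is to use characterization~3 of Theorem~\ref{prop:perfect}: $\ket{\psi}$ is a perfect MMES if and only if $\rho_A=1/N_A$ for every \emph{maximal} subsystem $A\subset S$, $|A|=[n/2]$. I would split this operator identity, in the computational basis $\{\ket{\ell}_A\}_{\ell\in X^{n_A}}$ of $\mathcal{H}_A$, into its diagonal and off-diagonal parts. The diagonal part is already under control: by Theorems~\ref{th:flatmarg} and~\ref{th:populationMMES} it is equivalent to the moduli having the form $r_k=\sqrt{P_S(k)}$ with $P_S$ as in~(\ref{eq:PSk}) for some $c^{(r)}_j\in\mathbb{R}$. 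What remains is to show that the off-diagonal part is precisely the system~(\ref{eq:phases}) for the phases.

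The only genuine computation is the off-diagonal matrix element of $\rho_A$. Starting from~(\ref{eq:rhoA}) of Theorem~\ref{th:rhoA piA},
\[
\langle\ell|\rho_A|\ell'\rangle=\sum_{k,l\in X^n}z_k\bar z_l\,\delta_{k_{\bar A},l_{\bar A}}\,\delta_{k_A,\ell}\,\delta_{l_A,\ell'},
\]
and summing out the constrained indices --- writing, via the natural injections $X^A,X^{\bar A}\hookrightarrow X^S$ of Corollary~\ref{cor:piA1}, the sequence with $A$-part $\ell$ and $\bar A$-part $m$ as $\ell\oplus m$, and using the polar decomposition $z_k=r_k\zeta_k$ of Definition~\ref{def:modphas} --- one gets
\[
\langle\ell|\rho_A|\ell'\rangle=\sum_{m\in X^{\bar A}}z_{\ell\oplus m}\,\bar z_{\ell'\oplus m}=\sum_{m\in X^{\bar A}}r_{\ell\oplus m}\,r_{\ell'\oplus m}\,\zeta_{\ell\oplus m}\,\bar\zeta_{\ell'\oplus m}.
\]
Thus $\rho_A=1/N_A$ is equivalent to the two families of conditions: (i) for $\ell=\ell'$, $\sum_{m}r_{\ell\oplus m}^2=P_A(\ell)=1/N_A$, i.e.\ the marginal over the $n_A\le n/2$ variables in $A$ is uniform; (ii) for $\ell\neq\ell'$, the vanishing conditions~(\ref{eq:phases}).

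For the forward direction, if $\ket{\psi}$ is a perfect MMES then (i) holds for every $A$ with $|A|\le n/2$ (Theorem~\ref{th:flatmarg}), which by Theorem~\ref{th:populationMMES} forces $r_k=\sqrt{P_S(k)}$ with $P_S$ of the stated form, while (ii) holds for every maximal $A$, giving~(\ref{eq:phases}). Conversely, if $r_k=\sqrt{P_S(k)}$ with $P_S$ given by~(\ref{eq:PSk}), then (as in the proof of Theorem~\ref{th:populationMMES}) every marginal over $\le n/2$ variables is uniform, so (i) holds for each maximal $A$; if moreover the phases solve~(\ref{eq:phases}), then (ii) holds as well, so $\rho_A=1/N_A$ for every maximal $A$ and $\ket{\psi}$ is a perfect MMES by Theorem~\ref{prop:perfect}.

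The proof is short once the bookkeeping is set up, so the main obstacle is really just the notational identification of $k\in X^S$ with the pair $(k_A,k_{\bar A})\in X^A\times X^{\bar A}$, and keeping straight which argument of $\bar z$ carries $\ell$ rather than $\ell'$ so that the conjugated phase $\bar\zeta_{\ell'\oplus m}$ ends up on the correct index. It is worth stressing that the $r_k$, hence the system~(\ref{eq:phases}), depend on the free parameters $c^{(r)}_j$, so~(\ref{eq:phases}) is in fact a whole family of systems; the uniform choice $c^{(r)}_j\equiv0$ highlighted after Theorem~\ref{th:populationMMES} yields the simplest one.
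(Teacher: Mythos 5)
Your proposal is correct and follows essentially the same route as the paper: compute the off-diagonal matrix elements of $\rho_A$ from Eq.~(\ref{eq:rhoA}), reduce them to $\sum_{m\in X^{\bar A}} z_{\ell\oplus m}\bar z_{\ell'\oplus m}$, and combine their vanishing with Theorem~\ref{th:populationMMES} for the diagonal part. The only difference is that you spell out the two directions of the equivalence explicitly, which the paper leaves implicit.
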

\begin{proof}
The off-diagonal elements of $\rho_A$ in Eq. (\ref{eq:rhoA}) of Theorem \ref{th:rhoA piA} read
\begin{eqnarray*}
\left\langle \ell |\rho_{A}|\ell' \right\rangle =
\sum_{k , l\in X^n}
z_{k} \bar{z}_{l}
\delta_{k_{\bar{A}},l_{\bar{A}}} \delta_{k_A,\ell} \delta_{l_A,\ell'}
= \sum_{k_{\bar{A}} \in X^{\bar{A}}}
z_{\ell\oplus k_{\bar{A}}} \bar{z}_ {\ell'\oplus k_{\bar{A}}},
\end{eqnarray*}
$\forall \ell, \ell' \in X^{A}$,  $\ell\neq\ell'$.
Thus, by Equation (\ref{eq:rho1NA}), Definition \ref{def:modphas} and Theorem  \ref{th:populationMMES} one gets the desired result.
\qed
\end{proof}
\begin{remark}
An alternative form of (\ref{eq:phases}) in terms of the permutation group is
\begin{eqnarray}
\sum_{k, l \in X^n} \sqrt{P_S(k) P_S(l)} \zeta_k\bar{\zeta}_l
\prod_{\frac{n}{2}<j \leq n}  \delta_{k_{p(j)},l_{p(j)}}
\prod_{1\leq j\leq \frac{n}{2}}\delta_{k_{p(j)},\ell_j} \prod_{1\leq j\leq \frac{n}{2}}\delta_{l_{p(j)},\ell'_j}=0, \nonumber \\
\forall p\in {\mathcal{P}}_n,  \quad \forall\ell,\ell' \in X^{n_A}, \quad  \ell\neq\ell' .
\quad 1\leq n_{A}\leq n/2. \nonumber\\
\end{eqnarray}
\end{remark}
Let us now  investigate whether the system of equations (\ref{eq:phases}) admits a solution or not. In particular, it is important to count the number of equations and of variables and to look for which values of $n$ the system is over-determined.
\begin{theorem}[Number of equations and variables]
\label{th:eqs vs vars}
The set of equations (\ref{eq:phases}) determining a perfect MMES is a system of
\begin{equation}
m_e=2^{[n/2]}(2^{[n/2]}-1) \left(\begin{array}{c}n \\ {[n/2]} \end{array}\right)
\label{eq:me}
\end{equation}
real equations involving
\begin{equation}
m_x =
3\cdot 2^{n-1}-\frac{1+(-1)^n}{4}\left(
\begin{array}{c} n \\ {[n/2]} \end{array} \right)
\label{eq:mx}
\end{equation}
real variables.
\end{theorem}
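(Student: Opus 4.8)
The plan is to establish both equalities by direct enumeration, treating the equation count and the variable count separately. The only ingredients needed are the description of the phase equations in Theorem~\ref{th:phasesperfectMMES} and the parametrisation of the Fourier moduli in Theorem~\ref{th:populationMMES}; no further structural input is required.

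First I would count the equations $m_e$. The system (\ref{eq:phases}) is indexed by a maximal subset $A\subset S$ with $|A|=[n/2]$, of which there are $\binom{n}{[n/2]}$, together with an ordered pair $\ell,\ell'\in X^{A}$ with $\ell\neq\ell'$, of which there are $N_A(N_A-1)$ with $N_A=2^{[n/2]}$. Each such equation is the vanishing of a complex quantity, hence a priori two real equations. The key point is that, because the $r_k$ are real and nonnegative, interchanging $\ell$ and $\ell'$ replaces the left-hand side of (\ref{eq:phases}) by its complex conjugate --- this is nothing but the Hermiticity of $\rho_A$, $\langle\ell|\rho_A|\ell'\rangle=\overline{\langle\ell'|\rho_A|\ell\rangle}$ --- so the labels $(\ell,\ell')$ and $(\ell',\ell)$ impose the same real constraints. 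Thus for each fixed $A$ the $N_A(N_A-1)$ ordered pairs yield $N_A(N_A-1)/2$ genuinely independent complex equations, i.e.\ $N_A(N_A-1)$ real ones, and multiplying by $\binom{n}{[n/2]}$ gives
\begin{equation*}
m_e=2^{[n/2]}\bigl(2^{[n/2]}-1\bigr)\binom{n}{[n/2]},
\end{equation*}
which is (\ref{eq:me}). Note that the diagonal cases $\ell=\ell'$ are deliberately excluded here: by Theorem~\ref{th:flatmarg} they constrain only the moduli and are already solved in Theorem~\ref{th:populationMMES}.

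Next I would count the variables $m_x$ using the splitting into Fourier moduli and phases of Definition~\ref{def:modphas}. The unknowns entering (\ref{eq:phases}) are of two kinds. The Fourier phases $\zeta_k\in\mathbb{S}^1$, $k\in X^n$, contribute one real angle each, hence $N=2^n$ real variables. The Fourier moduli $r_k=\sqrt{P_S(k)}$ are not independent: by Theorem~\ref{th:populationMMES} they depend affinely on the real coefficients $c^{(r)}_j$ with $n/2<r\le n$ and $j\in[S^r]$, whose number has already been evaluated in (\ref{eq:numcjr}) as $2^{n-1}-\tfrac{1+(-1)^n}{4}\binom{n}{[n/2]}$. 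Adding the two contributions,
\begin{equation*}
m_x=2^n+\Bigl(2^{n-1}-\frac{1+(-1)^n}{4}\binom{n}{[n/2]}\Bigr)=3\cdot 2^{n-1}-\frac{1+(-1)^n}{4}\binom{n}{[n/2]},
\end{equation*}
which is (\ref{eq:mx}).

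Since the argument is pure bookkeeping there is no genuinely hard step; the places that need care --- and where I expect the only real risk of error --- are (i) not counting the conjugate equation twice, which is exactly what turns $2N_A(N_A-1)$ into $N_A(N_A-1)$ real equations per bipartition, and (ii) correctly importing the count (\ref{eq:numcjr}) of moduli parameters. For (ii) one should keep in mind that the normalisation $\sum_k r_k^2=1$ is already absorbed into the constant term $2^{-n}$ of $P_S(k)$, so it imposes no further constraint, and that the overall phase, although it leaves every equation in (\ref{eq:phases}) invariant, is retained among the $2^n$ phase variables; quotienting by it would merely shift $m_x$ by one and would not affect the exponential growth of the ratio $m_e/m_x$ recorded in the following remark.
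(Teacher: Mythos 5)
Your proposal is correct and follows essentially the same route as the paper: the real-equation count comes from observing that swapping $\ell$ and $\ell'$ conjugates the left-hand side of (\ref{eq:phases}), so the number of real equations equals the number of ordered pairs $N_A(N_A-1)$ times the number $\binom{n}{[n/2]}$ of maximal subsets, and the variable count is the $2^n$ phases plus the moduli parameters already enumerated in (\ref{eq:numcjr}). The extra remarks on the excluded diagonal terms and the normalisation being absorbed into $2^{-n}$ are accurate clarifications, not deviations.
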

\begin{proof}
By noting that exchanging $\ell$ and $\ell'$ one obtains the complex conjugate, the counting of real equations coincides with the total counting of equations  (\ref{eq:phases}). Since $\ell\neq\ell'$, we get
$$m_e=|X^{A}|\left(|X^{A}|-1\right)\, \#(A),$$ where $\#(A)$ is the number of maximal subsets $A\subset S$. Now, $|X^{A}|=2^{[n/2]}$ and $\#(A)=\left(\begin{array}{c}n \\ {[n/2]} \end{array}\right)$, and (\ref{eq:me}) follows.
On the other hand, the variables are the $2^n$ phases $\zeta$ and the parameters $c^{(r)}_j$, whose number is given by (\ref{eq:numcjr}), for a total number of $m_x$ variables.
\end{proof}
\begin{remark}
For large values of $n$, by  Stirling's approximation one gets
\begin{equation}
m_e \sim 2^{2 n} \sqrt{\frac{2}{n \pi}} , \qquad m_x\sim 3 \; 2^{n-1}, \qquad n\to\infty.
\end{equation}
\begin{table}
\caption{Number of equations vs number of variables}
\label{tab:e vs x}       
\centering
\begin{tabular}{lll}
\hline\noalign{\smallskip}
$n$ & $m_e$ & $m_x$  \\
\noalign{\smallskip}\hline\noalign{\smallskip}
2 & 4 & 5 \\
3 & 6 & 12 \\
4 & 72 & 21 \\
5 & 120 & 48\\
6 & 1120 & 86\\
7 & 1960 & 192\\
8 & 16800 & 349\\
$n\to\infty$ & $\sqrt{2/n \pi}\, 2^{2 n}$  & $3\; 2^{n-1}$\\
\noalign{\smallskip}\hline
\end{tabular}
\end{table}
As shown in Table~\ref{tab:e vs x}, for $n\geq 4$ the number of equations is larger than the number of variables and the system is overdetermined. Therefore, symmetries must play a crucial role in order to assure the existence of a solution.
\end{remark}

\subsection{Examples}

\subsubsection{Two qubits}
Let us consider the case of $n=2$ qubits.  $S=\{1,2\}$ and we get
from (\ref{eq:PSk})
\begin{equation}
r_k^2=P_{\{1,2\}}(k)=\frac{1}{4}(1+c \sigma_1\sigma_2),
\end{equation}
with  $\sigma_i=(2 k_i-1)$, $i\in S$. Normalization and positivity, $0\leq P_{\{1,2\}}(k)\leq 1$, $\forall k\in X^2$ imply that  $c\in [-1,1]$.
Equation (\ref{eq:phases}) particularizes to
\begin{eqnarray}
\begin{cases}
r_{00}r_{10}\zeta_{00}\bar{\zeta}_{10}+r_{01}r_{11} \zeta_{01}\bar{\zeta}_{11}=0 \\
r_{00}r_{01}\zeta_{00}\bar{\zeta}_{01}+r_{10}r_{11}\zeta_{10}\bar{\zeta}_{11}=0,
\end{cases}
\end{eqnarray}
and, by noting that  $r^2_{00}=r^2_{11}=(1+c)/4$ and
$r^2_{01}=r^2_{10}=(1-c)/4$, one gets
\begin{eqnarray}
\begin{cases}
\sqrt{1-c^{2}}\left(\zeta_{00}\bar{\zeta}_{10}+\zeta_{01}\bar{\zeta}_{11}\right)=0\\
\sqrt{1-c^{2}}\left(\zeta_{00}\bar{\zeta}_{01}+\zeta_{10}\bar{\zeta}_{11}\right)=0
\end{cases}
.
\end{eqnarray}
The above system reduces to a single equation
\begin{equation}
\sqrt{1-c^{2}}\left(\zeta_{00}\zeta_{11}\bar{\zeta}_{10}\bar{\zeta}_{01}+1\right)=0 .
\end{equation}
\begin{enumerate}
\item
A first class of solutions is $|c|=1$ and arbitrary phases. This yields,
for $c=1$, $r_{01}=r_{10}=0$ and  $r_{00}=r_{11}=1/\sqrt{2}$, whence
\begin{equation}
|\psi\rangle=\frac{1}{\sqrt{2}}\left(\zeta_{00}|00\rangle+\zeta_{11}|11\rangle\right),
\end{equation}
while, for $c=-1$, $r_{00}=r_{11}=0$ and $r_{01}=r_{10}=1/\sqrt{2}$ whence
\begin{equation}
|\psi\rangle=\frac{1}{\sqrt{2}}\left(\zeta_{01}|01\rangle+\zeta_{10}|10\rangle\right).
\end{equation}
The above states are known as Bell states. They are, obviously, maximally bipartite entangled. Indeed, for $n=2$ multipartite entanglement reduces to bipartite entanglement.

\item On the other hand, when $|c|\neq 1$, the  perfect MMES are
\begin{equation}
|\psi\rangle=\frac{1}{2}\sqrt{1+c}\left(\zeta_{00}|00\rangle+\zeta_{11}|11\rangle\right)+\frac{1}{2}\sqrt{1-c}\left(\zeta_{01}|01\rangle+\zeta_{10}|10\rangle\right),
\end{equation}
where the phases must satisfy the condition
\begin{equation}
\zeta_{00}\zeta_{11}=-\zeta_{01}\zeta_{10}.
\end{equation}
Therefore,
\begin{equation}
|\psi\rangle=\alpha \frac{\sqrt{1+c}}{2}\left(\beta|00\rangle+\bar{\beta}|11\rangle\right)- i \alpha \frac{\sqrt{1-c}}{2}\left(\gamma|01\rangle+\bar{\gamma}|10\rangle\right),
\end{equation}
with $\alpha=( \zeta_{00}\zeta_{11})^{1/2}$, $\beta=( \zeta_{00}\bar{\zeta}_{11})^{1/2}$ and $\gamma=(\zeta_{01}\bar{\zeta}_{10})^{1/2}$.

\item The particular  case $c=0$ corresponds to a uniform amplitude distribution $r_k=1/2$, $k\in X^2$. To such a class belong perfect MMES with phases that are $\pm 1$.
\begin{equation}
\ket{\psi}=\frac{1}{2} \sum_{k\in X^2} \zeta_k \ket{k}, \quad \zeta_{k}\in\{-1,+1\}, \quad   \prod_{k\in X^2}\zeta_{k}=-1.
\label{eq:2real}
\end{equation}

\end{enumerate}

\subsubsection{Three  qubits}
Let us consider the case of $n=3$ qubits.  $S=\{1,2,3\}$ and we get from (\ref{eq:PSk})
\begin{equation}
r_{k}^{2}= P_{1,2,3}(k)=\frac{1}{8}\left(1+c_{1}\sigma_{2}\sigma_{3}+c_{2}\sigma_{1}\sigma_{3}+c_{3}\sigma_{1}\sigma_{2}+d\sigma_{1}\sigma_{2}\sigma_{3}\right),
\label{eq:rk3}
\end{equation}
where $\sigma_i=(2 k_i -1)$, with $i\in S$.
On the other hand, from (\ref{eq:phases})  we obtain
\begin{equation}
\begin{cases}
z_{000}\bar{z}_{100}+z_{001}\bar{z}_{101}+z_{010}\bar{z}_{110} +z_{011}\bar{z}_{111} =0 \\
z_{000}\bar{z}_{010} + z_{001}\bar{z}_{011}+z_{100}\bar{z}_{110}+z_{101}\bar{z}_{111}=0 \\
z_{000}\bar{z}_{001}+z_{010}\bar{z}_{011}+z_{100}\bar{z}_{101}+z_{110}\bar{z}_{111}=0
\end{cases}
.
\end{equation}
Note that the three equations are obtained by a cyclic permutation of the three qubits $S$.

\begin{enumerate}

\item
If $c_i=0$  ($i\in S$) and $d\in[-1,1]$,  one gets from (\ref{eq:rk3})
\begin{equation}
 r^2_k= \frac{1}{8}\left(1+d\sigma_{1}\sigma_{2}\sigma_{3}\right)= \frac{1-(-1)^{|k|} |d|}{8}, \qquad k\in X^3.
\end{equation}
Thus,
\begin{equation} \label{sist3qubit}
\begin{cases}
\sqrt{1-d^{2}}\left(\zeta_{000}\bar{\zeta}_{100}+\zeta_{001}\bar{\zeta}_{101}+\zeta_{010}\bar{\zeta}_{110}+\zeta_{011}\bar{\zeta}_{111}\right)=0 \\
\sqrt{1-d^{2}}\left(\zeta_{000}\bar{\zeta}_{010}+\zeta_{001}\bar{\zeta}_{011}+\zeta_{100}\bar{\zeta}_{110}+\zeta_{101}\bar{\zeta}_{111}\right)=0 \\
\sqrt{1-d^{2}}\left(\zeta_{000}\bar{\zeta}_{001}+\zeta_{010}\bar{\zeta}_{011}+\zeta_{100}\bar{\zeta}_{101}+\zeta_{110}\bar{\zeta}_{111}\right)=0
\end{cases}
.
\end{equation}

\begin{enumerate}
\item
If  $|d|=1$ the phases are arbitrary and the MMES is
\begin{eqnarray}
\ket{\psi}&=&\frac{1}{2}\left(\zeta_{001}|001\rangle+\zeta_{010} |010\rangle+\zeta_{100} |100\rangle+\zeta_{111} |111\rangle\right)\nonumber\\
&=&\frac{1}{2}\sum_{|k|\, \mathrm{odd}} \zeta_k \ket{k},
\label{meps3.1}
\end{eqnarray}
when $d=1$, and
\begin{eqnarray} \label{meps3.2}
\ket{\psi}&=&\frac{1}{2}\left(\zeta_{000} |000\rangle+\zeta_{011}|011\rangle+\zeta_{101}|101\rangle+\zeta_{110}|110\rangle\right)
\nonumber\\
&=&\frac{1}{2}\sum_{|k|\, \mathrm{even}} \zeta_k \ket{k},
\end{eqnarray}
when $d=-1$

\item
When $|d|\neq 1$, the phases must satisfy
\begin{equation}
\begin{cases}
\zeta_{000}\bar{\zeta}_{100}+\zeta_{001}\bar{\zeta}_{101}=\alpha \\
\zeta_{000}\bar{\zeta}_{010}+\zeta_{001}\bar{\zeta}_{011}=\beta \\
\zeta_{000}\bar{\zeta}_{001}+\zeta_{010}\bar{\zeta}_{011}=\gamma \\
\zeta_{010}\bar{\zeta}_{110}+\zeta_{011}\bar{\zeta}_{111}=-\alpha \\
\zeta_{100}\bar{\zeta}_{110}+\zeta_{101}\bar{\zeta}_{111}=-\beta \\
\zeta_{100}\bar{\zeta}_{101}+\zeta_{110}\bar{\zeta}_{111}=-\gamma
\end{cases},
\end{equation}
with $|\alpha|,|\beta|,|\gamma|\leq 2$.
It is a system of $6$ equations in $8$ variables. Thus the general solutions, for fixed $d$, live on a $5$-dimensional manifold. A particular $3$-dimensional submanifold is obtained by $\alpha=\beta=\gamma=0$. In such a case
\begin{equation}
\begin{cases}
\zeta_{000}\bar{\zeta}_{100}=- \zeta_{001}\bar{\zeta}_{101} \\
\zeta_{010}\bar{\zeta}_{110}= -\zeta_{011}\bar{\zeta}_{111} \\
\zeta_{000}\bar{\zeta}_{010}= -\zeta_{001}\bar{\zeta}_{011} \\
\zeta_{100}\bar{\zeta}_{110}= -\zeta_{101}\bar{\zeta}_{111} \\
\zeta_{000}\bar{\zeta}_{001}= -\zeta_{010}\bar{\zeta}_{011} \\
\zeta_{100}\bar{\zeta}_{101}= -\zeta_{110}\bar{\zeta}_{111}
\end{cases}
.
\end{equation}
For example, the following MMES is an element of that manifold when $d=0$
\begin{eqnarray}
|\psi\rangle&=&\frac{1}{\sqrt{8}}\big(-|000\rangle+|001\rangle+|010\rangle+|011\rangle
\nonumber\\
& & \phantom{\frac{1}{\sqrt{8}}\big(}
+|100\rangle+|101\rangle+|110\rangle-|111\rangle\big).
\end{eqnarray}
As in the case of 2 qubits, this is an example of perfect MMES with uniform amplitudes $r_k=1/\sqrt{8}$ and real phases $\zeta_k\in\{-1,1\}$, with $k\in X^3$.
\end{enumerate}

\item
If $d=0$ and $c_i=c$ ($i\in S$) with $c\in[-1/3,1]$, one gets
\begin{equation}
 r^2_k= \frac{1}{8}\left(1+c(\sigma_{1}\sigma_{2}+\sigma_2\sigma_{3}+\sigma_3\sigma_1)\right)= \begin{cases}
\frac{1}{8}(1+3c) & \text{for } k\in\{000, 111\} \\
\frac{1}{8}(1-c) & \text{otherwise}. \\
\end{cases}
\end{equation}
Thus,
\begin{equation} \label{sist3qubit1}
\begin{cases}
\sqrt{(1+3c)(1-c)}\left(\zeta_{000}\bar{\zeta}_{100}+\zeta_{011}\bar{\zeta}_{111}\right) \\
\qquad\qquad\qquad\qquad\qquad\qquad + (1-c)\left(\zeta_{001}\bar{\zeta}_{101}+\zeta_{010}\bar{\zeta}_{110}\right)=0 \\
\sqrt{(1+3c)(1-c)}\left(\zeta_{000}\bar{\zeta}_{010}+\zeta_{101}\bar{\zeta}_{111}\right)\\
\qquad\qquad\qquad\qquad\qquad\qquad
+(1-c)\left(\zeta_{001}\bar{\zeta}_{011}+\zeta_{100}\bar{\zeta}_{110}\right)=0 \\
\sqrt{(1+3c)(1-c)}\left(\zeta_{000}\bar{\zeta}_{001}+\zeta_{110}\bar{\zeta}_{111}\right)\\
\qquad\qquad\qquad\qquad\qquad\qquad
+ (1-c)\left(\zeta_{010}\bar{\zeta}_{011}+\zeta_{100}\bar{\zeta}_{101}\right)=0
\end{cases}
.
\end{equation}

\begin{enumerate}
\item
If  $c=1$ the phases are arbitrary and the perfect MMES is
\begin{equation}
\ket{\psi}=\frac{1}{\sqrt{2}}\left(\zeta_{000}|000\rangle+\zeta_{111} |111\rangle\right),
\end{equation}
As a particular case, when $\zeta_{000}=\zeta_{111}=1$, one obtains the GHZ state \cite{ghz}.

\item
For $c<1$, the solutions live on a $5$-dimensional submanifold.
Note that if one tries a solution for which the phases are independent of $c$ one gets
\begin{equation}
\begin{cases}
\zeta_{000}\bar{\zeta}_{100}+\zeta_{011}\bar{\zeta}_{111}=0 \\
\zeta_{000}\bar{\zeta}_{010}+\zeta_{101}\bar{\zeta}_{111}=0 \\
\zeta_{000}\bar{\zeta}_{001}+\zeta_{110}\bar{\zeta}_{111}=0\\
\zeta_{001}\bar{\zeta}_{101}+\zeta_{010}\bar{\zeta}_{110}=0 \\
\zeta_{001}\bar{\zeta}_{011}+\zeta_{100}\bar{\zeta}_{110}=0 \\
\zeta_{010}\bar{\zeta}_{011}+\zeta_{100}\bar{\zeta}_{101}=0
\end{cases}
,
\end{equation}
that is,
\begin{equation}
\begin{cases}
\zeta_{100}\zeta_{011}=\zeta_{010}\zeta_{101}= \zeta_{001}\zeta_{110}=-\zeta_{000}\zeta_{111}\\
\zeta_{001}\zeta_{110}+\zeta_{010}\zeta_{101}=0 \\
\zeta_{001}\zeta_{110}+\zeta_{100}\zeta_{011}=0 \\
\zeta_{010}\zeta_{101}+\zeta_{100}\zeta_{011}=0
\end{cases}
,
\end{equation}
which has no solutions.

\end{enumerate}

\end{enumerate}

\section{Uniform MMES}
\label{sec:uniform MMES}

According to Theorem \ref{th:populationMMES}, a perfect MMES has  a population probability vector in the computational basis given by (\ref{eq:PSk}), whose marginals on maximal subsystems are all uniform. In particular, a uniform probability vector is compatible with a perfect MMES.
In this Section we will focus just on this  class of states, that have uniform amplitudes
\begin{equation}
r_k=|z_k|= \sqrt{P_S(k)}= 1/\sqrt{N}, \qquad \forall k\in X^n,
\end{equation}
and depend only on $N=2^n$ phases.
\begin{definition}[Uniform states]
A state $\ket{\psi}\in \mathcal{H}_S$ of the form
\begin{equation}
|\psi\rangle = \frac{1}{\sqrt{N}}\sum_{k\in X^n} \zeta_k|k\rangle, \qquad  \zeta=(\zeta_k)\in \mathbb{T}^N,\qquad  N=2^n ,
\label{eq:randomphase}
\end{equation}
is said to have \emph{uniform amplitudes} in the computational basis. A state with uniform amplitudes in the computational basis is also called a \emph{uniform state}.
\end{definition}

First of all, we have a complete characterization of uniform maximizers of the potential of multipartite entanglement.
\begin{theorem}[Fully factorized uniform states]
The fully factorized states with uniform amplitudes, $z=\zeta/\sqrt{N}$,  have $\zeta_k=\prod_{i\in S} \zeta_{k_i}^i$, with   $\zeta_{k_i}^i \in \mathbb{S}^1$,  $k\in X^n$.
\end{theorem}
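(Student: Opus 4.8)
The plan is to start from the structure of a fully factorized state given by Theorem \ref{th:fullyfactorized} and impose the uniform-amplitude condition factor by factor. By Theorem \ref{th:fullyfactorized}, a fully factorized state has Fourier coefficients $z_k = \prod_{i\in S}\alpha_{k_i}^i$ in the computational basis, with the normalization $|\alpha_0^i|^2+|\alpha_1^i|^2=1$ for every $i\in S$. The hypothesis $z=\zeta/\sqrt{N}$ means $|z_k|^2 = 1/N$ for all $k\in X^n$, that is $\prod_{i\in S}|\alpha_{k_i}^i|^2 = 1/N = \prod_{i\in S}(1/2)$ for every $k\in X^n$.

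The key step is to deduce $|\alpha_0^i| = |\alpha_1^i|$ for each fixed $i$. To do this, I would pick any $k\in X^n$ and let $k'$ be the sequence obtained from $k$ by flipping only the $i$-th bit. Since all other factors coincide and are nonzero (they must be nonzero, otherwise some $|z_k|^2$ would vanish, contradicting $|z_k|^2=1/N$), taking the ratio $|z_k|^2/|z_{k'}|^2 = |\alpha_{k_i}^i|^2/|\alpha_{1-k_i}^i|^2 = 1$ gives $|\alpha_0^i| = |\alpha_1^i|$. Combining this with the normalization $|\alpha_0^i|^2+|\alpha_1^i|^2=1$ forces $|\alpha_0^i| = |\alpha_1^i| = 1/\sqrt 2$ for every $i\in S$.

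It then remains to read off the phases. Writing $\alpha_{k_i}^i = \zeta_{k_i}^i/\sqrt 2$ with $\zeta_{k_i}^i\in\mathbb{S}^1$, one gets $z_k = \prod_{i\in S}\alpha_{k_i}^i = N^{-1/2}\prod_{i\in S}\zeta_{k_i}^i$, and comparison with $z_k = \zeta_k/\sqrt N$ yields $\zeta_k = \prod_{i\in S}\zeta_{k_i}^i$, as claimed. Conversely, any such $\zeta$ clearly defines a uniform fully factorized state, so the characterization is an equivalence.

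I do not expect a genuine obstacle here: the argument is a short direct computation. The only point deserving a line of care is the non-uniqueness of the factorization $z_k = \prod_i\alpha_{k_i}^i$ (one may rescale the single-qubit factors by constants of product $1$); this is harmless because the normalization condition $|\alpha_0^i|^2+|\alpha_1^i|^2=1$ pins down the moduli $|\alpha_{k_i}^i|$ uniquely, and the residual freedom is precisely absorbed into the phases $\zeta_{k_i}^i$, which is exactly the form asserted in the statement.
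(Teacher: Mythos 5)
Your proposal is correct and follows essentially the same route as the paper: both start from the factorization $z_k=\prod_{i\in S}\alpha^i_{k_i}$ of Theorem \ref{th:fullyfactorized}, impose the uniform-amplitude condition, and read off the phases. The only difference is that you spell out the bit-flip ratio argument showing $|\alpha^i_0|=|\alpha^i_1|=1/\sqrt{2}$, a step the paper's one-line proof leaves implicit.
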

\begin{proof}
The result is an immediate consequence of Theorem \ref{th:fullyfactorized}, by observing that $z_k=\prod_{i\in S} |\alpha^i_{k_i}| \zeta^i_{k_i}=\prod_{i\in S} |\alpha^i_{k_i}| \prod_{j\in S} \zeta^j_{k_j}= \zeta_k/\sqrt{N}$.
\qed
\end{proof}

The various expressions of purity of a bipartition $(A,\bar{A})$ considered in Section  \ref{sec:bipartite}  simplify for uniform states. In particular, by plugging (\ref{eq:randomphase}) into
(\ref{eq:splitA}) we find
\begin{theorem}[Purity for uniform states]
\label{th:piAuniform}
Consider a state with uniform amplitudes in the computational basis $z=\zeta/\sqrt{N}$, with $\zeta\in\mathbb{T}^N$. Then for any bipartition $(A,\bar{A})$,
\begin{equation}
\pi_{A}(\zeta)=\frac{N_A+N_{\bar{A}}-1}{N}
+\frac{1}{N^2}\sum_{k\in X^S}\sum_{l\in X^A_*} \sum_{m\in X^{\bar{A}}_*}
\Re\left(
\zeta_{k}\,\bar{\zeta}_{k\oplus l}\,\zeta_{k\oplus l\oplus m}\,\bar{\zeta}_{k\oplus m}\right),
\label{eq:piAuniform}
\end{equation}
where $N_A=2^{n_A}$, and $N=2^n$.
\end{theorem}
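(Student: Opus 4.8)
The plan is to obtain (\ref{eq:piAuniform}) directly from the split form of the purity in Remark~\ref{rm:splitA}, namely Eq.~(\ref{eq:splitA}), by specializing it to uniform amplitudes. First I would substitute $z_k=\zeta_k/\sqrt{N}$, so that $|z_k|^2=1/N$ and $|z_k|^4=1/N^2$ identically in $k$; the first three sums of (\ref{eq:splitA}) then reduce to pure counting.

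Next I would count the terms. The sum $\sum_{k\in X^n}|z_k|^4$ has $N$ terms and contributes $1/N$. The sum $\sum_{k\in X^S}\sum_{l\in X^A_*}|z_k|^2|z_{k\oplus l}|^2$ has $N\,(N_A-1)$ terms, since $|X^A_*|=N_A-1$, each equal to $N^{-2}$, hence it contributes $(N_A-1)/N$; the analogous sum over $m\in X^{\bar A}_*$ contributes $(N_{\bar A}-1)/N$. Adding the three constants gives $\bigl(1+(N_A-1)+(N_{\bar A}-1)\bigr)/N=(N_A+N_{\bar A}-1)/N$, which is the first term on the right-hand side of the claim. (This is consistent with the counts $\mathcal{N}^{(1)}_{\mathrm{tot}}$ and $\mathcal{N}^{(2)}_{\mathrm{tot}}$ recorded in Remark~\ref{rm:splitA}.)

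Finally, for the interference sum I would pull the common factor $N^{-2}$ out of the triple sum over $k\in X^S$, $l\in X^A_*$, $m\in X^{\bar A}_*$, and reorder the four phases inside $\Re[\,\cdot\,]$ — legitimate since they multiply commutatively — so as to match the ordering $\zeta_k\,\bar\zeta_{k\oplus l}\,\zeta_{k\oplus l\oplus m}\,\bar\zeta_{k\oplus m}$ in (\ref{eq:piAuniform}). This completes the derivation.

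I do not expect any genuine obstacle: the result is essentially a bookkeeping specialization of (\ref{eq:splitA}). The only points worth double-checking are that the two mixed-modulus sums carry exactly $N\,(N_A-1)$ and $N\,(N_{\bar A}-1)$ terms, which follows at once from $|X^A_*|=N_A-1$ and $|X^{\bar A}_*|=N_{\bar A}-1$, and that the reordering of the phases under $\Re$ introduces no spurious sign or conjugation.
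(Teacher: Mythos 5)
Your proof is correct and follows essentially the same route as the paper: substitute $z=\zeta/\sqrt{N}$ into Eq.~(\ref{eq:splitA}), note that the first three sums contain $\mathcal{N}^{(1)}_{\mathrm{tot}}+\mathcal{N}^{(2)}_{\mathrm{tot}}=N(N_A+N_{\bar{A}}-1)$ terms each equal to $1/N^2$, and keep the interference sum. Your term counts and the remark on commutative reordering of the phases are both accurate.
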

\begin{proof}
When $z=\zeta/\sqrt{N}$, in the first three sums  of  (\ref{eq:splitA}) all terms are equal to $1/N^2$. Their number, according to Remark \ref{rm:splitA}, is $\mathcal{N}_{\mathrm{tot}}^{(1)}+\mathcal{N}_{\mathrm{tot}}^{(2)}= N(N_A + N_{\bar{A}} -1)$, and the result follows.
\qed
\end{proof}
\begin{remark}
Note that the first term on the
right-hand side corresponds to the
average entanglement  for typical states \cite{entrandom,aaa5,LLoyd,Lubkin,Page}, whose phases are uniformly distributed on the torus $\mathbb{T}^N$. Thus, the
combination of phases in the second term can increase or reduce the
value of the purity with respect to the typical one (at a fixed
bipartition).

Finally observe that, by setting $\zeta_k=\mathrm{e}^{i \varphi_k}$, with $\varphi_k\in [0,2\pi)$, $k\in X^n$, one gets
\begin{eqnarray}
\pi_{A}(\zeta)&=&\frac{N_A+N_{\bar{A}}-1}{N}\nonumber \\
&+& \frac{1}{N^2}\sum_{k\in X^S}\sum_{l\in X^A_*} \sum_{m\in X^{\bar{A}}_*}
\cos\left(
\varphi_{k}-\varphi_{k\oplus l}+\varphi_{k\oplus l\oplus m}-\varphi_{k\oplus m}\right).
\end{eqnarray}
\end{remark}
\begin{remark}
Note that for a uniform fully factorized state, since
$\zeta_{k}=\zeta^A_{k_A} \zeta^{\bar{A}}_{k_{\bar{A}}}$, with $\zeta^A_l=\prod_{i\in A} \zeta^{i}_{l_i}$, $\forall k\in X^n$,
one gets
\begin{equation}
\zeta_{k}\,\bar{\zeta}_{k\oplus l}\,\zeta_{k\oplus l\oplus m}\,\bar{\zeta}_{k\oplus m}=
(\zeta_{k_A}^A \zeta_{k_{\bar{A}}}^{\bar{A}}) (\bar{\zeta}_{k_A\oplus l}^A \bar{\zeta}_{k_{\bar{A}}}^{\bar{A}}) (\zeta_{k_A\oplus l}^A \zeta_{k_{\bar{A}}\oplus m}^{\bar{A}}) (\bar{\zeta}_{k_A}^A \bar{\zeta}_{k_{\bar{A}}\oplus m}^{\bar{A}})=1,
\end{equation}
$\forall k \in X^S$, $\forall l\in X^A$ , $\forall l\in X^{\bar{A}}$. Therefore, all terms of the sum
in (\ref{eq:piAuniform}) are 1, and
\begin{equation}
\pi_{A}(\zeta)=\frac{N_A+N_{\bar{A}}-1}{N}+ \frac{1}{N^2} N (N_A-1)(N_{\bar{A}}-1)=1,
\end{equation}
as it should.
\end{remark}
The counterpart of Theorem \ref{th:piAuniform} for the potential of multipartite entanglement is stated in the following
\begin{theorem}[Potential for uniform states]
\label{th:pimeunif}
If the Fourier amplitudes in the computational basis  are uniform, $z=\zeta/\sqrt{N}$, with $\zeta\in\mathbb{T}^N$, then the potential of multipartite entanglement reads
\begin{eqnarray}
\pi_{\mathrm{ME}}(\zeta)&=& \frac{N_A+N_{\bar{A}}-1}{N}  \nonumber\\
&+&\frac{1}{N^2}\sum_{k\in X^n} \sum_{l,m\in X^n_*}
 g(l,m;[n/2])\, \Re\left(\zeta_k\, \zeta_{k\oplus l\oplus m}\, \bar{\zeta}_{k\oplus l}\,\bar{\zeta}_{k\oplus m}\right),\nonumber\\
 \label{eq:pimeunif}
\end{eqnarray}
where $N_A=2^{[n/2]}$, $N_{\bar{A}}=2^{[(n+1)/2]}$, and $N=2^n$.
\end{theorem}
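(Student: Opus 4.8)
The plan is to specialize the ``Form 3'' expression of the potential, namely (\ref{eq:splitpme}) of Corollary~\ref{th:pme2}, to uniform amplitudes $z=\zeta/\sqrt{N}$. Since then $|z_k|^4=1/N^2$ and $|z_k|^2|z_{k\oplus l}|^2=1/N^2$ for every $k,l\in X^n$, the first sum in (\ref{eq:splitpme}) collapses to $\sum_{k\in X^n}1/N^2=1/N$; the interference sum, after pulling out the common factor $1/N^2$ from $z_k\,z_{k\oplus l\oplus m}\,\bar z_{k\oplus l}\,\bar z_{k\oplus m}$, reproduces verbatim the double sum $\frac{1}{N^2}\sum_{k}\sum_{l,m\in X^n_*}g(l,m;[n/2])\,\Re(\zeta_k\,\zeta_{k\oplus l\oplus m}\,\bar\zeta_{k\oplus l}\,\bar\zeta_{k\oplus m})$ displayed in the statement; and the middle term becomes $\frac{2}{N^2}\sum_{k\in X^n}\sum_{l\in X^n_*}\hat g(|l|,0;[n/2])=\frac{2}{N}\sum_{l\in X^n_*}\hat g(|l|,0;[n/2])$, the sum over $k$ having produced the factor $N=2^n$. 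Thus the whole theorem reduces to the scalar identity
\[
\frac{1}{N}+\frac{2}{N}\sum_{l\in X^n_*}\hat g(|l|,0;[n/2])=\frac{N_A+N_{\bar A}-1}{N}.
\]

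To prove this I would group the sum over $l\in X^n_*$ by Hamming weight $s=|l|$, which contributes a factor $\binom{n}{s}$, and use the multinomial form (\ref{eq:hatg1}) of $\hat g$ from Remark~\ref{th:Delta1}: for $t=0$ it reads $\hat g(s,0;n_A)=\tfrac12\binom{n}{s}^{-1}\bigl[\binom{n_A}{s}+\binom{n_{\bar A}}{s}\bigr]$. The factor $\binom{n}{s}$ then cancels $\binom{n}{s}^{-1}$, leaving $\sum_{l\in X^n_*}\hat g(|l|,0;n_A)=\tfrac12\sum_{s=1}^{n}\bigl[\binom{n_A}{s}+\binom{n_{\bar A}}{s}\bigr]=\tfrac12\bigl[(2^{n_A}-1)+(2^{n_{\bar A}}-1)\bigr]=\tfrac12(N_A+N_{\bar A}-2)$, where I used $\binom{m}{s}=0$ for $s>m$ together with $\sum_{s=0}^{m}\binom{m}{s}=2^m$. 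Substituting back gives the constant $(N_A+N_{\bar A}-1)/N$ with $n_A=[n/2]$, $n_{\bar A}=[(n+1)/2]$, $N_A=2^{[n/2]}$, $N_{\bar A}=2^{[(n+1)/2]}$, which completes the proof.

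A second, more structural route, which I would mention as a cross-check, is to average Theorem~\ref{th:piAuniform} over all balanced bipartitions: the first term $(N_A+N_{\bar A}-1)/N$ is identical for every bipartition with $|A|=[n/2]$, so it passes through the average unchanged, while averaging the interference term over all such $A$ reconstructs the coupling function $g(l,m;[n/2])$ by exactly the combinatorial bookkeeping already carried out in the proofs of Theorems~\ref{th:Delta} and~\ref{th:pme1} (those proofs do not use the particular form of the summand, so replacing the monomials is immaterial). I do not expect a genuine obstacle here: the only delicate point is the constant term, i.e.\ checking that the $\mathcal{N}^{(1)}_{\mathrm{tot}}+\mathcal{N}^{(2)}_{\mathrm{tot}}=N(N_A+N_{\bar A}-1)$ ``diagonal'' monomials of Remark~\ref{rm:splitA}, each equal to $1/N^2$ for a uniform state, indeed sum to $(N_A+N_{\bar A}-1)/N$ with $N_A,N_{\bar A}$ read off the balanced bipartition — equivalently, the binomial identity of the previous paragraph — and this is the step I would verify most carefully.
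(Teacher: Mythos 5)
Your proposal is correct and follows essentially the same route as the paper: both specialize Form 3 (Corollary \ref{th:pme2}) to $z=\zeta/\sqrt{N}$, reduce the constant to $\frac{1}{N}+\frac{2}{N}\sum_{l\in X^n_*}\hat g(|l|,0;[n/2])$, and evaluate that sum by grouping over Hamming weight and invoking the multinomial form (\ref{eq:hatg1}) with $t=0$, so that $\binom{n}{s}$ cancels and $\sum_s\bigl[\binom{n_A}{s}+\binom{n_{\bar A}}{s}\bigr]$ gives $N_A+N_{\bar A}-2$. The delicate step you flagged is exactly the one the paper carries out, and your arithmetic for it is right.
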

\begin{proof}
When $z=\zeta/\sqrt{N}$, in the first two sums  of  (\ref {eq:splitpme}) all terms are equal to $1/N^2$ and one obtains
\begin{eqnarray*}
\pi_{\mathrm{ME}}(\zeta)&=& \frac{1}{N} + \frac{2}{N} \sum_{l\in X^n_*} \hat{g}(|l|,0;[n/2])
\nonumber\\
&+&\frac{1}{N^2}\sum_{k\in X^n} \sum_{l,m\in X^n_*}
 g(l,m;[n/2])\, \Re\left(\zeta_k\, \zeta_{k\oplus l\oplus m}\, \bar{\zeta}_{k\oplus l}\,\bar{\zeta}_{k\oplus m}\right).
 \end{eqnarray*}
 We get
\begin{equation*}
\sum_{l\in X^n_*} \hat{g}(|l|,0;n_A)=\sum_{l\in X^n_*} \sum_{0\leq s\leq n}  \delta_s(|l|)\, \hat{g}(s, 0 ;n_A)
=\sum_{1\leq s \leq n}  \left(\begin{array}{c} n  \\ s \end{array} \right) \hat{g}(s, 0 ;n_A)
\end{equation*}
and, from (\ref{eq:hatg1}),
\begin{equation*}
\hat{g}(s, 0 ;n_A)= \frac{1}{2}\left(\begin{array}{c} n \\ s \end{array}\right)^{-1}  \left[
\left(\begin{array}{c} {n_A}  \\ s \end{array} \right)
+\left(\begin{array}{c} {n_{\bar{A}}} \\ s \end{array} \right)\right] .
\end{equation*}
Thus,
\begin{equation*}
2 \sum_{l\in X^n_*} \hat{g}(|l|,0;n_A)=\sum_{1\leq s \leq n}
 \left[
\left(\begin{array}{c} {n_A}  \\ s \end{array} \right)
+\left(\begin{array}{c} {n_{\bar{A}}} \\ s \end{array} \right)\right]=2^{n_A} + 2^{n_{\bar{A}}}-2,
\end{equation*}
and, by setting $n_A=[n/2]$, the result follows.
\qed
\end{proof}
We will now use Theorem \ref{th:pimeunif} and look for the uniform minimizers of the potential of multipartite entanglement.

\subsection{Two qubits}
For two qubits $n=2$, we have $N=4$, $N_A=N_{\bar{A}}=2$ and (\ref{eq:pimeunif}) becomes
\begin{eqnarray}
\pi_{\mathrm{ME}}(\zeta)&=&\frac{3}{4}+\frac{1}{16}\sum_{k\in X^2} \sum_{l,m\in X^2_*}
 g(l,m;1)\, \Re\left(\zeta_k\, \zeta_{k\oplus l\oplus m}\, \bar{\zeta}_{k\oplus l}\,\bar{\zeta}_{k\oplus m}\right)\nonumber\\
&=&\frac{3}{4}+\frac{1}{8}\sum_{k\in X^2}
 \hat{g}(1,1;1)\, \Re\left(\zeta_k\, \zeta_{k\oplus 11}\, \bar{\zeta}_{k\oplus 01}\,\bar{\zeta}_{k\oplus 10}\right).
\end{eqnarray}
From (\ref{eq:hatg}) we get  $\hat{g}(1,1;1)$, hence
\begin{equation}
\pi_{\mathrm{ME}}(\zeta)=\frac{3}{4}+\frac{1}{4}\Re\left(
\zeta_{00}\zeta_{11}\bar{\zeta}_{01}\bar{\zeta}_{10}\right). \label{eq:costfunction2}
\end{equation}
Uniform perfect MMES are solutions of the equation
\begin{equation}
\pi_{\mathrm{ME}}(\zeta)=\frac{1}{2},  \qquad \zeta\in \mathbb{T}^4,
\end{equation}
that is
\begin{equation}
\zeta_{00}\zeta_{11}\bar{\zeta}_{01}\bar{\zeta}_{10}=-1,
\end{equation}
which yields
\begin{equation}
\ket {\psi_2}=\frac{1}{2}\left(\zeta_{00}\ket{00}
+\zeta_{01} \ket{01}+\zeta_{10}\ket{10}
-\bar{\zeta}_{00}\zeta_{01}\zeta_{10}\ket{11}\right).
\label{eq:optimal2}
\end{equation}
In this degenerate case, multipartite entanglement coincides with
bipartite entanglement, and this state is obviously equivalent, up to local
unitaries, to a Bell state. A particular subclass is formed by uniform perfect MMES (\ref{eq:optimal2}) with real phases $\zeta\in\{-1,+1\}^4$. Their number is $2^3$ and has been already found by using a probabilistic approach. See (\ref{eq:2real}).

\subsection{Three qubits}

For $n=3$ qubits, $N=8$, $N_A=2$, $N_{\bar{A}}=4$, and  one must look for the solutions of
\begin{equation}
\pi_{\mathrm{ME}}(\zeta)=\frac{1}{2}, \qquad \zeta\in \mathbb{T}^8,
\end{equation}
where, from (\ref{eq:pimeunif}),
\begin{equation}
\pi_{\mathrm{ME}}(\zeta)=\frac{5}{8}+\frac{1}{64}\sum_{k\in X^3} \sum_{l,m\in X^3_*}
 g(l,m;1)\, \Re\left(\zeta_k\, \zeta_{k\oplus l\oplus m}\, \bar{\zeta}_{k\oplus l}\,\bar{\zeta}_{k\oplus m}\right)
\end{equation}
Due to the constraint $\delta_0(l \wedge m)$ in the coupling function $g$, see Theorem \ref{th:Delta}, one can  easily see that the pairs that yield nonvanishing contributions to the sum are
\begin{equation}
(l,m)=\big(p(001),p(010)\big), \qquad (l,m)=\big(p(001),p(110)\big),  \quad p\in\mathcal{C}_3,
\end{equation}
and the pairs obtained by exchanging $l$ and $m$, where $\mathcal{C}_3\subset\mathcal{P}_3$ is the subgroup of the $3$ cyclic permutations defined in (\ref{eq:C3def}).
Therefore,
\begin{eqnarray}
\pi_{\mathrm{ME}}(\zeta)&=&\frac{5}{8}+\frac{1}{32}\sum_{k\in X^3} \sum_{p\in\mathcal{C}_3} \Big[
 \hat{g}(1,1;1)\, \Re\left(\zeta_k\, \zeta_{k\oplus p(011)}\, \bar{\zeta}_{k\oplus p(001)}\,\bar{\zeta}_{k\oplus p(010}\right)\nonumber\\
& &\phantom{\frac{5}{8}+\frac{1}{32}\sum_{k\in X^3} \sum_{p\in\mathcal{C}_3}  }
 + \hat{g}(1,2;1)\, \Re\left(\zeta_k\, \zeta_{k\oplus p(111)}\, \bar{\zeta}_{k\oplus p(001)}\,\bar{\zeta}_{k\oplus p(110}\right) \Big] \nonumber\\
&=&\frac{5}{8}+\frac{1}{192} \sum_{p\in\mathcal{C}_3} \sum_{k\in X^3} \Big[
2 \Re\left(\zeta_{p(k)}\, \zeta_{p(k\oplus 011)}\, \bar{\zeta}_{p( k\oplus 001)}\,\bar{\zeta}_{p( k\oplus 010}\right)\nonumber\\
& &\phantom{\frac{5}{8}+\frac{1}{192}\sum_{k\in X^3} \sum_{p\in\mathcal{C}_3}  }
 + \Re\left(\zeta_{p(k)}\, \zeta_{p( k\oplus 111)}\, \bar{\zeta}_{p( k\oplus 001)}\,\bar{\zeta}_{p( k\oplus 110}\right) \Big] ,
\end{eqnarray}
since from (\ref{eq:hatg}) we get  $\hat{g}(1,1;1)=1/3$ and $\hat{g}(1,2;1)=1/6$.
By performing the sum over $X^3$ we finally obtain
\begin{eqnarray}
\pi_{\mathrm{ME}}(\zeta)&=&\frac{5}{8} +  \frac{1}{48}\sum_{p\in\mathcal{C}_3}\Big[
2 \Re(\zeta_{p(000)} \zeta_{p(011)}\bar{\zeta}_{p(001)}\bar{\zeta}_{p(010)})\nonumber\\
& &\phantom{\frac{5}{8} +  \frac{1}{48}\sum_{p\in\mathcal{C}_3}}
+  2 \Re(\zeta_{p(111)} \zeta_{p(100)} \bar{\zeta}_{p(110)}\bar{\zeta}_{p(101)})
\nonumber\\
& &\phantom{\frac{5}{8} +  \frac{1}{48}\sum_{p\in\mathcal{C}_3}}
+\Re(\zeta_{p(000)}\zeta_{p(111)}\bar{\zeta}_{p(001)}\bar{\zeta}_{p(110)})
\nonumber\\
& &\phantom{\frac{5}{8} +  \frac{1}{48}\sum_{p\in\mathcal{C}_3}}
+ \Re(\zeta_{p(010)}\zeta_{p(101)}\bar{\zeta}_{p(011)}\bar{\zeta}_{p(100)}) \Big].
\end{eqnarray}

There are 3 families of
solutions, living on the following 5-dimensional
submanifolds
\begin{eqnarray}
M_p&=&\Big\{(\zeta_k) \in \mathbb{T}^8 \,|\,
\zeta_{p(000)}\zeta_{p(111)}\bar{\zeta}_{p(001)}\bar{\zeta}_{p(110)}=+1,\nonumber\\
& &\phantom{\{(\varphi_k)  \in \mathbb{T}^n  \,|\,}
\zeta_{p(010)} \zeta_{p(101)}\bar{\zeta}_{p(100)}\bar{\zeta}_{p(011)}=+1,\nonumber\\
& &\phantom{\{(\varphi_k)  \in \mathbb{T}^n  \,|\,}
\zeta_{p(000)} \zeta_{p(011)}\bar{\zeta}_{p(001)}\bar{\zeta}_{p(010)}=-1\Big\},
\quad p\in\mathcal{C}_3.
\end{eqnarray}
Indeed, if $\zeta \in M_p$ it is an easy task to see that
\begin{eqnarray}
\begin{cases}
\zeta_{q(000)}\zeta_{q(111)}\bar{\zeta}_{q(001)}\bar{\zeta}_{q(110)}= a_{p^{-1}q(1)}\\
\zeta_{q(010)} \zeta_{q(101)}\bar{\zeta}_{q(100)}\bar{\zeta}_{q(011)}=b_{p^{-1}q(1)}\\
\zeta_{q(000)} \zeta_{q(011)}\bar{\zeta}_{q(001)}\bar{\zeta}_{q(010)}= c_{p^{-1}q(1)}\\
\zeta_{q(111)} \zeta_{q(100)}\bar{\zeta}_{q(110)}\bar{\zeta}_{q(101)}= d_{p^{-1}q(1)}
\end{cases}  \quad q\in\mathcal{C}_3,
\end{eqnarray}
where
\begin{equation}
\begin{cases}
a= (+1,-\alpha,-\alpha)\\
b= (+1,-\bar{\alpha},-\alpha)\\
c= (-1,+ \alpha,-1)\\
d= (-1,+ \alpha,-1)
\end{cases},
\label{eq:abcd}
\end{equation}
with $\alpha \in \mathbb{S}^1$ arbitrary.
Therefore, the sum in  $\pi_{\mathrm{ME}}(\zeta)$ reads
\begin{eqnarray}
& &\sum_{p\in\mathcal{C}_3} \Re\left(a+b+2 c+2 d\right)_{p^{-1}q(1)}
=\sum_{i\in S} \Re\left(a+b+2 c+2 d\right)_i \nonumber\\
& & = 2 - 8 +4\Re\alpha -4\Re\alpha=-6 ,
\end{eqnarray}
yielding $\pi_{\mathrm{ME}}(\zeta)=1/2$.

Note that, in agreement with Theorem \ref{th:pme3e},  $\pi_{\mathrm{ME}}(\zeta)$
contains $\mathcal{N}^{(4)}=12$ distinct terms that depend on phases, 6 of which are double weighted. The above solutions force 2 terms to the value + 1,  and $4\times 2$ terms to the value $=-1$. The remaining ones are symmetric around $0$ and cancel.

The corresponding uniform perfect MMES are
\begin{eqnarray}
\ket {\psi_p}&=&\frac{1}{\sqrt{8}}\Big(\zeta_{p(000)}\ket{p(000)}
+\zeta_{p(001)}\ket{p(001)}+\zeta_{p(010)}\ket{p(010)}
\nonumber\\
& &\quad\; -\bar{\zeta}_{p(000)}\zeta_{p(001)}\zeta_{p(010)}\ket{p(011)} + \zeta_{p(100)}\ket{p(100)}
\nonumber \\
& &\quad\; -\bar{\zeta}_{p(000)}\zeta_{p(001)}\zeta_{p(100)}\ket{p(101)}
+\zeta_{p(110)}\ket{p(110)}\nonumber\\
& &\quad\; +\bar{\zeta}_{p(000)}\zeta_{p(001)}\zeta_{p(110)}\ket{p(111)}\Big),
\quad p\in\mathcal{C}_3.
\label{eq:optimal3}
\end{eqnarray}
At present we do not know whether there exist other classes of uniform perfect MMES than (\ref{eq:optimal3}). Numerical evidence seems to corroborate the conjecture that (\ref{eq:optimal3}) describe all uniform perfect MMES, but we could not prove it.

\subsubsection{Real uniform MMES}
Let us now look for uniform perfect MMES whose phases are all real, i.e.\ $\zeta\in\{+1,\-1\}^8$.
A necessary condition is that $\alpha$ is real, $\alpha\in\{-1,+1\}$. In particular, it is an easy task to prove that  $\alpha=-1$  iff $a, b, c, d$ in (\ref{eq:abcd})
are permutation invariant, iff
\begin{equation}
 a=b=(1,1,1), \qquad c=d=(-1,-1,-1).
\end{equation}
Thus $\alpha=-1$ characterizes the 4-dimensional intersection
\begin{equation}
M_\star=M_{s^0}\cap M_ {s^1} =M_ {s^1}\cap M_ {s^2} = M_ {s^2}\cap M_ {s^0} =\bigcap_{p\in\mathcal{C}_3} M_p.
\end{equation}
On the other hand, $\alpha=+1$  determines the following three nonintersecting 4-dimensional submanifolds
\begin{equation}
N_p=M_p\cap \{(\zeta_k) \in \mathbb{T}^8 | \alpha=\zeta_{p(000)}\zeta_{p(110)}\bar{\zeta}_{p(010)}\bar{\zeta}_{p(100)}=+1\}, \qquad p\in \mathcal{C}_3.
\end{equation}
Therefore,  all uniform perfect MMES with real $\zeta$ belongs to one of the above nonintersecting manifolds, namely
\begin{equation}
\{\text{real uniform MMES}\}  \subset \bigcup_{p\in\mathcal{C}_3} N_p \cup M_\star.
\end{equation}
Thus the total number of real uniform perfect MMES is $4 \times 2^4=2^6$.
They are given by
\begin{equation}
\ket{\psi}=\frac{1}{\sqrt{8}} \sum_{k\in X^3} \zeta_k \ket{k}, \qquad
\zeta\in \{-1,1\}^8
\end{equation}
with
\begin{equation}
\begin{cases}
\zeta_{000}\zeta_{001}\zeta_{010}\zeta_{011}= x_j\\
\zeta_{000} \zeta_{001}\zeta_{100}\zeta_{101}=y_j\\
\zeta_{000} \zeta_{010}\zeta_{100}\zeta_{110}= z_j\\
\zeta_{001} \zeta_{010}\zeta_{100}\zeta_{111}= w_j,
\end{cases}
\qquad 1\leq j\leq 4,
\end{equation}
where
\begin{eqnarray}
& & x= (-1,-1,-1,+1), \quad
y = (-1,-1,+1,-1),\nonumber\\
& &z= (-1,+1,-1,-1),\quad
w= (-1,+1,+1,+1).
\end{eqnarray}

\subsection{$n>3$ qubits}

For $n=4$ qubits,  $N=16$, $N_A=N_{\bar{A}}=4$, and a brute force enumeration shows that the minimum value of the potential of multipartite entanglement in  the class of real uniform states,  is
\begin{equation}
\min \left\{\pi_{\mathrm{ME}}^{(4)}(\zeta) \,|\, \zeta\in \{-1,1\}^{16} \right\}= \frac{1}{3}>\frac{1}{4}.
\end{equation}
In fact, there are $1056$ minimizers, among which, there is, e.g.
\begin{equation}
\zeta=(-1,-1,-1,-1,-1,-1,+1,+1, -1,+1,-1,+1,+1,-1,-1,+1).
\label{eq:MMES4}
\end{equation}
There is numerical evidence that $1/3$ is the minimum of the multipartite entanglement, and thus it is not an artifact of the restriction to real uniform states. In fact, it has been proved that for $n=4$ the minimum of $\pi_{\mathrm{ME}}$ is strictly larger than $1/4$ \cite{sudbery1,sudbery}, but still its value is  unknown  \cite{borras}.
This is a first example of
frustration among the bipartitions, that prevents the existence of a
perfect MMES: the requirement that purity be
minimal for all balanced bipartitions generate conflicts already for
$n=4$ qubits.

For $n=5$ and 6, the expressions become more complicate. Here, we will not discuss this cases. We will only exhibit two real uniform perfect MMES,  solutions to
\begin{equation}
\pi_{\mathrm{ME}}^{(5)}(\zeta)=\frac{1}{4}, \qquad \zeta\in\{-1,+1\}^{32}
\end{equation}
and
\begin{equation}
\pi_{\mathrm{ME}}^{(6)}(\zeta)=\frac{1}{8}, \qquad \zeta\in\{-1,+1\}^{64},
\end{equation}
respectively. Therefore,  interestingly, frustration is present for
$n=4$ qubits, while it is absent for $n=5$ and 6.

For example, a
5-qubits real uniform perfect MMES is defined by Eq.\ (\ref{eq:randomphase}) with
the following set of phases
\begin{eqnarray}
\zeta &=&(+1,+1,+1,+1,+1,-1,-1,+1, +1,-1,-1,+1,+1,+1,+1,+1,\nonumber\\
& &\;\;\!+1,+1,-1,-1,+1,-1,+1,-1,-1,+1,-1,+1,-1,-1,+1,+1) \nonumber\\
\label{eq:ideal5}
\end{eqnarray}
and can be shown to live on a 7-dimensional manifold,
while a
6-qubits real uniform perfect MMES has
the following set of phases
\begin{eqnarray}
\zeta &=&(+1,+1,-1,+1,-1,-1,-1,+1, -1,-1,+1,-1,-1,-1,-1,+1,\nonumber\\
& &\;\;\!-1,+1,-1,-1,-1,+1,+1,+1,-1,+1,-1,-1,+1,-1,-1,-1,\nonumber\\
& &\;\;\!+1,-1,-1,-1,-1,+1,-1,-1,+1,-1,-1,-1,+1,-1,+1,+1,\nonumber\\
& &\;\;\!+1,+1,+1,-1,+1,+1,-1,+1,-1,-1,-1,+1,+1,+1,-1,+1). \nonumber\\
\label{eq:MMES6}
\end{eqnarray}

By using the theory of  quantum weight enumerators and quantum codes \cite{rains1,rains3,rains2}, it has been proved that \cite{scott}
\begin{equation}
\min\{ \pi_{\mathrm{ME}}^{(n)}(z)\,|\, z\in \mathbb{S}^{2^{n+1}-1} \}> 2^{-[n/2]}, \quad \text{for}\; n\geq 8,
\end{equation}
and thus there is frustration among the bipartitions that prevents the existence of a
$n$-qubit perfect MMES, for $n\geq 8$.
The case $n=7$ is still open. There is numerical evidence that it is frustrated too, but no conclusive arguments.

Summarizing, perfect MMES exist for $n=2, 3, 5, 6$ and, possibly, for $n=7$. For $n=4$ and $n\geq 8$ there is frustration and the minimum of the potential of multipartite entanglement is strictly larger than $2^{-[n/2]}$.
Interestingly enough, in the cases considered ($n\leq6$) we have shown that the (conjectured) minimum of the potential is attained by uniform states with real phases. In such a case, in order to study the structure of multipartite entanglement in a quantum state of $n$ qubits, and in particular the minima of its potential,  one can  instead consider the simpler system of  classical sequences $\zeta\in\{-1,+1\}^{2^n}$ of $2^n$ bits, with Hamiltonian $\pi_{\mathrm{ME}}^{(n)}(\zeta)$.

\section{Conclusions}
\label{sec:conclusions}

In this paper we have studied  the properties  of the potential of multipartite entanglement  and of its minimizers, the MMES, for a system of $n$ qubits. In particular our focus has been on perfect MMES, that saturate the lower bound of the potential,
and by using a probabilistic approach, we have proven a theorem on the structure of their population probability vectors.  This allowed us to consider a particular simple class of solutions, those with uniform population. We have shown by explicit construction that (apart for the case $n=7$ which is still open, but probably is frustrated) there always exist uniform perfect MMES with real phases, a class of states  that can be mapped to the classical binary sequences of length $2^n$. In fact, we have shown that also for $n=4$, the lowest number at which frustration occurs and hinders the existence of perfect MMES, the (conjectured) minimum of the potential of multipartite entanglement is attained   by uniform states with real phases.
This represents a great advantage, because  in this situation one can investigate the structure of quantum multipartite entanglement by studying the simplest problem a classical Hamiltonian defined on  binary sequences.

\section*{Acknowledgements}
I would like to thank G. Florio, U. Marzolino, G. Parisi, and S. Pascazio for
many conversations and stimulating discussions on multipartite entanglement.
One of the reasons for having written this article is the enthusiasm of S. Graffi
for the subject; I would like to thank him for this.
This work is partly supported by the European Community through the Integrated Project EuroSQIP.

\end{document}